\documentclass[12pt]{article}

\usepackage{newpxtext} 
\usepackage{newpxmath} 

\usepackage{graphicx} 
\usepackage{setspace}
\usepackage[rgb]{xcolor}
\usepackage{verbatim}
\usepackage{tabularx}
\usepackage{subcaption}
\usepackage{amsgen,amsmath,amstext,amsbsy,amsopn,tikz, bm, bbm}

\usepackage{amsthm}
\usepackage{fancyhdr}
\usepackage[colorlinks=true, urlcolor=blue,  linkcolor=blue, citecolor=blue]{hyperref}
\usepackage[colorinlistoftodos]{todonotes}
\usepackage{rotating}
\usepackage[shortlabels]{enumitem}

\usepackage{natbib}

\usepackage[capitalize, noabbrev]{cleveref}
\crefname{equation}{}{}

\usepackage{geometry}
\newtheorem{theorem}{Theorem}
\newtheorem{proposition}{Proposition}

\newtheorem{lemma}{Lemma}

\newtheorem{corollary}{Corollary}

\newtheorem{definition}{Definition}

\definecolor{darkgreen}{rgb}{0.0,0.65,0.45}

\title{Filling Positions Without Transfers: \linebreak Screening on Outside Options\thanks{
We are especially grateful to Piotr Dworczak, Paul Milgrom, Michael Ostrovsky, Ilya Segal, and Andy Skrzypacz for invaluable conversations. 
We thank Itai Ashlagi, Daniele Condorelli, Matthew Gentzkow, Klajdi Hoxha, Matthew Jackson, Ravi Jagadeesan, Zi Yang Kang, Marzena Rostek, Filip Tokarski, Joanna Tyrowicz, Mitchell Watt, Frank Yang, and seminar audiences at GRAPE IMD and Stanford for helpful comments. 
We thank many employees and returned volunteers at the U.S. Peace Corps for inspiration and useful discussions.}}

\author{
Morteza Honarvar\thanks{Graduate School of Business, Stanford University. Email: \texttt{honar@stanford.edu}} 
\and Joanna Krysta\thanks{Department of Economics, Stanford University. Email: \texttt{jkrysta@stanford.edu}} 
\and Eric Tang\thanks{Graduate School of Business, Stanford University. Email: \texttt{ertang@stanford.edu}}
}

\date{February 10, 2026}

\begin{document}

\maketitle
\begin{abstract}
    A designer offers vertically-differentiated positions to agents in the absence of transfers. 
    Agents have private outside options and may reject their offers ex-post. 
    The designer has preferences over the quantity of agents who accept each position. 
    We show that under a general condition on the distribution of outside options, an optimal mechanism for the designer offers all agents an identical lottery, and we characterize this mechanism. 
    When our condition does not hold, the optimal mechanism may require screening agents by offering a menu of distinct lotteries. 
    Our results follow from a decomposition of agents' participation probabilities in any feasible mechanism. 
    
    \bigskip 
    
    \noindent\textbf{Keywords:} Mechanism design, assignment without transfers, outside options, ex-post participation, vertical differentiation, common lotteries. \\
    \noindent\textbf{JEL Codes:} D82, D86, D47 
\end{abstract}

\newpage

\section{Introduction}

Consider a designer who offers vertically differentiated positions to agents in the absence of transfers. 
Agents have private outside options and reject an offer if it is less attractive than their outside option. 
The designer has preferences over the quantity of agents who accept each position: 
for example, she may wish to fill as many positions as possible or to ensure a minimum quantity of agents in each position. 
The designer's inability to observe these outside options constrains her ability to implement her desired allocation. 

One example is the U.S. Peace Corps volunteer program, which assigns volunteers to development projects around the world, paying volunteers a small stipend. 
Some locations are more desirable and receive substantially more applications. 
Applicants also typically have outside options, which can make it challenging for the Peace Corps to fill all open positions.\footnote{The former director of the U.S. Peace Corps discussed the challenges of recruiting volunteers: 
``In the early days when Peace Corps began, it really was the only opportunity for people who wanted to work or live overseas. ... 
Our applicants have lots of choices now.'' (\cite{greenblatt2014response}, ``In Response To Dwindling Applications, Peace Corps Makes Big Changes''). More generally, many employers struggle to fill all open positions (\cite{canon2025firefighter}).}
The same forces can prevent other employers from filling positions: hospitals assign nurses to heterogeneous shifts; cities assign police officers to neighborhoods and assign teachers to schools. If the employer is a public institution or wages are standardized by union agreement, she may not be able to set distinct wages for different positions.\footnote{Government employees' wages are often determined by a fixed wage schedule, which limits the ability of a public employer to set distinct wages for different positions. In the private sector, employee unions may also negotiate a standardized wage across positions. As \cite{budish2011combinatorial} writes of the healthcare firm McKesson, ``McKesson’s nursing shift assignment software, eShift, has both a fixed-price version and an auction version, depending on whether the client hospital has discretion to use flexible wages (e.g., because of union restrictions).'' Even when the designer does not face such restrictions, our work characterizes how well the designer can meet her objectives without using additional transfers.} 
As a final example, consider a university department assigning students to different sections of its introductory course: 
the department may aim to maximize enrollment, but students prefer sections with high-quality instructors and vary in their outside options.

A central feature of our setting is that designers value utilization — the quantity of each position that is filled.
Prior work on the unit-assignment problem without transfers, beginning with \cite{hylland1979efficient}, typically focuses on Pareto efficiency or total welfare for agents, omitting the employer's own preferences.\footnote{
More recent work in this tradition includes \cite{bogomolnaia2001new}, \cite{miralles2012cardinal} and \cite{ortoleva2021cares}. \cite{krysta2014size} and \cite{bogomolnaia2015size} provide approximate bounds on assignment sizes, whereas our framework allows the designer to compute the exact optimum for general quantity objectives.} 
Indeed, in assignment problems without transfers, there is no revenue objective for the designer to maximize. 
Yet settings with outside options naturally suggest other important objectives for the designer: maximizing the quantity of employees hired or students enrolled.

To illustrate the designer's challenge, consider an employer with low- and high-quality positions.  
To maximize the number of positions filled, she might wish to assign low-quality positions to agents with low-quality outside options and high-quality positions to those with high-quality outside options. 
However, such a policy conflicts with agents' reporting incentives: 
agents with weak outside options misrepresent themselves to secure high-quality positions, leaving low-quality positions unfilled. 
The designer's problem is thus: 
given heterogeneity in agents' private outside options, how should she design her offers to maximize her objective? 

We formalize this problem in a large-market model where the designer allocates a mass of vertically differentiated positions to a mass of agents with private outside options. Vertically differentiated positions serve as a tractable benchmark, and may be a reasonable approximation in many settings of interest.\footnote{Applicants to the U.S. Peace Corps often rank Costa Rica, Thailand, and Morocco as the most desirable locations (\cite{damico2023competitive}). 
Students prefer course sections taught by better lecturers.
\cite{ba2021police} find that police officers prefer to be assigned to safer neighborhoods.}
We also model agents as free to quit: after the mechanism assigns positions to agents, each agent may still reject his assignment.\footnote{In the Peace Corps setting, an agent cannot be forced to accept an assignment, and volunteers frequently withdraw. This is analogous to the quitting rights of \cite{compte2009veto} and \cite{haberman2025auctions}.}
The designer's objective is a function of the mass of agents who accept each position.\footnote{
For example, a government agency may aim to maximize positions filled, subject to filling each type of position with at least 50\% of its capacity, or subject to the quantity of agents hired in each type of position being approximately equal.}
Among possible mechanisms, one particularly simple mechanism is to use a homogeneous lottery to randomly offer each agent a position. 
Agents will accept offer realizations whose quality exceeds that of their outside options. 
We call such a mechanism a \emph{common lottery}. 

The space of feasible mechanisms is considerably broader and more complex, however.
To screen agents with different outside options, the designer can offer a menu of different lotteries over positions.
Intuitively, a lottery that offers a low-quality position with high probability is appealing only to agents with low-quality outside options, who may prefer it over a lottery that offers a high-quality position with low probability.\footnote{In our course allocation example, students could have a choice between entering a lottery for the high-quality instructor's course, or entering a lottery for the low-quality instructor's course. 
In the U.S. Peace Corps, applicants may either apply to serve anywhere in the world or apply to serve in a particular country. 
Returned Peace Corps volunteers advise applicants, ``the more flexible you are in terms of placement, the better your chances are of getting selected. 
Don’t apply to Thailand, Morocco, or Costa Rica directly – there are so many applicants that choose to apply there. 
Be flexible!'' (\cite{damico2023competitive}).} 
Another possibility is the Competitive Equilibrium from Equal Incomes (CEEI) mechanism, introduced by \cite{hylland1979efficient}.\footnote{By an appropriate revelation principle (e.g., \cite{myerson1979incentive}), we allow for mechanisms replicating trading outcomes.}
This allows agents to trade probability shares among themselves: 
agents with different outside options will choose to buy different lotteries.

Determining the optimal mechanism in this space appears challenging. 
Indeed, in our setting, local incentive constraints (between adjacent types) are insufficient to guarantee the global incentive compatibility of a mechanism. 
This arises because agents make ex-post participation decisions that allow double deviations: 
they may misreport their type and then reject an undesirable offer. 
Consequently, classical mechanism design tools are inadequate for solving the designer's problem. 
Nevertheless, we characterize when the optimal mechanism takes a particularly simple form.

Our main result shows that under a general condition, an optimal mechanism for the designer is a common lottery. 
The sufficient condition is the convexity of $1/F$, where $F$ is the distribution function of agents' outside options.\footnote{
    This condition is satisfied if $F$ is log-concave, a property shared by many distributions. 
    In this sense, the convexity of $1/F$ is weaker than the regularity conditions typically assumed in mechanism design. 
}
Our proof uses an original constructive argument---which decomposes any feasible allocation into a common allocation and a component due to incentive constraints---to replace any feasible mechanism with a common lottery that preserves the quantity of agents filled in each position.
Consequently, our result holds for general designer objective functions.
Under strict convexity of $1/F$, when the designer's objective is strictly monotone in utilization, common lotteries are uniquely optimal. 
This implies that the designer should strictly prefer our optimal mechanism over alternatives such as CEEI.

When our condition holds, it is straightforward to determine the optimal mechanism. The problem reduces to a standard consumer choice problem: the designer chooses the mass of agents to allocate to each position, subject to a budget constraint in which $1/F(x_k)$ serves as the ``price'' of allocating an additional agent to position $x_k$.\footnote{
    $F(x_k)$ is the proportion of agents whose outside options are strictly worse than position $x_k$. 
} 
We provide an explicit characterization of this optimal mechanism. 
This mechanism can also be implemented through a ``capped random priority'' mechanism, a generalization of the random priority mechanism.\footnote{In the capped random priority mechanism, agents are randomly ordered, and each agent in turn selects his most preferred position among those remaining. By imposing caps on the availability of each position, the designer can replicate the allocation of the common lottery. We model random priority as in \cite{che2010asymptotic}, which also connects our implementation to the probablistic serial mechanism introduced by \cite{bogomolnaia2001new}.}

We further provide a partial converse. 
When our convexity condition does not hold, common lotteries may be strictly suboptimal. 
In this case, the optimal mechanism requires screening agents by offering a menu of distinct lotteries. 
Our proof uses a novel variational argument, in which we begin with the optimal common lottery and perturb the allocation to improve the objective by screening some agents with weak outside options.

The proof of our main result focuses on each agent’s total participation probability, that is, the probability that the agent accepts some offer. 
For any feasible mechanism, we decompose these participation probabilities into two components: a common component shared across agents, and a type-specific component. 
For each position, we construct the common component by taking the mean acceptance probability among agents who find that position acceptable.
Taken alone, this common component defines a common lottery that assigns the same quantity of agents to each position as the initial mechanism. 
Under our convexity condition, we show the type-specific component is positive, implying that the common component alone is a feasible mechanism. 
We further provide a duality interpretation of this decomposition.

The remainder of the paper is organized as follows. 
\cref{sec:related-lit} discusses related literature. 
\cref{sec:model} describes our model and provides examples of feasible mechanisms. 
\cref{sec:results} introduces our main result---a sufficient condition for the optimality of a common lottery---and a partial converse. 
\cref{sec:characteriz} describes features of any feasible mechanism.
\cref{sec:common-ordinal-preferences} extends our results to common ordinal preferences.
\cref{sec:future} concludes.

\subsection{Related Literature}\label{sec:related-lit}

Beginning with \cite{hylland1979efficient}, a wide literature considers the item-assignment problem without transfers, in which agents with preferences are matched to objects without preferences. 
This literature typically considers objectives of Pareto efficiency or total welfare, which differ from our designer's objective.
\cite{bogomolnaia2001new} study ordinal efficiency in the unit assignment problem, which does not encompass our designer's objectives. 
\cite{budish2013designing} generalize the CEEI mechanism of \cite{hylland1979efficient} to multi-unit assignment settings. 
\cite{miralles2012cardinal} and \cite{ortoleva2021cares} analyze  welfare-maximizing mechanisms without transfers for allocating two objects and vertically differentiated goods, respectively. 
\cite{ashlagi2016optimal} consider more general designer objectives, but impose Pareto efficiency constraints that we do not consider.

We depart from this literature by modeling the designer's preferences over which items (positions) are ultimately allocated, which is distinct from a welfare objective.
In particular, when agents have outside options, the designer's choice of mechanism may affect which positions are ultimately filled. In contrast with
\cite{bogomolnaia2015size} and \cite{krysta2014size}, who provide worst-case guarantees for assignments of maximal size, we (i) characterize the exact optimal mechanism for a class of assignment problems and (ii) consider cardinal utilities, enabling the designer to screen agents by their relative preference intensity.\footnote{In \cite{bogomolnaia2015size} and \cite{krysta2014size}, truthfulness is defined with respect to ordinal comparisons of utility. We instead consider cardinal utility functions; our introduction and examples (e.g., \cref{fig:alternative-examples} and \cref{fig:optmech-example-binary}) describe how a designer may use a menu of lotteries to screen agents on their outside option. In these menus, distinct lotteries yield distinct expected utilities for each agent.}\textsuperscript{,}\footnote{In addition, we characterize the optimal mechanism when the designer has quantity objectives distinct from simply maximizing the total mass of agents assigned, and our partial converse characterizes settings in which a probabilistic serial mechanism is strictly suboptimal. \cite{krysta2014size} allow welfare to depend on which agents are matched to which positions, a dependence that is absent in our formulation.} The vertical ordering of positions in our model is similar to \cite{cres2001scheduling}, who compare the performance of two fixed mechanisms; we obtain an optimal mechanism in the entire space of feasible mechanisms.

The two-sided matching literature, beginning with \cite{gale1962college}, considers the preferences and strategic behavior of each individual position, which is distinct from considering the objectives of an organization (employer) with preferences over the entire assignment profile. 
The stability constraints in the two-sided matching literature also differ from those in our settings of employee assignment.
This also connects us to the broader school choice literature: for example, \cite{akbarpour2022centralized} focus on the equity effects of different mechanisms in the presence of outside options. 
Literature such as \cite{cowgill2024matching} consider matching employees to teams within a firm using transfers.

Finally, another broad distinction with these works is that we model agents' ex-post participation decisions, which arise naturally in many settings. 
\cite{compte2007quitting} and \cite{haberman2025auctions} study ex-post participation in bargaining and auction settings, respectively.
In our assignment setting, employees may leave their job after being assigned to a given team, and students may drop their class after being assigned to a given instructor. 
This gives rise to the possibility of double deviations, which affects the designer's optimal mechanism.

\section{Model}\label{sec:model}
\subsection{Framework}
We consider a model with a continuum of agents and a finite type space, and a continuum of positions with a finite set of possible qualities.
We normalize the mass of available positions to $1$ and let $D$ denote the mass of agents. 
$\Theta$ denotes the space of types, which represent agents' outside options. 
The distribution of agents' types is given by $f$, a full-support probability mass function on $\Theta$, and $F$ denotes the associated cumulative distribution function. 
$X$ denotes the space of position qualities, and $g$ is their probability mass function; $g(x_k)$ represents the available capacity of positions with quality $x_k$. 
In our baseline model, we assume the type and the position spaces are evenly-spaced grids of $N$ distinct types, with $X = \Theta = \left\{0,\frac{1}{N-1}, \dots, \frac{N-2}{N-1},1\right\}$.\footnote{
    One interpretation of the assumption $X = \Theta$ is that agents’ outside options take values in a rich type space $\Theta$. 
    When this space is sufficiently fine, each position quality $x_k$ can be approximated by a point in $\Theta$. 
    Note that we allow $g(x_k)=0$ for some $x_k$, so that not every grid point in $X$ must correspond to a position with positive supply. 
}
Let $x_k \coloneq \frac{k}{N-1}$ denote the quality of the position with index $k$, and let $\theta_i \coloneq \frac{i}{N-1}$ denote the outside option with index $i$.

In our baseline model, we assume that agents have homogeneous cardinal preferences: 
an agent who accepts position $x$ obtains utility $x$.\footnote{
Our results extend if agents' utility functions are $w(x)$ for some increasing utility function $w$. 
We may simply normalize the problem by considering the cumulative distribution function of positions $y$ to be $F(w^{-1}(y))$, and taking agent utility to be $u(y;\theta) = \max\{w(y),w(\theta)\}$. 
\cref{sec:common-ordinal-preferences} describes this normalization and extends our results to common ordinal preferences.} 
Each agent has an ex-post participation decision. 
If assigned a position $x \geq \theta$ better than his outside option, the agent accepts. 
If assigned a position $x < \theta$ worse than his outside option, the agent instead consumes his outside option $\theta$. 
Agents' utility is therefore given by $u(x; \theta) = \max\{x,\theta\}$.
An agent's utility with type $\theta_i$ from any assignment lottery $c\in\Delta(X\cup\{\varnothing\})$ is given by:
\[\sum_{k=0}^{N-1}\max\{x_k;\theta_i\}c(x_k)+\theta_i\left(1-\sum_{k=0}^{N-1}c(x_k)\right)=\theta_i+\sum_{k=i}^{N-1}(x_k-\theta_i)c(x_k).\]

We allow the designer's objective to depend in a general way on the ultimate mass of agents who accept each position. 
For a given mechanism, let $s_k$ denote the mass of agents who accept position $x_k$ in the mechanism. 
The designer's objective function is $V:\mathbb{R}_+^N \to \mathbb{R}$, with $V(\bm{s})$ representing her value from filling positions with masses vector $\bm{s}$. 
We impose that $V$ be upper semicontinuous and place no further restrictions on $V$.

A particularly important case of the designer's objective is when the designer solely values maximizing the quantity of positions filled (maximizing utilization). 
We denote this objective function by $V_{\text{fill}}$, with $V_{\text{fill}}(\bm{s}) \coloneq \sum_{k=0}^{N-1} s_k$. 
The designer chooses a stochastic mechanism without transfers to maximize her objective function.

\subsection{Direct Mechanisms}

By the revelation principle, it is without loss of generality to restrict our attention to direct mechanisms. 
In the absence of transfers, a direct mechanism is a map $a: \Theta \to \Delta(X \cup \{\varnothing\})$, where $\varnothing$ denotes receiving no assignment from the mechanism (and hence consuming one's outside option). 
We let $a(x;\theta)$ denote the probability that an agent of type $\theta$ is offered position $x$. 
The direct mechanism's incentive compatibility (IC) constraints ensure that an agent does not find it profitable to misreport his type, then accept only allocated positions above his true outside option. 
The mechanism's individual rationality (IR) constraint guarantees that agents are not assigned a position worse than their outside option. 
The mass of agents who accept position $k$ under direct mechanism $a$ is given by $s_k(a) = D \sum_{i=0}^k a(x_k;\theta_i) f(\theta_i)$.\footnote{
Note that given the ex-post IR constraint, this is equivalent to defining $s_k(a)$ by $D \sum_{i=0}^{N-1} a(x_k;\theta_i) f(\theta_i)$ for each $k$. 
Indeed, defining the designer's objective in the way we have, the ex-post IR constraints will never constrain the designer in equilibrium. 
The designer would never strictly benefit from offering an agent of type $\theta$ a position of quality lower than $\theta$: 
it does not increase the designer's objective, and can only tighten the designer's other constraints. 
This means we may drop the Ex-Post IR constraints from the analysis if desired, though we include them here for exposition.
}

The designer's problem is therefore
$$
\max_{a: \Theta \to \Delta(X \cup \{\varnothing\})} V(\bm{s}(a)), 
$$
subject to constraints:
\begin{align}
    \sum_{k=i}^{N-1} (x_k - \theta_i)\, a(x_k;\theta_i) 
    & \geq \sum_{k=i}^{N-1} (x_k - \theta_i)\, a(x_k;\theta_j),
    \qquad \forall i,j 
    \tag{$\text{IC}_{i,j}$} \\[6pt]
    D \sum_{i=0}^{N-1} a(x_k;\theta_i) f(\theta_i) 
    & \leq g(x_k),
    \qquad \forall k 
    \tag{Position-Feasibility} \\[6pt]
    \sum_{k=0}^{N-1} a(x_k;\theta_i)\, 
    & \leq 1,
    \qquad \forall i
    \tag{Agent-Feasibility} \\[6pt]
    a(x_k;\theta_i) & = 0,
    \qquad \forall k<i
    \tag{Ex-Post IR}
\end{align}

We say a direct mechanism $a$ is \textit{feasible} if, in addition to being incentive compatible, it is ex-post individually rational and satisfies the position- and agent-feasibility constraints. 
We say a direct mechanism $a$ is \emph{optimal} for objective $V$ (and model primitives $F$ and $D$) if $a$ attains the maximum value of $V$ in the set of feasible mechanisms. 

\subsection{Common Lotteries}

One class of mechanisms to consider is what we call a \emph{common lottery}. 
Common lotteries are particularly simple for a mechanism designer to implement: 
they require offering only a single homogeneous lottery to all agents. 
Any common lottery has an indirect and a direct representation. 
We may describe a common-lottery indirect mechanism as a distribution $\tilde{a} \in \Delta(X \cup \{ \varnothing \})$. 
In the common-lottery indirect mechanism, each agent receives an offer $x$ with probability $\tilde{a}(x)$, then simply accepts the realized outcome if it is above their outside option. 
We can write any indirect common lottery $\tilde{a}(\cdot)$ as an outcome-equivalent direct mechanism $a: 
\Theta \to \Delta(X \cup \{\varnothing\})$, where we truncate the lottery for each type at his outside option.
\begin{definition}
    Direct mechanism $a$ is a \emph{common lottery} if there exists $\tilde{a} \in \Delta(X \cup \{ \varnothing \})$ such that
    $$a(x_k;\theta_i) = 
    \begin{cases}
        \tilde{a}(x_k), & \theta_i \leq x_k \\
        0 , & \theta_i > x_k.
    \end{cases}$$
\end{definition}

Many feasible mechanisms are not common lotteries: 
the designer may screen different types by offering a menu of different lotteries. 

If the designer knows that her optimum can be attained with a common lottery, it is simple to compute an optimal mechanism: 
she simply computes the single lottery that maximizes her objective, subject to the agents accepting each offered position. 
Otherwise, the optimal mechanism may take a complex form that involves screening different agents.

\subsection{Discussion of Model}

In allowing the designer's objective function to depend flexibly on the mass allocated to each position, we allow the designer to entertain a wide range of objective functions. In the example of a university department assigning students to sections, the department may wish to maximize enrollment, or bring enrollment in each section as close to some target percentage as possible, or maximize enrollment subject to maintaining approximately equal enrollment in every section. In the employment example, the employer may place particular weight on filling certain types of positions. 

The agent's choice to consume his outside option instead of his assignment can be viewed as an ex-post participation decision. 
A prospective employee can reject an offer or move to a new job if his placement is insufficiently desirable. 
A student can choose to drop the course section he has been assigned. Our IC constraints incorporate the possibility of double deviations, in which the agent misreports his type, but continues to accept only the realized offers that are above his outside option.

We further assume that agents share the same ordinal preferences and vary only in their outside options: students may all prefer to be placed into classes with better instructors, and volunteers may generally prefer particular locations. Finally, our large-market model with a continuum of agents can be a reasonable approximation when an organization assigns many agents to positions.\footnote{The asymptotic results in \cite{che2010asymptotic} show that the outcomes of large finite markets approach the outcome of a similar continuum assignment model.}

\subsection{Mechanism Examples}\label{sec:mechanism-examples}

We now provide some examples of potential mechanisms.
In these examples, suppose that the designer wishes to maximize the total number of positions filled, so her objective function is $V_{\text{fill}}$. 
Suppose $D=1$, $N = 4$, and types and positions are distributed uniformly. 
In the following diagrams, we take the x-axis to be types $\theta$ and the y-axis to be the positions $x$. 
Each cell $(\theta_i,x_k)$ represents the probability $a(x_k;\theta_i)$. 
As described earlier, an agent of type $\theta_i$ rejects any offered position $x_k < \theta_i$, and the designer obtains no benefit from offering these positions, so we omit these cells. 

If agents' outside options were not private information, the designer's first-best allocation could exactly fill all positions by assigning all agents to exactly their outside option, as in \cref{fig:first-best-allocation-example}. 
Agents are each assigned to a position yielding utility equal to their outside option. 
This allocation clearly violates incentive constraints: an agent with a low outside option has an incentive to claim he has a high outside option, in order to obtain a higher-quality assignment.
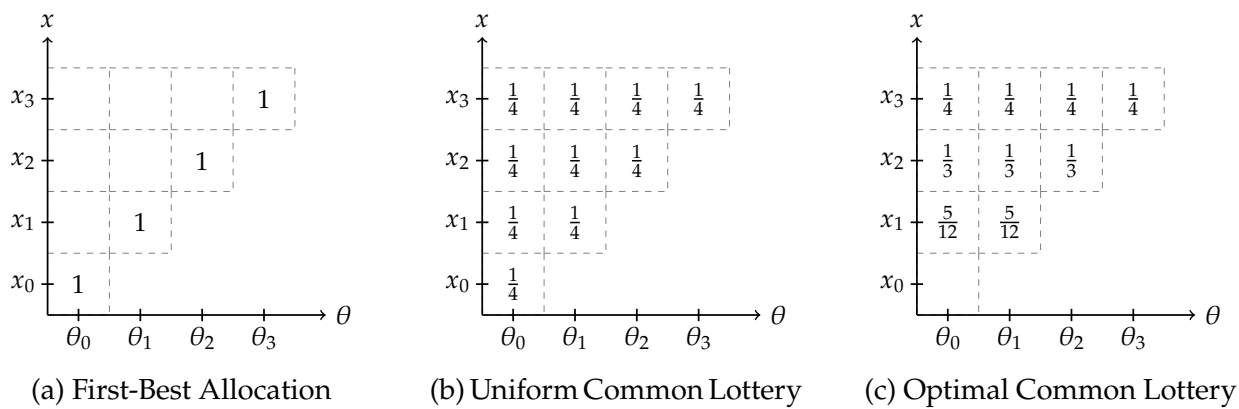
\begin{figure}[htbp]
    \centering
    \begin{subfigure}{0.3\textwidth}
        \centering
        \resizebox{\textwidth}{!}{
        \begin{tikzpicture}
            
            \foreach \x in {0,...,4} {
                \foreach \y in {0,...,4} {
                    \ifnum\y>\x
                        
                        \draw[very thin, dashed, gray] (\x,\y-1) -- (\x,\y);
                        
                        \draw[very thin, dashed, gray] (\x,\y) -- (\x+1,\y);
                    \fi
                    \ifnum\y=\x
                        \ifnum\x<4
                        
                        \draw[very thin, dashed, gray] (\x,\y) -- (\x+1,\y);
                        \draw[very thin, dashed, gray] (\x+1,\y+1) -- (\x+1,\y);
                        \fi
                    \fi
                }
            }

            \draw[->, thick] (0,0) -- (4.5,0) node[right] {$\theta$};
            \draw[->, thick] (0,0) -- (0,4.5) node[above] {$x$};

            \foreach \t in {0,...,3} {
                \draw[thick] (\t+0.5,-0.1) -- (\t+0.5,0.1) node[below=4pt] {$\theta_{\t}$};
            }

            \foreach \t in {0,...,3} {
                \draw[thick] (-0.1,\t+0.5) -- (0.1,\t+0.5) node[left=4pt] {$x_{\t}$};
            }

            \foreach \i in {0,...,3} {
                \node at (\i+0.5,\i+0.5) {1};
            }
        \end{tikzpicture}
        }
        \caption{First-Best Allocation}
        \label{fig:first-best-allocation-example}
    \end{subfigure}
    \hfill
    \begin{subfigure}{0.3\textwidth}
        \centering
        \resizebox{\textwidth}{!}{
        \begin{tikzpicture}
            
            \foreach \x in {0,...,4} {
                \foreach \y in {0,...,4} {
                    \ifnum\y>\x
                        
                        \draw[very thin, dashed, gray] (\x,\y-1) -- (\x,\y);
                        
                        \draw[very thin, dashed, gray] (\x,\y) -- (\x+1,\y);
                    \fi
                    \ifnum\y=\x
                        \ifnum\x<4
                            
                            \draw[very thin, dashed, gray] (\x,\y) -- (\x+1,\y);
                            \draw[very thin, dashed, gray] (\x+1,\y+1) -- (\x+1,\y);
                        \fi
                    \fi
                }
            }

            \draw[->, thick] (0,0) -- (4.5,0) node[right] {$\theta$};
            \draw[->, thick] (0,0) -- (0,4.5) node[above] {$x$};

            \foreach \t in {0,...,3} {
                \draw[thick] (\t+0.5,-0.1) -- (\t+0.5,0.1) node[below=4pt] {$\theta_{\t}$};
            }

            \foreach \t in {0,...,3} {
                \draw[thick] (-0.1,\t+0.5) -- (0.1,\t+0.5) node[left=4pt] {$x_{\t}$};
            }

            \foreach \x in {0,...,3} {
                \foreach \y in {\x,...,3} {
                    \node at (\x+0.5,\y+0.5) {$\tfrac{1}{4}$};
                }
            }
        \end{tikzpicture}
        }
        \caption{Uniform Common Lottery}
        \label{fig:uniform-lottery-allocation}
    \end{subfigure}
    \hfill
    \begin{subfigure}{0.3\textwidth}
        \centering
        \resizebox{\textwidth}{!}{
        \begin{tikzpicture}
            
            \foreach \x in {0,...,4} {
                \foreach \y in {0,...,4} {
                    \ifnum\y>\x
                        
                        \draw[very thin, dashed, gray] (\x,\y-1) -- (\x,\y);
                        
                        \draw[very thin, dashed, gray] (\x,\y) -- (\x+1,\y);
                    \fi
                    \ifnum\y=\x
                        \ifnum\x<4
                            
                            \draw[very thin, dashed, gray] (\x,\y) -- (\x+1,\y);
                            \draw[very thin, dashed, gray] (\x+1,\y+1) -- (\x+1,\y);
                        \fi
                    \fi
                }
            }

            \draw[->, thick] (0,0) -- (4.5,0) node[right] {$\theta$};
            \draw[->, thick] (0,0) -- (0,4.5) node[above] {$x$};

            \foreach \t in {0,...,3} {
                \draw[thick] (\t+0.5,-0.1) -- (\t+0.5,0.1) node[below=4pt] {$\theta_{\t}$};
            }

            \foreach \t in {0,...,3} {
                \draw[thick] (-0.1,\t+0.5) -- (0.1,\t+0.5) node[left=4pt] {$x_{\t}$};
            }

            \foreach \x in {0,...,3} {
                \node at (\x+0.5,3.5) {$\tfrac{1}{4}$};
            }

            \foreach \x in {0,...,2} {
                \node at (\x+0.5,2.5) {$\tfrac{1}{3}$};
            }

            \foreach \x in {0,...,1} {
                \node at (\x+0.5,1.5) {$\tfrac{5}{12}$};
            }
        \end{tikzpicture}
        }
        \caption{Optimal Common Lottery}
        \label{fig:optimal-common-lottery-allocation}
    \end{subfigure}
    \caption{First-best and common lottery examples}
    \label{fig:mechanism-examples}
\end{figure}

To satisfy the incentive constraints, the designer could allocate positions with a uniform lottery, as in \cref{fig:uniform-lottery-allocation}, in which she offers each position with equal probability to each type. 
Any agent offered a position worse than his outside option will reject the offer. 
This mechanism allocates a total mass of $5/8$. 

With a little more thought, the designer may realize that to maximize the number of positions filled, she should be offering agents the high-quality positions less often than she offers the low-quality positions, as these positions are the ones agents are most likely to accept. 
She can implement this using the common lottery given in \cref{fig:optimal-common-lottery-allocation}, in which she fully fills the highest positions and leaves the lower positions only partly filled. 

Again, agents reject any offered position below their outside option. 
This mechanism allocates a total mass of $17/24 > 5/8$. 
Within the class of common lottery mechanisms, this is the designer's optimal mechanism.

However, the designer can implement mechanisms that are more complex than common lotteries.
Concretely, she can screen agents by their outside options, by offering agents a menu of distinct lotteries and letting them choose their desired lotteries. 
The designer might hope to fill even more positions by doing so.

For example, in \cref{fig:screening-with-menus}, the mechanism offers agents two distinct lotteries. 
The first lottery offers agents a higher probability of being allocated some position, but only allocates lower-quality positions. 
The second lottery offers agents a lower probability of being allocated some position, but allocates them higher-quality positions. 
An agent with a low outside option prefers the ``safer'' first lottery. An agent with a high outside option places relatively low value (or zero value) on the lower-quality positions, and prefers the ``riskier'' second lottery. This mechanism therefore screens agents into different lotteries by their outside option.
The designer may hope that this allows her to fill more low-quality positions, thereby improving her objective. 
Indeed, she is now able to completely fill position $x_1$. 
However, her overall objective value is still $5/8$. 
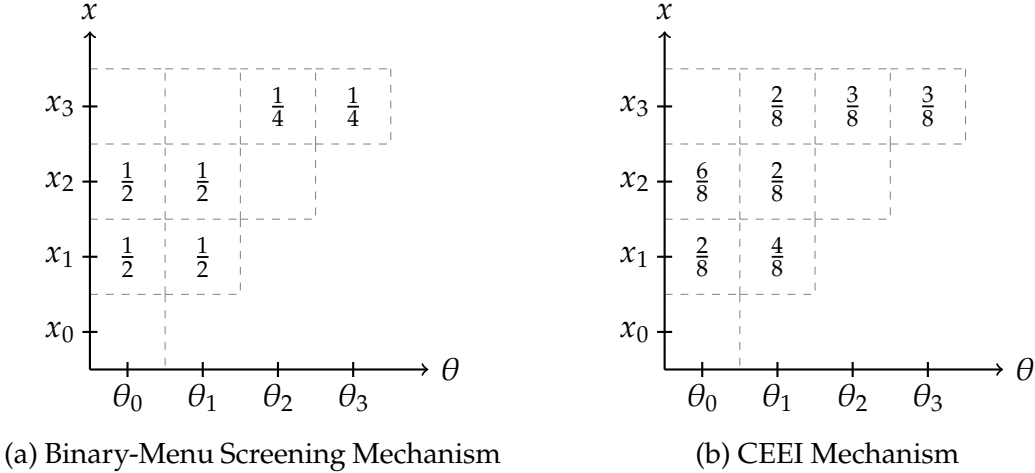
\begin{figure}[htbp]
    \centering
    
    \begin{subfigure}{0.4\textwidth}
        \centering
        \begin{tikzpicture}
        
        \foreach \x in {0,...,4} {
            \foreach \y in {0,...,4} {
            \ifnum\y>\x
                
                \draw[very thin, dashed, gray] (\x,\y-1) -- (\x,\y);
                
                \draw[very thin, dashed, gray] (\x,\y) -- (\x+1,\y);
            \fi
            \ifnum\y=\x
                \ifnum\x<4
                    
                    \draw[very thin, dashed, gray] (\x,\y) -- (\x+1,\y);
                    \draw[very thin, dashed, gray] (\x+1,\y+1) -- (\x+1,\y);
                \fi
            \fi
            }
        }
        
        \draw[->, thick] (0,0) -- (4.5,0) node[right] {$\theta$};
        \draw[->, thick] (0,0) -- (0,4.5) node[above] {$x$};
        
        \foreach \t in {0,...,3} {
            \draw[thick] (\t+0.5,-0.1) -- (\t+0.5,0.1) node[below=4pt] {$\theta_{\t}$};
        }
        
        \foreach \t in {0,...,3} {
            \draw[thick] (-0.1,\t+0.5) -- (0.1,\t+0.5) node[left=4pt] {$x_{\t}$};
        }
        
            \foreach \x in {2,...,3} {
                \node at (\x+0.5,3.5) {$\tfrac{1}{4}$};
            }

            \foreach \x in {0,...,1} {
                \node at (\x+0.5,2.5) {$\tfrac{1}{2}$};
            }

            \foreach \x in {0,...,1} {
                \node at (\x+0.5,1.5) {$\tfrac{1}{2}$};
            }
        \end{tikzpicture}
        \caption{Binary-Menu Screening Mechanism}
        \label{fig:screening-with-menus}
    \end{subfigure}
    \hspace{0.05\textwidth}
    \begin{subfigure}{0.4\textwidth}
        \centering
        \begin{tikzpicture}
            
            \foreach \x in {0,...,4} {
                \foreach \y in {0,...,4} {
                \ifnum\y>\x
                    
                    \draw[very thin, dashed, gray] (\x,\y-1) -- (\x,\y);
                    
                    \draw[very thin, dashed, gray] (\x,\y) -- (\x+1,\y);
                \fi
                \ifnum\y=\x
                    \ifnum\x<4
                        
                        \draw[very thin, dashed, gray] (\x,\y) -- (\x+1,\y);
                        \draw[very thin, dashed, gray] (\x+1,\y+1) -- (\x+1,\y);
                    \fi
                \fi
                }
            }

            \draw[->, thick] (0,0) -- (4.5,0) node[right] {$\theta$};
            \draw[->, thick] (0,0) -- (0,4.5) node[above] {$x$};

            \foreach \t in {0,...,3} {
                \draw[thick] (\t+0.5,-0.1) -- (\t+0.5,0.1) node[below=4pt] {$\theta_{\t}$};
            }

            \foreach \t in {0,...,3} {
                \draw[thick] (-0.1,\t+0.5) -- (0.1,\t+0.5) node[left=4pt] {$x_{\t}$};
            }
            
            \node at (1+0.5,3.5) {$\tfrac{2}{8}$};
            \node at (2+0.5,3.5) {$\tfrac{3}{8}$};
            \node at (3+0.5,3.5) {$\tfrac{3}{8}$};
            
            \node at (0+0.5,2.5) {$\tfrac{6}{8}$};
            \node at (1+0.5,2.5) {$\tfrac{2}{8}$};
            
            \node at (0+0.5,1.5) {$\tfrac{2}{8}$};
            \node at (1+0.5,1.5) {$\tfrac{4}{8}$};
        \end{tikzpicture}
        \caption{CEEI Mechanism} 
        \label{fig:hylland-zeckhauser-allocation}
    \end{subfigure}
    
    \caption{Screening mechanism examples}
    \label{fig:alternative-examples}
\end{figure}
An alternative method to screen agents is the competitive equilibrium from equal incomes (CEEI) mechanism, as in \cite{hylland1979efficient}, in which agents can freely trade probability shares in each position.\footnote{The common lottery does not correspond to a CEEI allocation, as the second-highest type would spend his entire budget on shares of the highest position, hence must have a higher probability of obtaining this position than other agents have.}
In this example, one equilibrium of CEEI is given in \cref{fig:hylland-zeckhauser-allocation}, and obtains objective $11/16$.\footnote{This equilibrium is supported by prices $p_0=p_1=0, p_2=1,p_3=2$, with each agent having budget of $b=3/4$. 
As in CEEI, each agent purchases a bundle that maximizes his utility, subject to his total probability of allocation being $1$. In Hylland and Zeckhauser's specification of CEEI, each agent is required to purchase the \emph{cheapest} feasible bundle that maximizes his utility, which guarantees Pareto Efficiency for agents; we drop this requirement, which allows the designer to fill more positions. However, the designer still strictly prefers the optimal common lottery to this CEEI allocation.} In both cases, the designer attains an objective value strictly less than that of the optimal common lottery in \cref{fig:optimal-common-lottery-allocation}.

In this setting, these examples suggest that the designer cannot do better than a common lottery, which is also a particularly simple format of mechanism to implement. Yet there is a wide space of feasible mechanisms available to the designer.
When is it the case that the designer cannot do better than a common lottery?

\section{Results}\label{sec:results}

We now introduce and discuss our main result: a sufficient condition for the optimality of a common lottery. When a common lottery is optimal, the designer need not screen agents by their outside options. This makes it simple to characterize the designer's optimal mechanism. We also provide a partial converse to our result, showing that when our condition is violated, the designer may benefit from offering agents a menu of lotteries.

The key technical difficulty in our analysis is that local IC constraints are insufficient to characterize the set of feasible mechanisms. 
To see this, consider the mechanism in \cref{fig:local-ICs-insufficient}. 
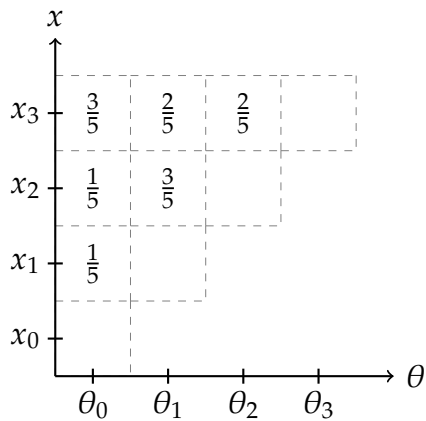
\begin{figure}[h]
\centering
\begin{tikzpicture}
  
  \foreach \x in {0,...,4} {
    \foreach \y in {0,...,4} {
      \ifnum\y>\x
        
        \draw[very thin, dashed, gray] (\x,\y-1) -- (\x,\y);
        
        \draw[very thin, dashed, gray] (\x,\y) -- (\x+1,\y);
      \fi
      \ifnum\y=\x
        \ifnum\x<4
            
            \draw[very thin, dashed, gray] (\x,\y) -- (\x+1,\y);
            \draw[very thin, dashed, gray] (\x+1,\y+1) -- (\x+1,\y);
        \fi
      \fi
    }
  }

  \draw[->, thick] (0,0) -- (4.5,0) node[right] {$\theta$};
  \draw[->, thick] (0,0) -- (0,4.5) node[above] {$x$};

  \foreach \t in {0,...,3} {
    \draw[thick] (\t+0.5,-0.1) -- (\t+0.5,0.1) node[below=4pt] {$\theta_{\t}$};
  }

  \foreach \t in {0,...,3} {
    \draw[thick] (-0.1,\t+0.5) -- (0.1,\t+0.5) node[left=4pt] {$x_{\t}$};
  }

    \node at (3+0.5,3.5) {};

    \node at (2.5,2+0.5) {};
    \node at (2.5,3+0.5) {$\frac{2}{5}$};

    \node at (1.5,1+0.5) {};
    \node at (1.5,2+0.5) {$\frac{3}{5}$};
    \node at (1.5,3+0.5) {$\frac{2}{5}$};

    \node at (0.5,0+0.5) {};
    \node at (0.5,1+0.5) {$\frac{1}{5}$};
    \node at (0.5,2+0.5) {$\frac{1}{5}$};
    \node at (0.5,3+0.5) {$\frac{3}{5}$};
\end{tikzpicture}
\caption{Insufficiency of local IC constraints for feasibility.} 
\label{fig:local-ICs-insufficient}
\end{figure}
All local IC constraints are respected under this mechanism, but the global downward IC constraint $\text{IC}_{20}$ is not satisfied. 

We will show that, indeed, at the optimal mechanism when our condition is satisfied, all global downward IC constraints bind, as well as the local IC constraints. In order to first present our main result, we defer further discussion of these binding constraints and the insufficiency of local IC constraints to \cref{sec:characteriz}. We establish our contribution via two results. Despite the technical challenge of binding global downward ICs, we prove our main result by an original direct constructive argument which transforms a feasible direct mechanism into a common lottery that preserves the same objective. We prove our partial converse with a variational argument that perturbs any common lottery to improve the designer's objective, if our convexity condition does not hold.

\subsection{Optimality of Common Lotteries}
Suppose the designer wishes to allocate a mass $s_k$ of agents to position $x_k$ in a common-lottery mechanism. Then the designer must offer position $x_k$ to a mass of $\frac{s_k}{F(x_k)}$ agents, because an agent accepts the offered position exactly when his outside option is below $x_k$. Indeed, the expression $1/F$ plays an important role in determining the optimal mechanism: our sufficient condition for the optimality of common lotteries is the convexity of $1/F$, appropriately defined on the discrete domain $\Theta$. 

A function $\varphi:\Theta \to \mathbb{R}$ is convex on $\Theta$ if for all $\alpha \in (0,1)$ and all $\theta,\theta'\in \Theta$ such that $\alpha\theta+(1-\alpha)\theta' \in \Theta$, we have $\varphi(\alpha\theta+(1-\alpha)\theta') \leq \alpha \varphi(\theta) + (1-\alpha) \varphi(\theta')$. Throughout, we use convexity to refer to convexity on $\Theta$. In a continuous type space, for $1/F$ to be convex, the rate of increase in $\ln f$ must be at most twice the rate of increase in $\ln F$, where $f$ denotes the probability density function associated with $F$.\footnote{
For convexity, we require $\frac{d^2}{d\theta^2}\frac{1}{F(\theta)} = \frac{2f^2(\theta) - F(\theta)f'(\theta)}{F^3(\theta)} \geq 0$, which is equivalent to 
$$2\frac{f(\theta)}{F(\theta)} \geq \frac{f'(\theta)}{f(\theta)} \quad \Leftrightarrow \quad 2(\ln F(\theta))' \geq (\ln f(\theta))'.$$
} 
Note that a non-increasing $f$ is sufficient for this condition to hold. 
More interestingly, $f$ being log-concave is also sufficient.\footnote{
Log-concavity of $f$ implies log-concavity of $F$. 
$F$ being log-concave means $\frac{d^2}{d\theta^2}\log F(\theta) = \frac{f'(\theta)F(\theta) - f^2(\theta)}{F^2(\theta)} \leq 0$, or, equivalent, $ (\ln F(\theta))' \geq (\ln f(\theta))'$. 
}
These in turn imply that many common distributions, such as the uniform distribution, normal distribution, logistic distribution, Pareto distribution, and exponential distribution, satisfy our condition.\footnote{
To be precise, because we work in a finite-type model, the discrete approximation of these distributions satisfy our condition.
Note also that in our model the support of the distribution is $[0, 1]$; nevertheless, truncating a function preserves its log-concavity.} 
\cref{thm:implementable-by-common-lottery} shows that in these common settings, a designer can use a common lottery to maximize her objective.

\begin{theorem}\label{thm:implementable-by-common-lottery}
    Suppose $1/F$ is convex. For any feasible direct mechanism $a$, there exists a feasible common lottery $\tilde{a}(\cdot)$ with $\bm{s}(a)=\bm{s}(\tilde{a})$. In particular, for any designer objective function $V$ and agent mass $D$, there exists an optimal mechanism that is a common lottery.
\end{theorem}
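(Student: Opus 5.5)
The plan is to build the common lottery directly from $a$ and then prove it is feasible. Write $f_i=f(\theta_i)$, $F_k=F(x_k)=\sum_{i\le k}f_i$ and $a_{kj}=a(x_k;\theta_j)$. Since type $\theta_i$ accepts $x_k$ exactly when $i\le k$, set
\[
\tilde a(x_k)\;=\;\frac{s_k(a)}{D\,F_k}\;=\;\frac{1}{F_k}\sum_{i\le k} f_i\,a_{ki}.
\]
That is, $\tilde a(x_k)$ is the mean offer probability of $x_k$ among the types who find it acceptable. By construction $s_k(\tilde a)=D F_k\tilde a(x_k)=s_k(a)$. Several constraints then hold automatically:
\begin{itemize}
\item Position-feasibility holds for $\tilde a$ because it holds for $a$ and the $s_k$ coincide.
\item Ex-post IR holds by the truncation in the definition of a common lottery.
\item IC holds because every report yields the same lottery, and truncation at one's own outside option is exactly the optimal acceptance rule.
\end{itemize}
The only thing left to prove is agent-feasibility, $\sum_k \tilde a(x_k)\le 1$. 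The second claim of \cref{thm:implementable-by-common-lottery} then follows immediately, since $V$ depends on $a$ only through $\bm s(a)$.

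To prove $\sum_k\tilde a(x_k)\le 1$, I would express offer probabilities through utilities. Let $h=1/(N-1)$ and let $U_i(j)=\sum_{k\ge i}(x_k-\theta_i)a_{kj}$ be the surplus type $i$ gets from reporting $j$. The kink function $\theta\mapsto (x_k-\theta)^+$ has second difference equal to $h$ exactly at $\theta=x_k$ and zero elsewhere on the grid. Hence
\[
a_{kj}=(N-1)\bigl(U_{k-1}(j)-2U_k(j)+U_{k+1}(j)\bigr),
\]
with the conventions that $U_{N}(j)=0$ and that $U_{-1}(j)$ is defined by linear extension, so that the boundary term produces the total participation probability $P_j=\sum_k a_{kj}$.

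Substituting this identity gives $\sum_k\tilde a(x_k)=\sum_j f_j\sum_{k\ge j}(N-1)\,\Delta^2U_k(j)/F_k$. I would then perform a discrete summation by parts in $k$ for each fixed $j$, moving the second difference from $U$ onto the weights $1/F_k$. The result has three kinds of terms:
\begin{itemize}
\item interior terms $\Delta^2(1/F)_k\,U_k(j)$, whose coefficients are $\ge 0$ exactly by convexity of $1/F$ on $\Theta$;
\item boundary terms at $k=j$ and $k=j-1$, involving $U_j(j)$, $U_{j-1}(j)$ and first differences of $1/F$;
\item a term proportional to $P_j/F_j$.
\end{itemize}
In each interior term I would use downward IC, $U_k(j)\le U_k(k)$ for $j<k$, to replace $U_k(j)$ by type $k$'s own utility. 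I would then regroup over $j$ using $\sum_{j\le k}f_j=F_k$, so that the $F_k$ factors cancel. The goal is an inequality of the form $\sum_k\tilde a(x_k)\le\sum_i w_iP_i$ with $w_i\ge0$ and $\sum_i w_i=1$, which is at most $1$ by agent-feasibility. Equivalently, I am exhibiting nonnegative multipliers on the global downward $\text{IC}_{k,j}$ constraints in a dual certificate. This is consistent with the paper's remark that all global downward ICs bind at the optimum.

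The main obstacle is the boundary bookkeeping. The inner sum starts at $k=j$, so the summation by parts leaves terms at $k=j$ and $k=j-1$, and these must combine after regrouping over $j$. In particular, the cross term $U_{j-1}(j)$ (type $j-1$ mimicking $j$) has to be handled by the IC of type $j-1$, namely $U_{j-1}(j)\le U_{j-1}(j-1)$. After that substitution, the accumulated coefficients on each own-utility $U_k(k)$ should telescope to something proportional to $P_k$.

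I would first verify the bookkeeping on $N=3$ with general $f$ to confirm that the resulting weights $w_i$ are nonnegative precisely under $\Delta^2(1/F)\ge0$. If the pure second-difference route gets messy, I would instead argue by induction on $N$: remove the top position and compare the truncated mechanism with a common lottery on the remaining types.
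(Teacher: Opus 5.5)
Your proposal is correct and follows the paper's proof: the paper builds the same averaged common lottery, reduces everything to agent-feasibility of $\theta_0$, and certifies that with exactly the multipliers your summation by parts produces. These are $f(\theta_j)\bigl(\tfrac{1}{F(\theta_{i-1})}-\tfrac{2}{F(\theta_i)}+\tfrac{1}{F(\theta_{i+1})}\bigr)$ on the downward $\text{IC}_{i,j}$ ($j<i$) and $f(\theta_{i+1})/F(\theta_{i+1})$ on the local upward $\text{IC}_{i,i+1}$; the paper checks them by direct coefficient matching rather than deriving them. One correction to your expectation: the coefficient on every own utility $U_m(m)$ cancels exactly, because it collapses to combinations of $F\cdot\tfrac{1}{F}=1$, so no $P_m$ term survives except $P_0$ from the $j=0$ boundary. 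Thus your weights are $w=e_0$, and you obtain the paper's inequality $\sum_k\tilde a(x_k)\le P(\theta_0)\le 1$.
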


This result states that regardless of the designer's objective, she can do no better than to offer every agent the same common lottery. In this setting, screening agents by offering a menu of lotteries does not benefit the designer. We also need not impose a monotonicity assumption on $V$.\footnote{The upper semi-continuity of $V$ guarantees existence of an optimal mechanism. Even in the absence of this condition, screening offers no additional benefit for the designer.}

In the proof, we begin with an arbitrary feasible mechanism and construct a common lottery that preserves the designer’s objective and all feasibility constraints. This implies that the designer's optimum can be attained by a common lottery. To perform this transformation, for each position $x_k$, we “equalize’’ the lottery by assigning every agent the same probability of receiving $x_k$, while preserving the initial total mass of agents allocated to that position. This transformation automatically preserves position feasibility (because the same mass of agents is assigned to each position as before) and, because the outcome is a common lottery, it satisfies incentive compatibility.

The remaining issue is agent feasibility, that is, that no agent is assigned with probability exceeding one. To establish this, we identify the key tradeoff in the environment: using type-dependent lotteries changes the need to allocate to the lowest type. When $1/F$ is convex, distorting allocations across types is costly in terms of incentive constraints. This tradeoff can be expressed through an allocation decomposition.\footnote{Similar expressions hold for other types with altered coefficients.} The probability of allocating $\theta_0$ to a position in any feasible allocation can be written as the sum of (i) a common component obtained via a degenerate mean-preserving contraction of the original allocation and (ii) a type-dependent component that is a linear combination of the IC inequalities:\footnote{ Here $(\text{IC}_{i,j})$ refers to the expression obtained by taking all terms in the inequality $\text{IC}_{i,j}$ to the left-hand side, and multiplying by $N-1$. By our constraints, this expression must be greater than zero.}

\[P(\theta_0)\coloneq\sum_{k=0}^{N-1}a(x_k;\theta_0)=\overbrace{\sum_{k=0}^{N-1}\sum_{i=0}^{N-1}\frac{a(x_k;\theta_i)f(\theta_i)}{F(\theta_k)}}^{\text{common allocation}}+\overbrace{\sum_{i=0}^{N-2}\lambda_{i,i+1}(\text{IC}_{i,i+1})+\sum_{i=1}^{N-1}\sum_{j=0}^{i-1}\lambda_{i,j}(\text{IC}_{i,j})}^{\text{value of information}}\]
The latter term captures precisely the incentive cost of procuring private information through differentiated lotteries.

Monotonicity of allocation in type (see \Cref{prop:MON}) implies that the lowest type receives the highest allocation probability and therefore constitutes the bottleneck agent: relaxing type dependence primarily operates by reducing this type’s allocation odds. The decomposition formalizes this intuition. It is a structural property of the IC constraints in our model and holds throughout the feasible set. Taking an appropriate weighted sum of the IC inequalities—where the multipliers $\lambda_{i,j}$ are positive exactly\footnote{The multipliers $\lambda_{i,i+1}$ on local upward incentive constraints are always weakly positive (while non-local upward incentive constraints are $0$). The non-local downward incentive constraints $\lambda_{i,j}=f(\theta_j)\left((\frac{1}{F(\theta_{i+1})}-\frac{1}{F(\theta_i)})-(\frac{1}{F(\theta_i)}-\frac{1}{F(\theta_{i-1})})\right),$ hence convexity of $1/F$ implies that these multipliers are also all positive.} when $1/F$ is convex—implies the agent-feasibility constraint for the common lottery. When $1/F$ is strictly convex, the same argument yields uniqueness of the optimal common lottery (see \cref{common-lotto-uniquely-opt}).

Conceptually, this argument reduces the question to a comparison between improved allocation success and the shadow cost of tightening IC constraints. The coefficients in the decomposition admit a formal interpretation as shadow prices on the IC constraints, though this interpretation is not needed for the proof itself.\footnote{To obtain the partial converse when convexity of $1/F$ does not hold at some position $x_k$, we note that the information decomposition provides a recipe for reverse-engineering a utilization-preserving perturbation that will strictly improve upon utilization of an optimal common lottery. The novel variational perturbation we find moves mass in proportion to their respective shadow prices.}

In the proof described above, the origins of the multipliers $\lambda_{\text{IC}_{i,j}}$ and the $1/F$ convexity condition may not appear obvious. To illustrate how these multipliers arise, we also provide a proof of \cref{thm:implementable-by-common-lottery} using duality, in \cref{proof:alternative-implementable-by-common-lotto}. This proof considers the following related problem. Suppose a designer may choose both $a$ and $D$, but is given a vector of target position masses $\bm{s} \in \mathbb{R}^N$ and is constrained to allocate $s_k$ agents to each position $x_k$, subject to all feasibility constraints. The designer's objective is to minimize the mass of applicants $D$. If for any $\bm{s}$, the designer can attain her optimum with a common lottery, then in the original problem, the designer can always attain her optimum with a common lottery. This is because in the original problem, for any direct mechanism $a$ that is not a common lottery and allocates position masses $\bm{s}$, the designer can transform it into a feasible common lottery $\tilde{a}$ that allocates the same masses $\bm{s}$ to each position, and hence attains the same objective $V(\bm{s})$. The shadow prices $\{\lambda_{\text{IC}_{i,j}}\}$ in this duality proof become the multipliers used in our direct constructive proof above. The construction of these shadow prices also motivates the perturbation proof of our partial converse, \cref{thm:partial-converse}.

\subsection{Partial Converse}\label{subsubsec:partial-converse-result}
Here we prove our partial converse, showing that if our convexity condition does not hold and $V$ is strictly increasing, there exists a mass of agents $D$ for which no common lottery is optimal.
To provide an example, let $D=1$, $N=3$, and let the objective be $V_{\text{fill}}$. Suppose positions are distributed uniformly and the type distribution is $f(\theta_0)=\frac{1}{3},f(\theta_1)=\frac{1}{12},f(\theta_2)=\frac{7}{12}$. Consider the two mechanisms in \Cref{fig:optmech-example-main}:

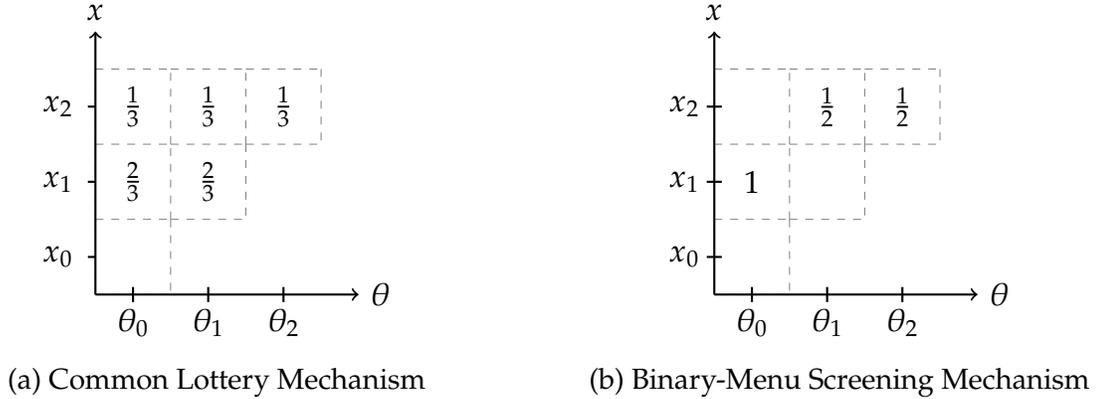
\begin{figure}[htbp]
    \centering
    
    \begin{subfigure}{0.5\textwidth}
        \centering
        \begin{tikzpicture}
            
            \foreach \x in {0,...,3} {
                \foreach \y in {0,...,3} {
                \ifnum\y>\x
                    
                    \draw[very thin, dashed, gray] (\x,\y-1) -- (\x,\y);
                    
                    \draw[very thin, dashed, gray] (\x,\y) -- (\x+1,\y);
                \fi
                \ifnum\y=\x
                    \ifnum\x<3
                        
                        \draw[very thin, dashed, gray] (\x,\y) -- (\x+1,\y);
                        \draw[very thin, dashed, gray] (\x+1,\y+1) -- (\x+1,\y);
                    \fi
                \fi
                }
            }

            \draw[->, thick] (0,0) -- (3.5,0) node[right] {$\theta$};
            \draw[->, thick] (0,0) -- (0,3.5) node[above] {$x$};

            \foreach \t in {0,...,2} {
                \draw[thick] (\t+0.5,-0.1) -- (\t+0.5,0.1) node[below=4pt] {$\theta_{\t}$};
            }

            \foreach \t in {0,...,2} {
                \draw[thick] (-0.1,\t+0.5) -- (0,\t+0.5) node[left=4pt] {$x_{\t}$};
            }

            \node at (0+0.5,2.5) {$\tfrac{1}{3}$};
            \node at (1+0.5,2.5) {$\tfrac{1}{3}$};
             \node at (2+0.5,2.5) {$\tfrac{1}{3}$};

            \node at (0+0.5,1.5) {$\frac{2}{3}$};
            \node at (1+0.5,1.5) {$\frac{2}{3}$};
            
        \end{tikzpicture}
        \caption{Common Lottery Mechanism}
        \label{fig:optmech-example-commonlotto} 
    \end{subfigure}
    \begin{subfigure}{0.49\textwidth}
        \centering
        \begin{tikzpicture}
        
        \foreach \x in {0,...,3} {
            \foreach \y in {0,...,3} {
            \ifnum\y>\x
                
                \draw[very thin, dashed, gray] (\x,\y-1) -- (\x,\y);
                
                \draw[very thin, dashed, gray] (\x,\y) -- (\x+1,\y);
            \fi
            \ifnum\y=\x
                \ifnum\x<3
                    
                    \draw[very thin, dashed, gray] (\x,\y) -- (\x+1,\y);
                    \draw[very thin, dashed, gray] (\x+1,\y+1) -- (\x+1,\y);
                \fi
            \fi
            }
        }

        \draw[->, thick] (0,0) -- (3.5,0) node[right] {$\theta$};
        \draw[->, thick] (0,0) -- (0,3.5) node[above] {$x$};

        \foreach \t in {0,...,2} {
            \draw[thick] (\t+0.5,-0.1) -- (\t+0.5,0.1) node[below=4pt] {$\theta_{\t}$};
        }

        \foreach \t in {0,...,2} {
            \draw[thick] (-0.1,\t+0.5) -- (0.1,\t+0.5) node[left=4pt] {$x_{\t}$};
        }

            \foreach \x in {1,...,2} {
                \node at (\x+0.5,2.5) {$\tfrac{1}{2}$};
            }

            \foreach \x in {0,...,0} {
                \node at (\x+0.5,1.5) {$1$};
            }
        \end{tikzpicture}
        \caption{Binary-Menu Screening Mechanism}
        \label{fig:optmech-example-binary} 
    \end{subfigure}

    \caption{Optimal Mechanism Example}
    \label{fig:optmech-example-main}
\end{figure}
The common lottery in \cref{fig:optmech-example-commonlotto} is the optimal common lottery in this setting: it fills position $x_2$ to capacity, then fills the maximum feasible mass of position $x_1$, attaining an objective of $\frac{11}{18}$. However, the binary-menu screening mechanism in \cref{fig:optmech-example-binary} is incentive compatible and achieves a higher objective of $\frac{2}{3}>\frac{11}{18}$. This turns out to be a general consequence when $1/F$ is not convex.

\begin{theorem}\label{thm:partial-converse}
    Suppose $1/F$ is not convex and $g$ has full support. 
    If the designer objective $V$ is strictly increasing, there exists $D$ such that no common lottery is optimal.

\end{theorem}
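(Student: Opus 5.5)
The plan is to choose $D$ small enough that no capacity constraint can bind, and then show that every common lottery that is a candidate optimum is strictly Pareto-dominated, in the vector $\bm{s}$, by some feasible screening mechanism. Since $V$ is strictly increasing, this rules out every common lottery.

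\textbf{Step 1 (reduction).} Non-convexity of $1/F$ on $\Theta$ gives an index $k$ with $1\le k\le N-2$ and
\[
\frac{1}{F(\theta_{k+1})}-\frac{1}{F(\theta_k)} \;<\; \frac{1}{F(\theta_k)}-\frac{1}{F(\theta_{k-1})}.
\]
This is exactly the sign that makes the multiplier $\lambda_{k,j}$ on the non-local downward constraints in the decomposition of $P(\theta_0)$ negative. By the indifference formula, a common lottery delivering masses $\bm{s}$ must satisfy
\[
s_j=D\,F(x_j)\,\tilde a(x_j), \qquad\text{so}\qquad P(\theta_0)=\sum_j \frac{s_j}{D\,F(x_j)}\le 1 .
\]
Take $D$ small, specifically $D<\min_j g(x_j)$, which is positive because $g$ has full support. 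Then Position-Feasibility is slack for every mechanism. Hence, if a common lottery is optimal for a strictly increasing $V$, it must have $\sum_j \tilde a(x_j)=1$; otherwise adding mass to any position raises some $s_j$ without violating any constraint. So it suffices to show the following: for every $\tilde a$ with $\sum_j\tilde a(x_j)=1$, there is a feasible $a'$ with $\bm{s}(a')\ge \bm{s}(\tilde a)$ and strict inequality in some coordinate.

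\textbf{Step 2 (building a screening perturbation).} I would use the duality program from the alternative proof of \cref{thm:implementable-by-common-lottery}, namely minimizing $D$ subject to delivering a target $\bm{s}$. The aim is to exhibit a local perturbation $\Delta a$, supported on types $\theta_0,\dots,\theta_{k+1}$ and positions $x_{k-1},x_k,x_{k+1}$, that mimics \cref{fig:optmech-example-binary}. Low types $\theta_0,\dots,\theta_{k-1}$ receive additional probability of $x_{k-1}$ or $x_k$ (a ``safe'' lottery). Types $\theta_k,\theta_{k+1}$ are moved toward $x_{k+1}$ (a ``risky'' lottery). The amounts are chosen proportional to the shadow prices, so that:
\begin{enumerate}[(i)]
\item the local and global downward constraints $\text{IC}_{k+1,k}$ and $\text{IC}_{k,j}$, $j<k$, hold with equality at first order;
\item the change in $\bm{s}$ is componentwise nonnegative;
\item each agent's total probability does not exceed one.
\end{enumerate}
Evaluating the decomposition of $P(\theta_0)$ at $\tilde a+\varepsilon\Delta a$, the common-allocation term is unchanged or larger, while the value-of-information term gains $\varepsilon\,\lambda_{k,\cdot}\times(\text{slack})$ with $\lambda_{k,\cdot}<0$. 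This is what frees budget for $\theta_0$ while all $s_j$ weakly rise and some rise strictly. Because all constraints are linear, feasibility of $\tilde a+\varepsilon\Delta a$ for small $\varepsilon$ reduces to sign checks on the constraints that bind at $\tilde a$.

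\textbf{Step 3 (main obstacle: the support of $\tilde a$).} The perturbation in Step 2 needs room to move. For example, the risky lottery has to take away some existing mass, typically $\tilde a(x_{k+1})$ or $\tilde a(x_k)$, and if those are zero the naive construction breaks down. I would first try to handle this by a case split on the set $\{j:\tilde a(x_j)>0\}$.
\begin{itemize}
\item If $\tilde a$ charges $x_{k-1}$, $x_k$ or $x_{k+1}$, use the local perturbation above.
\item Otherwise, pick any $j$ with $\tilde a(x_j)>0$. Shave $\varepsilon$ from $x_j$ only for types for which this is compensated, via $1/F$, by the screening gain. The fallback is to use the extra freedom in choosing $D$: select $D$, among small values, so that the optimum of the linear-in-budget common-lottery problem is forced to charge the relevant positions.
\end{itemize}
I expect this bookkeeping to be the hardest part. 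The remaining risk is guaranteeing exact componentwise domination, rather than only a gain in $\sum_j s_j$, for an arbitrary strictly increasing $V$.
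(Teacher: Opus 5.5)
Your Step 1 puts $D$ in the wrong regime, and the reduction it states is false.

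Suppose $D<\min_j g(x_j)$, and take the common lottery $\tilde a(x_{N-1})=1$. It is feasible: every type accepts $x_{N-1}$, and $s_{N-1}=D\le g(x_{N-1})$. It delivers $\bm{s}=(0,\dots,0,D)$. Every mechanism satisfies $\sum_k s_k(a)=D\sum_i f(\theta_i)P(\theta_i)\le D$, so no feasible $a'$ has $\bm{s}(a')\ge\bm{s}$ with a strict inequality in some coordinate. Hence your claim that every common lottery with $\sum_j\tilde a(x_j)=1$ is strictly dominated componentwise fails.

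Worse, for the strictly increasing objective $V_{\text{fill}}$, this common lottery is optimal for every $D\le g(x_{N-1})$. So the theorem's conclusion is simply false in the regime you chose. The fallback in Step 3, choosing $D$ ``among small values'' so that the relevant positions are charged, cannot work either: with slack capacities the designer never has to push mass down to $x_{k-1},x_k$. Screening only pays when capacity on higher positions binds. That is what forces a common lottery to offer lower positions that high types then reject.

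The missing idea is to take $D$ in an intermediate range. It must be large enough that the optimal common lottery charges $x_{k-1},x_k,x_{k+1}$ with $P(\theta_0)=1$, but below $\sum_j g(x_j)/F(x_j)$ so that some position stays unfilled. This is what full support of $g$ is used for, not merely $\min_j g(x_j)>0$.

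In that regime you need not dominate every common lottery, nor raise $\bm{s}$ directly. The paper's argument runs as follows:
\begin{enumerate}[(i)]
\item Hold $\bm{s}$ fixed and apply a mean-preserving contraction to one type $\theta_i$ with $i<k$, moving $\varepsilon$ from each of $x_{k-1}$ and $x_{k+1}$ onto $x_k$.
\item Re-spread the compensating amounts $2\varepsilon f(\theta_i)/F(\theta_k)$ and $\varepsilon f(\theta_i)/F(\theta_{k\pm1})$ equally over all types that accept each position. This leaves every $s_j$ unchanged and preserves all IC constraints; only $\text{IC}_{k,i}$ changes, and it becomes slack.
\item The perturbation changes each $P(\theta_j)$, $j<k$, by $\varepsilon f(\theta_i)\bigl(1/F(\theta_{k-1})-2/F(\theta_k)+1/F(\theta_{k+1})\bigr)<0$, and it lowers $P(\theta_k)$ as well.
\item The freed agent budget is then spent adding a common $\delta$ to an unfilled low position, which strictly raises $V$.
\end{enumerate}

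Your Step 2 is also too vague to verify. You require $\text{IC}_{k,j}$ to hold with equality, yet you claim a gain from $\lambda_{k,\cdot}<0$ multiplied by their slack. The decisive failure, however, is the choice of $D$.
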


The intuition for this result, and the $1/F$ convexity condition, is as follows.\footnote{We also believe this applies to more general objective functions $V$, as long as $V$ is non-satiated at the optimum, but our proof does not yet formally include this.} We use a variational argument, beginning with a common lottery and finding an improvement. Imagine that the designer begins by using a common lottery to allocate positions, and obtains allocated position masses $\bm{s} \in \mathbb{R}^{N}$ allocated to the positions. The feasibility constraint, given by the total mass of agents needed in the population to attain this allocation mass is $\sum_{k=0}^{N-1} \frac{1}{F(x_k)} s_k \leq D$. This is because in a common lottery, an agent offered position $x_k$ accepts the position with probability $F(x_k)$. The importance of convexity should now be apparent. If $1/F$ is not convex, then there exists a position mass $\tilde{\bm{s}}$ that is a mean-preserving spread of $\bm{s}$ and would result in a lower left-hand side. That is, if the designer was willing to allocate agents in a mean-preserving spread over positions, she buys herself slack in this feasibility inequality. That in turn leaves an additional mass of agents to allocate to other positions.

If the designer actually chose the lottery to obtain the mean-preserving spread allocation $\tilde{\bm{s}}$, this could violate the supply constraints or lower her objective. However, the designer can still obtain allocation $\bm{s}$ while lowering the mass of agents needed. She can do so by additionally imposing a mean-preserving \emph{contraction} on the positions allocated to \emph{low} types $\theta$. Indeed, suppose we choose some type $\theta$, consider their lottery allocation $a(\cdot;\theta)$, and change this lottery to a mean-preserving contraction. We show in our proof that this mean-preserving contraction satisfies all incentive-compatibility constraints.\footnote{If this contracts mass between positions $x_k$ and $x_{k'}$, we refer to the interval $[x_k,x_{k'}]$ as the \emph{contraction region}. The contraction satisfies incentive compatibility constraints because (i) any agent with type above the contraction region sees the change as irrelevant (ii) any agent with type below the contraction region (including type $\theta$) sees the same expected value before and after for $a(\cdot;\theta)$ and (iii) any agent with type in the contraction region now views the lottery allocated to $\theta$ as worse than before.}

The designer may perform these two steps in a way that preserves the position allocation $\bm{s}$ while reducing the mass of agents needed to obtain the allocation. This leaves an additional mass of agents that can be assigned to strictly improve the objective. \cref{fig:perturbation-partial-converse} displays the two steps of the perturbation, with darker shades representing higher allocation probabilities. Our proof in the appendix formalizes this variational argument.

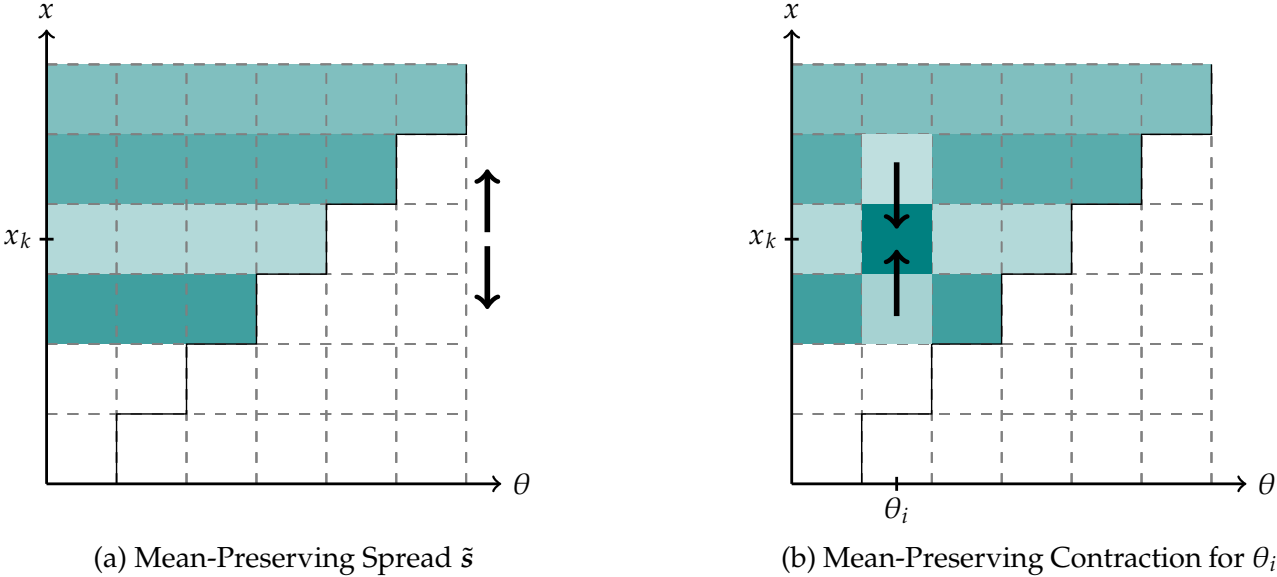
\begin{figure}[htbp]
    \centering
    
    \begin{subfigure}{0.4\textwidth}
        \centering
        \resizebox{\textwidth}{!}{
        \begin{tikzpicture}[scale=0.7, transform shape]
            \useasboundingbox (0,-0.6) rectangle (6.8,6.8);
            \def\shadearray{{0,0,0,75,30,65,50}}
            \foreach \x in {0,...,6} {
                \foreach \y in {1,...,6} {
                    \ifnum\x<6
                        \ifnum\y>\x
                            \pgfmathsetmacro{\shadeval}{\shadearray[\y]}
                            \fill[teal!\shadeval] (\x,\y-1) rectangle (\x+1,\y);
                        \fi
                    \fi
                    \draw[thin, dashed, gray] (\y,0) -- (\y,6);
                    \draw[thin, dashed, gray] (0,\y) -- (6,\y);
                }
            }

            \draw[black]
                (0,0)--(1,0)--(1,1)--(2,1)--(2,2)--(3,2)--(3,3)
                --(4,3)--(4,4)--(5,4)--(5,5)--(6,5)--(6,6);

            \draw[->, thick] (0,0) -- (6.5,0) node[right] {$\theta$};
            \draw[->, thick] (0,0) -- (0,6.5) node[above] {$x$};

            \draw[thick] (-0.1,3.5) -- (0.1,3.5) node[left=4pt] {$x_k$};

            \draw[->, ultra thick] (6.3,3.6) -- (6.3,4.5);
            \draw[->, ultra thick] (6.3,3.4) -- (6.3,2.5);
        \end{tikzpicture}
        }
        \caption{Mean-Preserving Spread $\tilde{\bm{s}}$}
        \label{fig:perturbation-partial-converse-a}
    \end{subfigure}
    \hfill
    \begin{subfigure}{0.4\textwidth}
        \centering
        \resizebox{\textwidth}{!}{
        \begin{tikzpicture}[scale=0.7, transform shape]
            \useasboundingbox (0,-0.6) rectangle (6.8,6.8);
            \def\shadearray{{0,0,0,75,30,65,50}}
            \foreach \x in {0,...,6} {
                \foreach \y in {1,...,6} {
                    \ifnum\x<6
                        \ifnum\y>\x
                            \pgfmathsetmacro{\shadeval}{\shadearray[\y]}
                            \fill[teal!\shadeval] (\x,\y-1) rectangle (\x+1,\y);
                        \fi
                    \fi
                    \draw[thin, dashed, gray] (\y,0) -- (\y,6);
                    \draw[thin, dashed, gray] (0,\y) -- (6,\y);
                }
            }

            \draw[black] (0,0) -- (1,0) -- (1,1) -- (2,1) -- (2,2) -- (3,2) -- (3,3) -- (4,3) -- (4,4) -- (5,4) -- (5,5) -- (6,5) -- (6,6);

            \draw[->, thick] (0,0) -- (6.5,0) node[right] {$\theta$};
            \draw[->, thick] (0,0) -- (0,6.5) node[above] {$x$};

            \draw[thick] (1.5,-0.1) -- (1.5,0.1) node[below=4pt] {$\theta_i$};

            \draw[thick] (-0.1,3.5) -- (0.1,3.5) node[left=4pt] {$x_k$};

            \fill[teal!100] (1,3) rectangle (2,4);
            \fill[teal!35] (1,2) rectangle (2,3);
            \fill[teal!25] (1,4) rectangle (2,5);
            
            \draw[->, ultra thick] (1.5,4.6) -- (1.5,3.65);
            
            \draw[->, ultra thick] (1.5,2.4) -- (1.5,3.35);
        \end{tikzpicture}
        }
        \caption{Mean-Preserving Contraction for $\theta_i$}
        \label{fig:perturbation-partial-converse-b}
    \end{subfigure}

    \caption{Perturbation for Partial Converse}
    \label{fig:perturbation-partial-converse}
\end{figure}

As alluded to in the previous subsection, this perturbation also illustrates the intuition behind the shadow price on downward multipliers $\lambda_{\text{IC}_{i,j}}$ in the proof of \cref{thm:implementable-by-common-lottery}. In the case when $1/F$ is convex, the designer can reduce the mass of agents necessary by allocating a mean-preserving \emph{contraction} $\tilde{\bm{s}}$ of the target allocation $\bm{s}$. To obtain the actual target allocation $\bm{s}$, the designer would have to allocate to some type $\theta_j$ a mean-preserving spread of her original allocation lottery. However, such a mean-preserving spread will violate IC constraints: if the mean-preserving spread shifts mass from $a(x_k;\theta_j)$ to $a(x_{k+1};\theta_j)$, then it will violate $\text{IC}_{k,j}$. This is why such a perturbation is not feasible to improve the designer's objective when $1/F$ is convex. When the IC constraint $\text{IC}_{k,j}$ is relaxed, this perturbation becomes possible: the designer's gain is the shadow price in the duality proof of \cref{thm:implementable-by-common-lottery}.

\subsection{Optimal Mechanisms}\label{subsec:optimal-common-lotteries}

If $1/F$ is convex, \cref{thm:implementable-by-common-lottery} shows that the optimal mechanism takes a simple form, which also allows us to characterize the optimal mechanism. The designer's optimal mechanism reduces to a classical consumer choice problem, with a simple budget set constraint. Given that the optimum is attainable with a common lottery, if the designer wishes to fill position $x_k$ with a mass $s_k$ of agents, then she must offer position $x_k$ to a mass $\frac{s_k}{F(x_k)}$ of agents. There is a total mass $D$ of agents, which serves as the designer's budget constraint.

\begin{proposition}\label{prop:optimal-common-lottery}
    Suppose $1/F$ is convex. An optimal mechanism for the designer is the common lottery $a$ with $\bm{s}(a) \in \mathbb{R}^N_+$ satisfying:

    $$
    \max_{\bm{s}\in \mathbb{R}^N_+} V(\bm{s})
    $$

    subject to
    \begin{align*}
    \sum_{k=0}^{N-1} \frac{1}{F(x_k)} s_k & \leq D \\[6pt]
    s_k & \leq g(x_k) \qquad \forall \; k.
    \end{align*}
\end{proposition}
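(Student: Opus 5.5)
By \cref{thm:implementable-by-common-lottery}, when $1/F$ is convex the designer's optimum over all feasible mechanisms equals her optimum over feasible common lotteries. So the proposition reduces to showing that the set of vectors $\bm{s}(a)$ attainable by feasible common lotteries is exactly the set $S$ of $\bm{s}\in\mathbb{R}^N_+$ satisfying the budget constraint and the capacity constraints. Once this is shown, maximizing $V$ over $S$ gives the optimal mechanism. The maximum exists because $S$ is compact, being a closed and bounded polytope (each $s_k \le g(x_k)$), and $V$ is upper semicontinuous.

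\textbf{Step 1 (necessity).} Take a feasible common lottery with indirect representation $\tilde a$. Type $\theta_i$ accepts $x_k$ with probability $\tilde a(x_k)$ exactly when $i\le k$. Hence
\[
s_k = D\,\tilde a(x_k)\sum_{i\le k} f(\theta_i) = D\,F(x_k)\,\tilde a(x_k),
\]
where $F(x_k)$ is the mass of types weakly below $x_k$, consistent with the paper's convention $X=\Theta$. Since $F(x_k)>0$ by full support, we get $\tilde a(x_k)=s_k/(D F(x_k))$.
\begin{itemize}
\item Agent-feasibility for type $\theta_0$, the type who sees the whole lottery, is $\sum_k \tilde a(x_k)\le 1$. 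Multiplying by $D$ gives $\sum_k s_k/F(x_k)\le D$.
\item Position-feasibility reads $D\sum_i a(x_k;\theta_i)f(\theta_i)=s_k\le g(x_k)$.
\end{itemize}
So $\bm{s}(a)\in S$.

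\textbf{Step 2 (sufficiency).} Given $\bm{s}\in S$, define $\tilde a(x_k)=s_k/(D F(x_k))$ and $\tilde a(\varnothing)=1-\sum_k\tilde a(x_k)$. The budget constraint gives $\tilde a(\varnothing)\ge 0$, so $\tilde a\in\Delta(X\cup\{\varnothing\})$. Now check that the induced direct mechanism is feasible:
\begin{itemize}
\item \emph{Ex-post IR} holds by the truncation in the definition of a common lottery.
\item \emph{Agent-feasibility} holds for every type, since each type receives a sub-lottery of $\tilde a$.
\item \emph{Position-feasibility} holds because it is exactly $s_k\le g(x_k)$.
\item \emph{IC}: type $\theta_i$ reporting $\theta_j$ gets $\sum_{k\ge i}(x_k-\theta_i)\tilde a(x_k)\mathbb{1}[k\ge j]$. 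For $j\le i$ this equals truthful utility. For $j>i$ it drops the nonnegative terms with $i\le k<j$. Either way the deviation is weakly worse.
\end{itemize}
By construction $\bm{s}(a)=\bm{s}$.

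\textbf{Step 3 (conclusion).} Combining Steps 1 and 2 with \cref{thm:implementable-by-common-lottery}, the common lottery built in Step 2 from a maximizer of $V$ over $S$ is optimal among all feasible mechanisms.

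I expect no real obstacle: the substance sits in \cref{thm:implementable-by-common-lottery}. The main care needed is the IC check in Step 2, in particular the upward misreports, which weakly shorten the lottery a type sees and so cannot help.
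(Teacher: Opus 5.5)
Your proposal is correct and takes essentially the paper's route. You parametrize each common lottery by $\tilde a(x_k)=s_k/(D\,F(x_k))$, note that IC and ex-post IR hold automatically, and turn position-feasibility into $s_k\le g(x_k)$ and agent-feasibility of $\theta_0$ into the budget constraint $\sum_k s_k/F(x_k)\le D$. The reduction to common lotteries comes from \cref{thm:implementable-by-common-lottery}. You are simply more explicit than the paper in invoking that theorem, in checking IC for upward misreports, and in getting existence of the maximizer from compactness and upper semicontinuity.
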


This formulation also makes clear that $\frac{1}{F(x_k)}$ functions as the price to fill an additional unit of position $x_k$. Given this result, we can explicitly characterize the designer's optimal mechanism when her objective function is $V_{\text{fill}}$. The designer obtains equal utility from allocating an additional unit of agents to each position $x_k$, up until the position capacity $g(x_k)$. Hence her optimal mechanism will consist of filling positions to capacity in ascending order of their price $\frac{1}{F(x_k)}$, that is, filling all positions with the highest qualities. The designer may fill these positions with type $\theta_i$ agents until the available mass of type $\theta_i$ is exhausted.

\begin{corollary}\label{cor:optimal-common-lottery-vfill}
    Suppose $1/F$ is convex. 
    Let $q_k \coloneq \frac{1}{D} \frac{g(x_{k})}{F(x_{k})}$, and $Q_k \coloneq \min\left\{1, \sum_{k'=k+1}^{N-1} q_{k'}\right\}$. 
    An optimal direct mechanism $a(\cdot; \cdot)$ for designer objective $V_{\text{fill}}$ satisfies
    \begin{equation*}
        a(x_k;\theta_i) = \begin{cases}
            \min\left\{q_k , 1-Q_k \right\} & \text{ if } \theta_i \leq x_k \\
            0 & \text{ if } \theta_i > x_k.
        \end{cases}
    \end{equation*}
\end{corollary}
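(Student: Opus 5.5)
The plan is to combine \cref{prop:optimal-common-lottery} with a fractional-knapsack (greedy) argument, and then translate the optimal position masses back into common-lottery probabilities. By \cref{prop:optimal-common-lottery}, since $1/F$ is convex, it suffices to solve the linear program
\[
\max_{\bm{s}\in\mathbb{R}^N_+}\ \sum_{k=0}^{N-1} s_k \quad\text{s.t.}\quad \sum_{k=0}^{N-1}\frac{s_k}{F(x_k)}\le D,\qquad s_k\le g(x_k)\ \ \forall k,
\]
and then exhibit a common lottery attaining the optimal $\bm{s}$. For a common lottery $\tilde a$, the mass accepting $x_k$ is $s_k = D\,F(x_k)\,\tilde a(x_k)$. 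It is therefore convenient to change variables to $t_k \coloneq s_k/(D F(x_k)) = \tilde a(x_k)$. Here $F(x_k)\ge f(\theta_0)>0$ by full support, so the change of variables is well defined. In these variables the program becomes
\[
\max \sum_k D F(x_k)\, t_k \quad\text{s.t.}\quad \sum_k t_k \le 1,\qquad 0\le t_k\le q_k .
\]
The budget constraint is then exactly agent feasibility of the lottery $\tilde a\in\Delta(X\cup\{\varnothing\})$, with the leftover mass placed on $\varnothing$. The capacity constraint is exactly $t_k\le q_k$, with $q_k$ as defined in the statement.

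This is a fractional knapsack problem: each unit of the lottery budget spent on $x_k$ yields value $D F(x_k)$. Full support makes $F$ strictly increasing, so the per-unit values are strictly ordered and largest for the top position. I will show by a standard exchange argument that the optimum fills positions greedily from $k=N-1$ downward. Suppose some feasible $t$ has $t_k<q_k$ while $t_j>0$ for some $j<k$, or while the budget is slack. Shifting $\varepsilon$ of probability from $x_j$ (or from $\varnothing$) to $x_k$ preserves feasibility and changes the objective by $\varepsilon D\,(F(x_k)-F(x_j))>0$, or by $\varepsilon D F(x_k)>0$ in the slack case. So at an optimum, $t$ is obtained by saturating $t_{N-1},t_{N-2},\dots$ at $q_k$ until the budget of $1$ runs out.

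The remaining step is to verify that this greedy allocation equals $\min\{q_k,1-Q_k\}$. I will argue by downward induction on $k$ that the budget used by positions above $k$ equals $Q_k=\min\{1,\sum_{k'>k}q_{k'}\}$. There are two cases.
\begin{itemize}
\item If $\sum_{k'>k}q_{k'}<1$, then every higher position was saturated, so the remaining budget is $1-\sum_{k'>k}q_{k'}=1-Q_k$.
\item Otherwise the budget is already exhausted, so $1-Q_k=0$ and $t_k=0=\min\{q_k,0\}$.
\end{itemize}
In both cases greedy sets $t_k=\min\{q_k,1-Q_k\}$. Truncating this $\tilde a$ at each type's outside option gives the displayed direct mechanism. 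It is a feasible common lottery: IC holds because it is common, ex-post IR holds by the truncation, and position and agent feasibility hold by the two constraints above. It attains the LP optimum, so it is optimal by \cref{prop:optimal-common-lottery}.

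I do not expect a serious obstacle. The only points needing care are the reparametrization that identifies the budget constraint $\sum_k s_k/F(x_k)\le D$ with $\sum_k\tilde a(x_k)\le 1$, and the bookkeeping that matches the greedy remainder to $1-Q_k$, including the boundary case in which the cap at $1$ in $Q_k$ binds.
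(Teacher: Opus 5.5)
Your proposal is correct and takes essentially the same route as the paper. You use \cref{prop:optimal-common-lottery} to reduce to the budget problem in which filling position $x_k$ costs $1/F(x_k)$ per unit, then fill positions to capacity from the highest quality (cheapest price) downward until the budget $D$ is exhausted; your change of variables $t_k = s_k/(D F(x_k))$, the exchange argument, and the bookkeeping that matches the greedy remainder to $1-Q_k$ make rigorous what the paper states informally in the paragraph before the corollary.
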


For other objective functions $V$, the designer's problem is still a straightforward solution to the problem in \cref{prop:optimal-common-lottery}: she chooses $\bm{s} \in \mathbb{R}_+^N$ to maximize $V(\bm{s})$ subject to the budget constraint. In \cref{app:general-designer-objectives} we characterize the optimal mechanism for other objective functions $V$. As one example, if the designer's objective function $V$ places high value on filling the low-quality positions, her optimal mechanism may fill these positions before completely filling the high-quality positions. If her objective function $V$ is concave, her optimal mechanism may partially fill all positions. Regardless of her objective function, however, when our convexity condition holds, \cref{thm:implementable-by-common-lottery} shows that she can always implement this optimal mechanism with a common lottery.

\subsection{Implementation of Common Lotteries}

Having established that a common lottery is optimal under our general condition, and described the optimal common lottery, we now show that any common lottery can be implemented as a \emph{capped random priority} mechanism, a generalization of the random priority mechanism (also commonly referred to as random serial dictatorship). 

In the random priority mechanism, agents are randomly ordered, and each agent in turn selects his preferred position among the remaining positions. To see the intuitive connection between random priority and common lotteries, consider an agent participating in the random priority mechanism. When this agent is called to choose his position, this agent will claim the highest-quality remaining position $x_k$, so long as it is above his outside option $\theta$. Each agent essentially faces a lottery (which determines the highest available position when he is called to choose), and accepts his offer if that lottery offers him a position above his outside option. Because agents are randomly ordered, this lottery is identical for all agents.

More generally, any common lottery can be implemented by a mechanism that preserves the structure of the random priority mechanism, while potentially imposing further caps on the capacity of each position. Formally defining this mechanism in the continuum setting requires care; we use the same techniques as \cite{che2010asymptotic} to define this mechanism. Our formal definition of the capped random priority mechanism with capping vector $\mathbf{s}$, denoted $\text{CRP}(\mathbf{s})$, is in \cref{app:implementation-common-lotteries}. The cap on each position serves as the available quota of each position. As we show in \cref{prop:capped-rsd}, any feasible common lottery can be implemented by an outcome-equivalent CRP mechanism.

\begin{proposition}
\label{prop:capped-rsd}
    Fix a feasible common lottery $a$ and resulting allocation vector $\bm{s}(a)$. In $\text{CRP}(\mathbf{s}(a))$, the probability that an agent with outside option $\theta_i$ receives position $x_k$ is $a(x_k; \theta_i)$. 
\end{proposition}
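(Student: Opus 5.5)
The argument is a direct computation of the probability, under $\text{CRP}(\bm{s}(a))$, that a given agent is offered each position. Write the common lottery as $\tilde a$, so that $s_k = D\,F(x_k)\,\tilde a(x_k)$. The mechanism is run in continuous time as in \cite{che2010asymptotic}. Each agent draws an independent uniform priority time $t\in[0,1]$. Agents arrive in order of $t$, and each takes the highest-quality position $x_k\ge\theta$ whose cap has not yet been exhausted. The cap on position $x_k$ is $s_k$, which is at most $g(x_k)$ by feasibility of $a$.

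The first step is to show that positions close in descending order of quality. Consider a time $t$ before any position has closed. By then the agents who have arrived form a uniform mass $Dt$ of every type. Each of them takes the top position still open that lies above his outside option. So the top position $x_{N-1}$ is taken at rate $D F(x_{N-1})$, and the lower positions are taken only by agents whose types exclude the higher ones. This suggests defining closing times by induction, $T_{N-1}\le T_{N-2}\le\cdots$.

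Rather than predicting those closing times from the dynamics, I will guess them from $\tilde a$ and then verify. The guess is $T_k = \sum_{k'\ge k}\tilde a(x_{k'})$, and I will show that position $x_k$ is taken exactly by the arrivals with $t\in[T_{k+1},T_k)$ and type $\theta\le x_k$. The agents with $t\in[T_{k+1},T_k)$ and $\theta\le x_k$ have mass $D F(x_k)\tilde a(x_k)=s_k$, so under this guess the cap on $x_k$ is hit exactly at $T_k$. The verification proceeds by induction downward in $k$. On the interval $[T_{k+1},T_k)$, all positions above $x_k$ are closed by the inductive hypothesis. An agent with $\theta\le x_k$ therefore takes $x_k$, because $x_k$ is still open. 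An agent with $\theta>x_k$ finds no acceptable open position and takes nothing. This is where the structure of a common lottery is essential: it is why no lower-type agent spills into a lower position early. The condition $\sum_k\tilde a(x_k)\le1$ guarantees $T_0\le1$, so every closing time falls inside $[0,1]$. Any arrivals after $T_0$ receive nothing.

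Given this schedule, an agent of type $\theta_i$ with priority $t$ receives $x_k$ exactly when $t\in[T_{k+1},T_k)$ and $x_k\ge\theta_i$. Because $t$ is uniform, this event has probability $\tilde a(x_k)$ for $k\ge i$ and $0$ otherwise, which equals $a(x_k;\theta_i)$.

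The main obstacle is stating the continuum mechanism rigorously, and in particular that the realized allocation is deterministic and well defined. I would handle this with a law-of-large-numbers convention, as in \cite{che2010asymptotic}, so that arrival flows are exactly $D f(\theta)\,dt$. A second point needs care: an agent of type $\theta>x_k$ who arrives while $x_k$ is open but all higher positions are closed must take nothing. This holds because such an agent rejects $x_k$, and an agent who is unassigned does not block any position.
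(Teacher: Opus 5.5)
Your proof is correct and is essentially the paper's argument: positions are exhausted in descending order of quality, position $x_k$ is exhausted by a block of priorities of length $\tilde a(x_k)=s_k/(D\,F(x_k))$, and an agent of type $\theta_i$ in that block receives $x_k$ exactly when $\theta_i\le x_k$. Your guess-and-verify downward induction, with the explicit mass $D\,\tilde a(x_k)$ and the check $T_0\le 1$, simply makes precise the paper's ``repeat this logic'' step. One correction to your commentary: no agent takes a lower position early because of the common vertical ranking of positions (every agent's favorite open position is the top open one). The common-lottery structure is instead what makes the resulting allocation coincide with $a$ and what guarantees $T_0\le 1$.
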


As discussed earlier, in our setting with vertical preferences, one implementation of a common lottery is to draw a random position for each agent and allow them to accept or reject the position. \cref{prop:capped-rsd} shows that a capped random priority mechanism also implements a common lottery. As \cite{che2010asymptotic} show, in a continuum environment, random priority is equivalent to the probabilistic serial mechanism, first introduced by \cite{bogomolnaia2001new}: a designer could therefore similarly implement a common lottery using a ``Capped Probabilistic Serial'' mechanism.\footnote{\cite{che2010asymptotic} also show that under certain conditions, the outcome of a random priority mechanism in finite markets converges, as the markets grow in size, to the outcome of the random priority mechanism in this continuum model. This suggests that these implementation results may accurately inform designers' decisions in a large finite market.}

The particular role of caps in this implementation is that they allow the designer to steer agents to particular positions in the common lottery. As shown in \cref{cor:optimal-common-lottery-vfill}, if the designer's objective is simply $V_{\text{fill}}$, she completely fills any high-quality position before assigning agents to lower-quality positions. This corresponds to a case in which the designer uses a random priority mechanism with no caps. However, if the designer has objectives other than $V_{\text{fill}}$, she may assign agents to lower-quality positions even if a high-quality position $x_k$ is not filled to capacity $g(x_k)$. In such cases, the classic random priority mechanism would not attain the designer's optimal common lottery: agents would claim all available capacity of the high-quality position $x_k$, and may not claim the lower-quality position. However, the designer can still restore her optimal common lottery by using caps: she imposes a quota $s_k$ on the amount of position $x_k$ available to agents.

For objective $V_{\text{fill}}$, this implementation is also ``detail free'' in the sense that the designer only needs to know that $1/F$ is convex. In this case, for any such distribution, the optimal common lottery runs a random priority mechanism without any caps.

Finally, if $1/F$ is not convex and no common lottery is optimal, no capped random priority mechanism can attain the designer's optimum. For example, in \cref{fig:optmech-example-binary}, the designer cannot implement the optimal menu with a capped random priority mechanism, regardless of the caps that she uses. To see this, note that the optimal mechanism assigns agents of types $\theta_1, \theta_2$ to the highest-quality position with positive probability, but assigns $\theta_0$ to the highest-quality position with probability $0$. In any capped random priority mechanism, type $\theta_0$ would have positive probability of drawing a low priority number, and hence to obtain the highest-quality position. Our results therefore offer a sharp frontier for when the designer can and cannot obtain her optimum with a random priority mechanism.

\section{Characterizing Feasible Mechanisms}\label{sec:characteriz}
In this section we elaborate on general properties of feasible mechanisms. 
We first derive a necessary condition for an IC mechanism: monotonicity of allocation probability. We then characterize the sets of redundant and binding IC constraints.

We first show that IC constraints imply monotonicity of the probability of allocation: agents with lower-quality outside options must be allocated to some position with higher probability than agents with higher-quality outside options. For a direct mechanism $a$, let $P(\theta_i) \coloneq \sum_{k=i}^{N-1} a(x_k;\theta_i)$ denote the total probability that type $\theta_i$ is allocated any position.

\begin{proposition}[MON]\label{prop:MON}
    In any feasible mechanism, $P(\theta_i)$ is non-increasing. 
\end{proposition}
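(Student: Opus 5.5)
The plan is the standard single-crossing argument: add the two adjacent incentive constraints $\text{IC}_{i,i+1}$ and $\text{IC}_{i+1,i}$ and read off a comparison of allocation probabilities. The only twist is that utility is the piecewise-linear function $\max\{x,\theta\}$, so ex-post IR has to be used to identify the relevant slopes with the totals $P(\cdot)$.

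For each report $\theta_j$, define the net gain over the outside option of a type $\theta$ who reports $\theta_j$ and then accepts only acceptable offers:
$$h_j(\theta)\coloneq \sum_{k=0}^{N-1}(x_k-\theta)^+\,a(x_k;\theta_j).$$
The constraint $(\text{IC}_{i,j})$ is exactly $h_i(\theta_i)\ge h_j(\theta_i)$, since the $(x_k-\theta_i)^+$ terms vanish for $k<i$. Each $h_j$ is continuous, convex and piecewise linear, with kinks only at grid points $x_k$. On the interval $[\theta_i,\theta_{i+1}]$, whose length is $\Delta\coloneq 1/(N-1)$, it is therefore affine with slope $-\sum_{k\ge i+1}a(x_k;\theta_j)$. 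Hence
$$h_j(\theta_i)-h_j(\theta_{i+1})=\Delta\sum_{k\ge i+1}a(x_k;\theta_j).$$

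Adding $\text{IC}_{i,i+1}$, i.e.\ $h_i(\theta_i)\ge h_{i+1}(\theta_i)$, and $\text{IC}_{i+1,i}$, i.e.\ $h_{i+1}(\theta_{i+1})\ge h_i(\theta_{i+1})$, and rearranging gives
$$h_i(\theta_i)-h_i(\theta_{i+1})\;\ge\; h_{i+1}(\theta_i)-h_{i+1}(\theta_{i+1}).$$
By the slope formula, this becomes
$$\sum_{k\ge i+1}a(x_k;\theta_i)\;\ge\;\sum_{k\ge i+1}a(x_k;\theta_{i+1}).$$

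Now apply Ex-Post IR to each side. On the right, $a(x_k;\theta_{i+1})=0$ for $k\le i$, so the right side equals $P(\theta_{i+1})$. On the left, the sum equals $P(\theta_i)-a(x_i;\theta_i)\le P(\theta_i)$. Combining, $P(\theta_i)\ge P(\theta_{i+1})$ for every $i$, which is the claim.

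The step that needs the most care is the slope bookkeeping. One must check that no kink of $h_j$ lies strictly inside $(\theta_i,\theta_{i+1})$; this holds because $X=\Theta$ is the same grid. One must also check that the term $k=i$ is handled correctly at each endpoint. It drops out of $h_j$ on the whole interval because $(x_i-\theta)^+=0$ for $\theta\ge\theta_i$, and this is exactly why only the inequality $\sum_{k\ge i+1}a(x_k;\theta_i)\le P(\theta_i)$, rather than an equality, appears on the left. The argument uses only the local constraints and ex-post IR, not position- or agent-feasibility.
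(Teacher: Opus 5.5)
Your proof is correct and takes essentially the same route as the paper: add the opposing pair of incentive constraints and use ex-post IR to turn the result into a comparison of $P$ values, which for adjacent types gives the same sharpened bound $P(\theta_{i+1})\le P(\theta_i)-a(x_i;\theta_i)$. The differences are cosmetic. The paper handles an arbitrary pair $\theta_i<\theta_j$ directly by algebra rather than through your slope bookkeeping for $h_j$, while your adjacent-pairs-and-chaining version makes explicit that only the local constraints are needed.
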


As we showed in \cref{fig:local-ICs-insufficient}, local IC constraints do not imply global IC constraints.
We first show that local IC constraints being satisfied do imply that global \emph{upward} IC constraints are satisfied. Intuitively, local IC constraints imply monotonicity, which implies that higher-type agents are allocated with lower probability, which is in turn sufficient to guarantee that low-type agents do not deviate upward. 

\begin{proposition}\label{prop:redundant-ICs}
The following incentive compatibility constraints are redundant:
\[\text{IC}_{N-1,j}, \forall_{j=0,1,...,N-2},\]
\[\text{IC}_{i,j}, \forall_{i}\forall_{j=i+2,i+3,...,N-1}.\]
\end{proposition}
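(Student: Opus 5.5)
The plan is to treat the two families separately. The first is trivial. The second reduces to a telescoping argument along the local upward constraints, combined with monotonicity of $P$ as in \cref{prop:MON}. I will also check that this monotonicity follows from the local constraints alone, so that nothing circular is used.

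\textbf{The constraints $\text{IC}_{N-1,j}$.} For type $\theta_{N-1}=1$, both sides of $\text{IC}_{N-1,j}$ reduce to the single term $(x_{N-1}-\theta_{N-1})\,a(x_{N-1};\cdot)$. This term is zero because $x_{N-1}=\theta_{N-1}=1$. Each such constraint therefore reads $0\ge 0$ and is redundant.

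\textbf{Notation.} Write $\delta \coloneq \frac{1}{N-1}$ for the grid step, so that $\theta_m-\theta_{m-1}=\delta$. Define the information rent
\[U(i)\coloneq \sum_{k\ge i}(x_k-\theta_i)\,a(x_k;\theta_i).\]
Take a report $j>i$. By Ex-Post IR, $a(x_k;\theta_j)=0$ for $k<j$. The right-hand side of $\text{IC}_{i,j}$ then equals
\[\sum_{k\ge j}(x_k-\theta_i)\,a(x_k;\theta_j)=U(j)+(\theta_j-\theta_i)P(\theta_j).\]
So the local upward constraint $\text{IC}_{i,i+1}$ is exactly
\[U(i)\ge U(i+1)+\delta P(\theta_{i+1}).\]

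\textbf{The non-local upward constraints ($j\ge i+2$).} Chain the local upward constraints $\text{IC}_{m-1,m}$ for $m=i+1,\dots,j$ to obtain
\[U(i)\ge U(j)+\delta\sum_{m=i+1}^{j}P(\theta_m).\]
If $P$ is non-increasing, each $P(\theta_m)$ with $m\le j$ is at least $P(\theta_j)$. The sum is then at least $(j-i)\delta P(\theta_j)=(\theta_j-\theta_i)P(\theta_j)$, which gives $\text{IC}_{i,j}$.

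\textbf{Monotonicity from local constraints.} The remaining step, which is the only delicate point, is to confirm that monotonicity of $P$ follows from the local constraints alone. Consider the local downward constraint $\text{IC}_{i+1,i}$. Its right-hand side sums over $k\ge i+1$ and can be rewritten as
\[U(i)-\delta\bigl(P(\theta_i)-a(x_i;\theta_i)\bigr),\]
using $x_i=\theta_i$ and $\theta_{i+1}-\theta_i=\delta$. Combining this with $\text{IC}_{i,i+1}$ gives
\[\delta P(\theta_{i+1})\le U(i)-U(i+1)\le \delta\bigl(P(\theta_i)-a(x_i;\theta_i)\bigr)\le \delta P(\theta_i).\]
Hence $P$ is non-increasing using only the local constraints. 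The chaining argument above is therefore valid using only the retained constraints, so the listed constraints are redundant. The main obstacle is precisely this bookkeeping: making sure the monotonicity invoked in the chain is derived from constraints not being discarded, rather than citing \cref{prop:MON} for the full constraint set.
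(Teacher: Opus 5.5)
Your proof is correct and follows the paper's argument. The $\text{IC}_{N-1,j}$ constraints are vacuous, and the non-local upward constraints follow by chaining local upward ones and using monotonicity of $P$; the paper does this via the three-type step $\text{IC}_{r,s}+\text{IC}_{s,t}\Rightarrow\text{IC}_{r,t}$, and your telescoping is the same idea applied all at once. Your check that monotonicity needs only adjacent constraints is a worthwhile addition: the paper simply cites \cref{prop:MON}, whose proof invokes $\text{IC}_{i,j}$ for non-adjacent $i<j$, and the only discarded constraint you use, $\text{IC}_{N-1,N-2}$, is an identity.
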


However, unlike in typical mechanism design settings, local IC constraints do not imply that global \emph{downward} IC constraints are satisfied. Intuitively, high-type agents may still perform a ``double deviation'' to report a lower type, then reject any offers which are lower than his actual type. We show that these global downward constraints bind at the optimal mechanism when $1/F$ is convex. To state this result, we say that \textit{filling all positions is infeasible} if no feasible mechanism can fill all positions to capacity: that is, no feasible direct mechanism $a$ satisfies $\bm{s}_k(a) = g(x_k)$ for all $k$.

\begin{proposition}\label{prop:binding-ICs}
Suppose $1/F$ is convex, $V$ is strictly increasing, and filling all positions is infeasible. Then all local upward IC constraints bind at the optimum. If $1/F$ is \textit{strictly} convex, 
then all IC constraints not listed in \cref{prop:redundant-ICs} (all downward and local upward) bind at the optimum.    
\end{proposition}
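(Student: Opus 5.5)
The plan is to argue by contradiction using the decomposition of $P(\theta_0)$ that underlies the proof of \cref{thm:implementable-by-common-lottery}. Fix an optimal feasible mechanism $a$ with position masses $\bm{s}=\bm{s}(a)$. That decomposition writes
\[
P(\theta_0)=\sum_{k=0}^{N-1}\frac{s_k}{D\,F(x_k)}+\sum_{i=0}^{N-2}\lambda_{i,i+1}(\text{IC}_{i,i+1})+\sum_{i=1}^{N-1}\sum_{j=0}^{i-1}\lambda_{i,j}(\text{IC}_{i,j}).
\]
Here I have rewritten the common component as $\sum_k s_k/(D F(x_k))$, using ex-post IR and $s_k=D\sum_{i\le k}a(x_k;\theta_i)f(\theta_i)$. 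Each bracketed IC expression is nonnegative, and it is strictly positive exactly when that constraint is slack.

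The first step is to determine the sign of the multipliers. For the local upward constraints, I would compute $\lambda_{i,i+1}$ explicitly. I expect it to be a positive multiple of a first difference of $1/F$, something like $f(\theta_{i+1})\bigl(\tfrac{1}{F(\theta_i)}-\tfrac{1}{F(\theta_{i+1})}\bigr)$. Because $f$ has full support, $F$ is strictly increasing, so this difference is strictly positive without any convexity assumption. For the non-local downward constraints, the multiplier recorded in the paper is $f(\theta_j)$ times a second difference of $1/F$. This is weakly positive under convexity and strictly positive under strict convexity. Verifying the strict positivity of $\lambda_{i,i+1}$ is the step I expect to be the main obstacle, since the paper only asserts weak positivity. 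I would handle it by redoing the telescoping that produces the decomposition and reading off the coefficient directly. If some local downward constraints enter with multipliers that are first differences, the same argument covers them.

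The second step is the contradiction. Suppose some constraint whose multiplier is strictly positive is slack at $a$. Since agent feasibility gives $P(\theta_0)\le 1$, we obtain $\sum_k s_k/F(x_k)<D$, so the budget constraint in \cref{prop:optimal-common-lottery} is slack at $\bm{s}$. By \cref{thm:implementable-by-common-lottery}, $\bm{s}$ is attained by a feasible common lottery, and $\bm{s}$ satisfies $s_k\le g(x_k)$. Filling all positions is infeasible, so $s_{k^*}<g(x_{k^*})$ for some $k^*$. Raising $s_{k^*}$ by a small $\varepsilon>0$ keeps both the capacity constraint and the budget constraint satisfied. The corresponding common lottery is therefore feasible and yields a strictly higher value because $V$ is strictly increasing. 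This contradicts the optimality of $a$.

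Combining the two steps, when $1/F$ is convex every local upward IC must bind. When $1/F$ is strictly convex, every downward IC must bind as well. These, together with the local upward constraints, are precisely the constraints not shown to be redundant in \cref{prop:redundant-ICs}, which gives both claims. One remaining check is that the local downward constraints $\text{IC}_{i,i-1}$ are included in the double sum with $j=i-1$. For these I would confirm that the multiplier formula still applies at the boundary, for instance with the convention $1/F(\theta_{-1})$ handled through the first-difference term, and is strictly positive under strict convexity.
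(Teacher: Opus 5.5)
Your proposal is correct and follows essentially the same route as the paper: a slack IC with a strictly positive multiplier in the decomposition of $P(\theta_0)$ forces $\sum_k s_k/F(x_k)<D$, and that slack lets a common lottery raise some unfilled position, contradicting optimality under strictly increasing $V$. The step you flag as the main obstacle is immediate, since the paper's multiplier is $\lambda_{i,i+1}=f(\theta_{i+1})/F(\theta_{i+1})>0$ by full support. No boundary convention is needed for the downward constraints either, because only $1\le i\le N-2$ matter, and the constraints $\text{IC}_{N-1,j}$ are identically zero.
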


The intuition for \cref{prop:binding-ICs} is that if these constraints did not bind, then as in the proof of \cref{thm:implementable-by-common-lottery}, we could transform the mechanism into a common lottery and be left with an additional mass of agents, which can be used to strictly improve the objective. \cref{prop:binding-ICs} in turn implies that common lotteries are uniquely optimal.  

\begin{corollary}\label{common-lotto-uniquely-opt}
 Suppose $1/F$ is strictly convex, $V$ is strictly increasing, and filling all positions is infeasible. Any optimal direct mechanism is a common lottery.
 
\end{corollary}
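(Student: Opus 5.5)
We prove \cref{common-lotto-uniquely-opt} by combining \cref{prop:binding-ICs} with the decomposition used for \cref{thm:implementable-by-common-lottery}. Let $a$ be an optimal direct mechanism. Since $1/F$ is strictly convex, $V$ is strictly increasing, and filling all positions is infeasible, \cref{prop:binding-ICs} says that every downward IC constraint $\text{IC}_{i,j}$ with $j<i$ and every local upward constraint $\text{IC}_{i,i+1}$ binds at $a$. The task is to show that these equalities force $a$ to be a common lottery. This is a purely algebraic consequence of the binding constraints, so optimality enters only through \cref{prop:binding-ICs}.

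We argue by backward induction on the position index $k$. The claim is that for each $k$, $a(x_k;\theta_j)$ is the same for all $j\le k$; ex-post IR already forces $a(x_k;\theta_j)=0$ for $j>k$.

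For each fixed $i$, the binding downward constraints $\text{IC}_{i,j}$, $j<i$, give
\[
\sum_{k\ge i}(x_k-\theta_i)\,a(x_k;\theta_j)=\sum_{k\ge i}(x_k-\theta_i)\,a(x_k;\theta_i)\qquad\text{for all } j\le i .
\]
Denote the common value by $U_i$. It is the truncated utility type $\theta_i$ gets from any lottery assigned to a type $j\le i$.

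Now fix $j$ and compare consecutive $i$. Write $\Delta=\frac{1}{N-1}$ and $T_j(i)\coloneq\sum_{k\ge i}(x_k-\theta_i)\,a(x_k;\theta_j)$. Then
\[
T_j(i)-T_j(i+1)=\Delta\sum_{k\ge i+1}a(x_k;\theta_j),
\]
because the $k=i$ term vanishes. Hence for all $j\le i$ we get $\sum_{k\ge i+1}a(x_k;\theta_j)=(U_i-U_{i+1})/\Delta$, which does not depend on $j$.

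Let $R_{i+1}$ denote this tail mass. Differencing the tail masses in $i$ then gives
\[
a(x_{i+1};\theta_j)=R_{i+1}-R_{i+2}\qquad\text{for all } j\le i .
\]
This step uses the identities at both $i$ and $i+1$, which hold simultaneously for $j\le i$. At the top index the tail is the single term $a(x_{N-1};\theta_j)$.

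This shows that $a(x_{k};\theta_j)$ is constant over $j\le k-1$, for every $k\ge1$. What remains is the diagonal entry $a(x_k;\theta_k)$, which must equal the common value for $j<k$.

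We use the binding local upward constraint $\text{IC}_{k-1,k}$ for this: type $\theta_{k-1}$ is indifferent between its own lottery and that of $\theta_k$. The two lotteries coincide on positions above $x_k$, by the step above applied to each $k'>k$ with $j\in\{k-1,k\}\le k'-1$. Cancelling those terms leaves
\[
(x_k-\theta_{k-1})\,a(x_k;\theta_{k-1})+(x_{k-1}-\theta_{k-1})\,a(x_{k-1};\theta_{k-1})=(x_k-\theta_{k-1})\,a(x_k;\theta_k).
\]
The middle term vanishes because $x_{k-1}=\theta_{k-1}$, and $x_k-\theta_{k-1}=\Delta>0$, so $a(x_k;\theta_k)=a(x_k;\theta_{k-1})$. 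Position $x_0$ is offered only to $\theta_0$, so it is trivially common.

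Setting $\tilde a(x_k)$ equal to the common value, and putting the remaining probability on $\varnothing$, exhibits $a$ as a common lottery. The main subtlety is bookkeeping: we must check that for each entry $(k,j)$ the differencing step uses only constraints $\text{IC}_{i,j}$ with $j\le i<N-1$. The constraints $\text{IC}_{N-1,\cdot}$ are redundant in \cref{prop:redundant-ICs} and might not be listed as binding. However, only $i\le N-2$ is needed, with the top tail handled at $i=N-2$, so this is not an issue.
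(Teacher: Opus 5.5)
Your proposal is correct and follows essentially the paper's route: you invoke \cref{prop:binding-ICs}, then show that the binding downward constraints, together with the binding local upward constraint $\text{IC}_{k-1,k}$ for the diagonal entry, force $a(x_k;\theta_j)$ to be constant in $j\le k$. The only difference is bookkeeping: you get each entry in closed form as the second difference $(U_{k-1}-2U_k+U_{k+1})/\Delta$ of the common utility levels, whereas the paper runs a top-down induction on $k$ using $\text{IC}_{k-1,\cdot}$ with the higher positions already equalized.
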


For objective $V_{\text{fill}}$, the unique optimal common lottery is given by \cref{cor:optimal-common-lottery-vfill}. Hence if $1/F$ is strictly convex and filling all positions is infeasible, the mechanism given by \cref{cor:optimal-common-lottery-vfill} is the unique optimal direct mechanism.

\section{Common Ordinal Preferences}\label{sec:common-ordinal-preferences}

In some settings, agents may share a common ordinal ranking of positions, but differ in their cardinal intensities for different positions. For example, in the setting of volunteer assignment, volunteers may agree that some positions are more desirable, but differ in the intensity of their preferences for those positions.\footnote{Similarly, in the assignment of teachers to schools, teachers may agree that certain schools are preferable, but differ in their intensity of this preference.} It may seem that a designer could improve her objective by screening agents on the intensity of this preference. However, our results extend with surprising generality: under an independence assumption on the distribution of agents' outside options, our convexity condition still guarantees that a designer cannot do better than a common lottery. As before, this allows the designer to determine the optimal mechanism.

To model these preferences, we disentangle the quality of positions from the utility that agents obtain from each position. Let $Q = \{q_0,q_1, \ldots, q_{N-1} \} \subset \mathbb{R}$ be the set of available position qualities, with $q_0 < q_1 < \ldots < q_{N-1}$. Type $\gamma \in \Gamma$ parametrizes an agent's utility function, and for simplicity we assume $\Gamma$ is finite. Type $v \in Q$ denotes the quality of an agent's outside option. Both $v$ and $\gamma$ are privately known by the agent. The utility an agent of type $\gamma$ obtains from consuming a position of quality $q_k$ is $u(q_k;\gamma)$. Similarly, the agent's utility from consuming his outside option $v$ is $u(v;\gamma)$. We continue to assume that positions are vertically ordered, that is, $u(\cdot;\gamma)$ is strictly increasing for all $\gamma \in \Gamma$. The introduction of $\gamma$ allows agents to vary in the intensity of their preference for higher-quality preferences.

In our baseline model, we assumed $\Gamma = \{\gamma^0\}$ is a singleton, so all agents shared identical cardinal preferences. In the baseline model, defining $x_k = u(q_k;\gamma^0)$ corresponds to the utility that an agent obtains from consuming a position of quality $q_k$, and defining $\theta_k = u(v;\gamma^0)$ corresponds to an agent's utility from consuming his outside option. Let $H: Q \to [0,1]$ denote the distribution function of the quality of agents' outside options. Then defining $F(\cdot) = H(u^{-1}(\cdot;\gamma^0))$, we have that $F$ is the distribution function for the utility of agent's outside options. This allowed us to work entirely in utility space with $x_k$, $\theta_k$ and $F$. Restated in this model, we obtain a result analogous to \cref{thm:implementable-by-common-lottery}.\footnote{To be precise, our baseline model also considered an evenly-spaced grid of cardinal utility, so $u(q_k;\gamma^0) = \frac{k}{N-1}$. \cref{prop:common-lottery-optimal-uneven-grid} shows that the result of \cref{thm:implementable-by-common-lottery} generalizes to unevenly spaced grids. In the proof of \cref{prop:common-lottery-optimal-uneven-grid}, we find new Lagrange multipliers on the incentive-compatibility constraints, constructed with the same intuition as before, to show that the result still holds. Our proof is nearly identical to that of our main theorem, but with generalized multipliers on our IC constraints. The necessary condition is that $\frac{1}{H(u^{-1}(\cdot; \gamma^0))}$ is convex, because $H(u^{-1}(x_k; \gamma^0))$ plays the same role as $F$ in our original proof: it is the mass of agents who would accept an offer of a position of utility $x_k$.} 

\begin{proposition}\label{prop:common-lottery-optimal-uneven-grid}
  Suppose $\Gamma = \{\gamma^0\}$ and $\frac{1}{H(u^{-1}(\cdot; \gamma^0))}$ is convex on $u(Q; \gamma^0)$. For any feasible direct mechanism $a$, there exists a feasible common lottery $\tilde{a}$ with $\bm{s}(a)=\bm{s}(\tilde{a})$. In particular, for any designer objective function $V$ and agent mass $D$, there exists an optimal mechanism that is a common lottery.
\end{proposition}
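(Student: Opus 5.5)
We follow the constructive proof of \cref{thm:implementable-by-common-lottery}, with the evenly spaced grid replaced by the utilities $x_k \coloneq u(q_k;\gamma^0)$ and outside options $\theta_i \coloneq u(q_i;\gamma^0)$, which are now unevenly spaced. Write $F_k \coloneq H(u^{-1}(x_k;\gamma^0))$ for the mass of agents who accept an offer of utility $x_k$. Let $a$ be any feasible direct mechanism with position masses $s_k = D\sum_{i\le k} a(x_k;\theta_i) f(\theta_i)$. Define the candidate common lottery by $\tilde a(x_k) \coloneq s_k/(D F_k)$, truncated at each type's outside option. By construction $\bm{s}(\tilde a)=\bm{s}(a)$. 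This gives position-feasibility immediately. Ex-post IR holds by truncation. IC holds because every type faces the same lottery and can only reject offers below his outside option. So the only thing left to check is agent-feasibility, $\sum_k \tilde a(x_k)\le 1$.

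Since the lowest type $\theta_0$ accepts every offer, it suffices to show
\[
\sum_{k}\sum_{i\le k}\frac{a(x_k;\theta_i)f(\theta_i)}{F_k}\;\le\; P(\theta_0)\;\le 1 .
\]
Writing $\mathrm{IC}_{i,j}\coloneq \sum_{k\ge i}(x_k-\theta_i)\bigl(a(x_k;\theta_i)-a(x_k;\theta_j)\bigr)\ge 0$, we will establish the decomposition
\[
P(\theta_0)-\sum_{k}\sum_{i\le k}\frac{a(x_k;\theta_i)f(\theta_i)}{F_k}=\sum_{i}\lambda_{i,i+1}\,\mathrm{IC}_{i,i+1}+\sum_{i\ge1}\sum_{j<i}\lambda_{i,j}\,\mathrm{IC}_{i,j},
\]
with all $\lambda\ge 0$. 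The natural guess for the uneven grid replaces second differences of $1/F$ by differences of slopes. Let $\sigma_i \coloneq \bigl(1/F_{i}-1/F_{i-1}\bigr)/(\theta_i-\theta_{i-1})$ be the slope of $1/F$ on $[\theta_{i-1},\theta_i]$. For $j<i$ we set
\[
\lambda_{i,j} = f(\theta_j)\,(\sigma_{i+1}-\sigma_i),
\]
with suitable boundary conventions at $i=N-1$. These multipliers are nonnegative exactly when $1/F$ is convex on $u(Q;\gamma^0)$. The local upward multipliers $\lambda_{i,i+1}$ are then determined recursively from the top type down, so that coefficients match.

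The verification proceeds by comparing, for each pair $(k,j)$, the coefficient of $a(x_k;\theta_j)$ on both sides. The key observation is that $\sum_{i\le k}(\sigma_{i+1}-\sigma_i)(x_k-\theta_i)$ telescopes: it is the value at $x_k$ of a piecewise-linear function whose kinks are the slope changes of $1/F$. It therefore reproduces $1/F_k$ up to an affine term in $x_k$, and that affine term is absorbed by the local upward constraints together with the type's own term. This step, pinning down $\lambda_{i,i+1}$ with the uneven spacings and confirming that they are nonnegative (presumably via a telescoping expression in terms of $f$, $F$ and the spacings, which reduces to a sum of the downward multipliers), is the main obstacle. I would first carry it out for $N=3$ and $N=4$ with general spacings to read off the closed form, and then prove it by induction on the highest index. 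Once the identity holds, nonnegativity of every $\mathrm{IC}_{i,j}$ and every $\lambda$ gives the displayed inequality. Hence $\tilde a$ is feasible, and because $\bm{s}(\tilde a)=\bm{s}(a)$, optimality of a common lottery for any $V$ and $D$ follows as in \cref{thm:implementable-by-common-lottery}.
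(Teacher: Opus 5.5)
Your route is the paper's in substance. The paper checks first-order conditions for the auxiliary program $(*)$, but those conditions are just $D$ times your coefficient identities. Your downward multipliers also coincide with the paper's, since
\[
\frac{1}{(x_{i+1}-x_i)(x_i-x_{i-1})}\left(\frac{x_{i+1}-x_i}{F_{i-1}}-\frac{x_{i+1}-x_{i-1}}{F_i}+\frac{x_i-x_{i-1}}{F_{i+1}}\right)=\sigma_{i+1}-\sigma_i .
\]
The gap is that the proposal stops where the proof begins. You never specify $\lambda_{i,i+1}$ and never verify the decomposition, and that identity is the entire content of the argument. You also misdiagnose the difficulty. No recursion, small-case computation or induction is needed, and positivity of the upward multipliers comes neither from the downward ones nor from convexity.

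To see this, compare coefficients of the diagonal cell $a(x_i;\theta_i)$ for $i\ge 1$. On the left the coefficient is $-f(\theta_i)/F_i$. On the right, the own terms carry weight $x_i-\theta_i=0$, and the constraints $\mathrm{IC}_{m,i}$ with $m>i$ involve only $k\ge m>i$. So only $\mathrm{IC}_{i-1,i}$ contributes, with weight $-\lambda_{i-1,i}(x_i-x_{i-1})$. This forces
\[
\lambda_{i-1,i}=\frac{f(\theta_i)}{F_i\,(x_i-x_{i-1})}>0,
\]
which is the paper's multiplier and is positive unconditionally.

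It remains to check the off-diagonal cells. For $k\ge i\ge 1$, your telescoping gives $\sum_{m=i+1}^{k}\lambda_{m,i}(x_k-x_m)=f(\theta_i)\bigl(1/F_k-1/F_i-\sigma_{i+1}(x_k-x_i)\bigr)$. Combine this with $\sum_{j<i}\lambda_{i,j}=F_{i-1}(\sigma_{i+1}-\sigma_i)$ and $F_{i-1}+f(\theta_i)=F_i$. The right-hand coefficient of $a(x_k;\theta_i)$ then becomes
\[
-\frac{f(\theta_i)}{F_k}+\left(\frac{f(\theta_i)}{F_i}-\lambda_{i-1,i}(x_i-x_{i-1})\right)+(x_k-x_i)\left(\lambda_{i,i+1}-\lambda_{i-1,i}+F_i\sigma_{i+1}-F_{i-1}\sigma_i\right).
\]
The first bracket vanishes by the choice above. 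The second vanishes identically, because $F_i\sigma_{i+1}=-f(\theta_{i+1})/\bigl(F_{i+1}(x_{i+1}-x_i)\bigr)=-\lambda_{i,i+1}$ and likewise $F_{i-1}\sigma_i=-\lambda_{i-1,i}$.

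For $i=0$ the same computation gives $(x_k-x_0)(\lambda_{0,1}+F_0\sigma_1)+1-f(\theta_0)/F_k=1-f(\theta_0)/F_k$, using $f(\theta_0)=F_0$. This matches the left side.

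The boundary convention at $i=N-1$ is immaterial. $\mathrm{IC}_{N-1,j}\equiv 0$, and for $i=N-1$ only $k=N-1$ occurs, so any term involving $\sigma_N$ is multiplied by $x_k-x_{N-1}=0$.

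With these lines added your proof is complete. It is then the paper's argument, written in constructive rather than Lagrangian form.
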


When agents have common ordinal preferences but heterogeneous cardinal preferences, $\Gamma$ is no longer a singleton. To gain traction in this setting, we assume that $\gamma$ and $v$ are independently distributed. The substance of this assumption is that the quality of an agent's outside option is not informative about the relative intensity of his preferences for different qualities. In our leading example of volunteer assignment, the quality of an applicant's outside job or graduate school offer is independent of his intensity of preferences for different volunteering locations. This may be a reasonable approximation in many settings; in this example, a candidate's external job offers may largely depend on exogenous factors.\footnote{In our example of assigning students to course sections, students' outside options are the other courses that are available to take. These are plausibly exogenous to their preference over courses.} We continue to use $H$ to denote the distribution function of agents' outside options: under our assumption, this distribution does not depend on $\gamma$.

In this enriched model, a direct mechanism is a map $a: Q \times \Gamma \to \Delta(Q \cup \{ \varnothing \})$. Let $a(q_k; v, \gamma)$ be the probability that direct mechanism $a$ allocates a position of quality $q_k$ to an agent with outside option of quality $v$ and type $\gamma$.\footnote{The constraints that a direct mechanism must satisfy are analogous to those in our baseline model; we describe these in \cref{app:common-ordinal-preferences}.} Direct mechanism $a$ is a \emph{common lottery} if there exists $\tilde{a} \in \Delta(Q \cup \{\varnothing\})$ such that $a(q_k;v_i, \gamma) = \tilde{a}(q_k) \cdot \mathbbm{1}[v_i \leq q_k]$ for all $v_i, q_k \in Q$ and all $\gamma \in \Gamma$.\footnote{That is, under a common lottery, the lottery offered to each agent does not depend on $\gamma$.} As in our baseline model, one implementation of the common lottery is to offer each agent a position drawn from the an identical lottery (which does not depend on $\gamma$), then allow agents to accept or reject their offered position. 

In principle, a designer could screen agents on their preference intensity $\gamma$. However, \cref{prop:general-ordinal-preferences-common-lotto} shows that under an analogous convexity condition, the designer also does not benefit from screening agents on $\gamma$: one optimal mechanism is a common lottery.

\begin{proposition}\label{prop:general-ordinal-preferences-common-lotto}
    Suppose $\frac{1}{H(u^{-1}(\cdot; \gamma))}$ is convex on $u(Q;\gamma)$ for all $\gamma \in \Gamma$. For any designer objective function $V$ and agent mass $D$, there exists an optimal mechanism that is a common lottery.
\end{proposition}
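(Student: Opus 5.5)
Fix any feasible direct mechanism $a:Q\times\Gamma\to\Delta(Q\cup\{\varnothing\})$ with acceptance vector $\bm{s}(a)$. I will build a feasible common lottery $\tilde a$ with $\bm{s}(\tilde a)=\bm{s}(a)$. Since $V$ depends only on $\bm{s}$ and is upper semicontinuous, the result then follows as in \cref{thm:implementable-by-common-lottery}.

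The plan is to reduce to the singleton case slice by slice in $\gamma$. Let $\mu$ be the distribution of $\gamma$. Independence means the conditional distribution of outside options given $\gamma$ is $H$ for every $\gamma$. For fixed $\gamma$, the restriction $a_\gamma(\cdot;v)\coloneq a(\cdot;v,\gamma)$ is a direct mechanism for a population of mass $D\mu(\gamma)$ with the single utility $u(\cdot;\gamma)$ and outside-option distribution $H$. Restricting the IC constraints to reports $(v',\gamma)$ with the same $\gamma$ gives a subset of the original constraints, so $a_\gamma$ satisfies IC in the singleton model. It also satisfies agent-feasibility and ex-post IR. Position feasibility holds with capacity equal to whatever mass $a_\gamma$ actually assigns, $s^\gamma_k\coloneq D\mu(\gamma)\sum_i a(q_k;v_i,\gamma)h(v_i)$. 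The hypothesis says $1/H(u^{-1}(\cdot;\gamma))$ is convex on $u(Q;\gamma)$. So I apply \cref{prop:common-lottery-optimal-uneven-grid} with $\gamma^0=\gamma$, where the utility grid $u(Q;\gamma)$ is uneven, and obtain a common lottery $\tilde a_\gamma$ that accepts exactly $s^\gamma_k$ at each $k$ and has total offer probability at most one.

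The key observation is that the induced common lottery is pinned down by the masses. Its offer probability is $\tilde a_\gamma(q_k)=s^\gamma_k/(D\mu(\gamma)H(q_k))$, and this does not depend on the cardinal utility $u(\cdot;\gamma)$. Acceptance is ordinal: a type-$(v,\gamma)$ agent accepts $q_k$ iff $v\le q_k$. Now define the pooled lottery
\[
\tilde a(q_k)\coloneq\sum_{\gamma}\mu(\gamma)\,\tilde a_\gamma(q_k)=\frac{s_k(a)}{D\,H(q_k)}.
\]
This is a convex combination of the sub-probability vectors $\tilde a_\gamma$, so $\sum_k\tilde a(q_k)\le1$. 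Offering $\tilde a$ to everyone makes each type accept $q_k$ exactly when $v\le q_k$, so the accepted mass at $q_k$ is $D\,H(q_k)\,\tilde a(q_k)=s_k(a)\le g(q_k)$. That gives position feasibility and preserves $\bm{s}$. IC holds trivially because the lottery does not depend on reports, and ex-post IR holds by truncation.

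The step I expect to require the most care is the application of \cref{prop:common-lottery-optimal-uneven-grid} to each slice. I must check that the proposition's conclusion supplies agent-feasibility of the common lottery for the sub-population with exactly the masses $s^\gamma_k$. The way to do this is to set the capacities in that application equal to $s^\gamma_k$ (or leave them at $g$ and observe that the proof never uses capacity beyond preservation of masses). I also need that the decomposition argument only involves IC constraints among reports with the same $\gamma$, and that independence is what makes the acceptance rates $H(q_k)$ identical across slices, so that pooling preserves the masses. Without independence, the pooled lottery would generate slice-dependent acceptance rates and the masses would not be preserved.
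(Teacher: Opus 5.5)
Your proposal is correct and follows essentially the same route as the paper: restrict the mechanism to each $\gamma$-slice, invoke \cref{prop:common-lottery-optimal-uneven-grid} to replace each slice by a common lottery that preserves its masses $\bm{s}^\gamma$, and then average these lotteries with weights $h^\Gamma(\gamma)$, using independence so that every slice accepts $q_k$ at the same rate $H(q_k)$ and the pooled lottery preserves $\bm{s}(a)$. Your explicit formula $\tilde a(q_k)=s_k(a)/(D\,H(q_k))$ and the convex-combination argument for $\sum_k \tilde a(q_k)\le 1$ make the agent-feasibility of the pooled lottery explicit, which the paper leaves implicit.
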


The intuition for \cref{prop:general-ordinal-preferences-common-lotto} is as follows. Consider any direct mechanism $a$. If $a$ allocates agents of type $\gamma$ to some position allocation $\bm{s}^\gamma$, \cref{prop:common-lottery-optimal-uneven-grid} implies that the designer may offer agents of type $\gamma$ a common lottery and still attain allocation $\bm{s}^\gamma$ for these agents. Hence the designer may attain the same objective as in $a$ by offering a common lottery to each type $\gamma$. Under this new mechanism, the lottery offered to type $\gamma_1$ may differ from the lottery offered to type $\gamma_2$. However, the designer can then construct \emph{one} common lottery by aggregating the lotteries offered to all $\gamma$: she simply averages over the lotteries offered to each $\gamma$. Because $\gamma$ and $\theta$ are distributed independently, the mass of agents who accept a position of quality $q_k$ is identical in the aggregated common lottery.

Once the designer has used \cref{prop:general-ordinal-preferences-common-lotto} to simplify her search for optimal mechanisms to the space of common lotteries, the same techniques we developed in \cref{subsec:optimal-common-lotteries} characterize the optimal mechanism.\footnote{All that is required is to change the ``demand function'' from $1/F$ to be the function describing the mass of agents who will accept a given position $q_k$, that is, $1/H$.}

\section{Conclusion}\label{sec:future}

We study a designer who assigns agents to a mass of vertically differentiated positions, and has preferences over the mass of agents assigned to each position. Under a condition on the distribution of agents' outside options (the convexity of $1/F$), the designer's optimal mechanism is a common lottery. Common lottery mechanisms are particularly simple to implement, and under the convexity condition, we provide conditions to identify the designer's optimal mechanism. We also provide a partial converse: when the convexity condition does not hold, there exist parameters for which no common lottery is optimal.

Future work could investigate optimal mechanisms when the designer values social welfare instead of utilization, potentially with welfare weights for different types. This may have applications for the allocation of government services such as public housing and food assistance. A designer may also maximize a weighted average of utilization and social welfare. Future work could also investigate when agents vary in horizontal preferences and allow for agents with heterogeneous skill levels (observable or unobservable).

At a conceptual level, many other directions may be of interest to an employer in our setting. An employer may be able to alter the desirability of different positions. For example, instead of offering two heterogeneous positions, she could rotate each employee into both positions. In many settings, she could also vary the wages of each position. Can we quantify the benefits of this flexibility for an employer? How should such an employer optimally use this flexibility?

\newpage
\addcontentsline{toc}{section}{References}
\setlength\bibsep{0pt}
\bibliographystyle{apalike}
\bibliography{refs}

\appendix
\crefalias{section}{appendix}
\crefalias{subsection}{appendix}

\section{Proofs and Examples}

\subsection{Example with 3 agents}
This parametric example finds optimal mechanisms for a simple environment with $3$ agents.

Let $N=3$, the objective be again $V_{\text{fill}}$ and consider type distributions
\[f(\theta_0)=\frac{1}{3},f(\theta_1)=\frac{1}{6}-\epsilon,f(\theta_2)=\frac{1}{2}+\epsilon\]
parameterized by $\epsilon\in(-\frac{1}{2},\frac{1}{6})$. Consider also the two mechanisms in \cref{fig:optmech-example}, where we use short-hand notation $a\vee b:=\max\{a,b\}$ and $a\wedge b:=\min\{a,b\}$.
\begin{figure}[htbp]
    \centering
    
    \begin{subfigure}{0.49\textwidth}
        \centering
        \begin{tikzpicture}
        
        \foreach \x in {0,...,3} {
            \foreach \y in {0,...,3} {
            \ifnum\y>\x
                
                \draw[very thin, dashed, gray] (\x,\y-1) -- (\x,\y);
                
                \draw[very thin, dashed, gray] (\x,\y) -- (\x+1,\y);
            \fi
            \ifnum\y=\x
                \ifnum\x<3
                    
                    \draw[very thin, dashed, gray] (\x,\y) -- (\x+1,\y);
                    \draw[very thin, dashed, gray] (\x+1,\y+1) -- (\x+1,\y);
                \fi
            \fi
            }
        }

        \draw[->, thick] (0,0) -- (3.5,0) node[right] {$\theta$};
        \draw[->, thick] (0,0) -- (0,3.5) node[above] {$x$};

        \foreach \t in {0,...,2} {
            \draw[thick] (\t+0.5,-0.1) -- (\t+0.5,0.1) node[below=4pt] {$\theta_{\t}$};
        }

        \foreach \t in {0,...,2} {
            \draw[thick] (-0.1,\t+0.5) -- (0.1,\t+0.5) node[left=4pt] {$x_{\t}$};
        }

            \foreach \x in {1,...,2} {
                \node at (\x+0.5,2.5) {$\tfrac{1}{2}$};
            }

            \foreach \x in {0,...,0} {
                \node at (\x+0.5,1.5) {$1$};
            }
        \end{tikzpicture}
        \caption{Binary-Menu Screening Mechanism}
        \label{fig:optmech-example-binary-appendix} 
    \end{subfigure}
    \begin{subfigure}{0.5\textwidth}
        \centering
        \begin{tikzpicture}
            
            \foreach \x in {0,...,3} {
                \foreach \y in {0,...,3} {
                \ifnum\y>\x
                    
                    \draw[very thin, dashed, gray] (\x,\y-1) -- (\x,\y);
                    
                    \draw[very thin, dashed, gray] (\x,\y) -- (\x+1,\y);
                \fi
                \ifnum\y=\x
                    \ifnum\x<3
                        
                        \draw[very thin, dashed, gray] (\x,\y) -- (\x+1,\y);
                        \draw[very thin, dashed, gray] (\x+1,\y+1) -- (\x+1,\y);
                    \fi
                \fi
                }
            }

            \draw[->, thick] (0,0) -- (3.5,0) node[right] {$\theta$};
            \draw[->, thick] (0,0) -- (0,3.5) node[above] {$x$};

            \foreach \t in {0,...,2} {
                \draw[thick] (\t+0.5,-0.1) -- (\t+0.5,0.1) node[below=4pt] {$\theta_{\t}$};
            }

            \foreach \t in {0,...,2} {
                \draw[thick] (-0.1,\t+0.5) -- (0,\t+0.5) node[left=4pt] {$x_{\t}$};
            }

            \node at (0+0.5,2.5) {$\tfrac{1}{3}$};
            \node at (1+0.5,2.5) {$\tfrac{1}{3}$};
             \node at (2+0.5,2.5) {$\tfrac{1}{3}$};

            \node at (0+0.5,1.5) {{\tiny$\tfrac{2}{3-6\epsilon}\wedge\tfrac{2}{3}$}};
            \node at (1+0.5,1.5) {{\tiny$\tfrac{2}{3-6\epsilon}\wedge\tfrac{2}{3}$}};
            \node at (0+0.5,0.5) {{\tiny $0\vee \tfrac{4(-\epsilon)}{3-6\epsilon}$}};
        \end{tikzpicture}
        \caption{Common Lottery Mechanism}
        \label{fig:optmech-example-commonlotto-appendix} 
    \end{subfigure}
    
    \caption{Optimal Mechanism Example}
    \label{fig:optmech-example}
\end{figure}
We see that when $\epsilon>0$, the binary-menu screening mechanism in the left panel achieves a higher objective value of $\frac{2}{3}$,
while for $\epsilon<0$ the common lottery mechanism from the right panel attains a higher value, of $\frac{2}{3}-\epsilon \frac{4}{9-18\epsilon}>\frac{2}{3}$.\footnote{We derive the value of the binary-menu screening mechanism as $\frac{1}{2}(\frac{1}{6}+\frac{1}{2})+\frac{1}{3}=\frac{2}{3}$, while the value of the common lottery mechanism is similarly found to be $\frac{2}{3}-\epsilon\cdot k,$
where $k=2/3, \text{ if } \epsilon>0$, and $k=\frac{4}{9-18\epsilon}$ otherwise.}
The mechanism designer is indifferent between the two mechanisms when $\epsilon=0$. As it turns out, these mechanisms are also optimal overall in the respective cases. 

\subsection{Proof of Theorem \ref{thm:implementable-by-common-lottery}}\label{proof:implementable-by-common-lottery} 

\begin{proof} 
(\cref{thm:implementable-by-common-lottery}.)
Fix the number of positions/types $N$. Throughout, we zero-index our types and positions.  We show that an arbitrarily chosen feasible solution $a(\cdot;\cdot)$ can be turned into a common lottery as follows. 

Fix a position $x_k$ and recall that $a(x_k;\theta_i)=0$ for types $\theta_i>x_k$.
We will equally distribute the initial mass across all types that accept the offer of $x_k$. 
Formally, we replace the initial lottery with a common lottery $\tilde{c}$ obtained from the degenerate Mean-Preserving Contraction of the initial probabilities $a(x_k;\cdot)$: 
\begin{equation}
    \tilde{c}(x_k) \coloneq \sum_{j=0}^k a(x_k;\theta_j) \frac{f(\theta_j)}{F(\theta_k)} .
\end{equation}
Denote the equivalent DRM of this common lottery by $\tilde{a}$:
$$\tilde{a}(x_k;\theta_i) = 
    \begin{cases}
    \tilde{c}(x_k), & \theta_i \leq x_k \\
    0 , & \theta_i > x_k
\end{cases}.$$

Observe that (i) $\tilde{a}$ allocates the same mass of agents to each position as $a$ does, and so also does not violate position capacity constraints, (ii) $\tilde{a}$ is a common lottery, so it is IC, (iii) $\tilde{a}$ is ex-post IR.

    To prove our conjecture, it remains to show that $\tilde{a}$ does not violate the agent-mass constraints, that is, each type's probability of allocation under allocation rule $\tilde{a}$, $\tilde{P}(\theta_i)$, is at most $1$. 
    Since $\tilde{P}(\theta_i)$ is non-increasing  by construction\footnote{Note this is consistent with our necessary (MON) condition for IC allocations.}, it suffices to show that $\tilde{P}(\theta_0)\leq 1$. 
    We show that our condition and the IC constraints imply this.
    
    Formally, we show something stronger: 
    \begin{equation}
        \tilde{P}(\theta_0) = \sum_{k=0}^{N-1} \tilde{c}(x_k) \leq \sum_{k=0}^{N-1} a(x_k;\theta_0) \leq 1,
    \end{equation}
    where the final inequality follows because the given allocation $a$ is feasible. 
    So it suffices to show the first inequality above, that 
    \begin{equation}\label{eqn:stretching-sufficient-condition}
         \sum_{k=0}^{N-1} a(x_k;\theta_0)  - \sum_{k=0}^{N-1} \sum_{j=0}^k a(x_k;\theta_j)\frac{f(\theta_j)}{F(\theta_k)} \geq 0.
    \end{equation}
    
    To show this, consider our IC constraints for allocation $a$
    \begin{align*}
        \sum_{k=i}^{N-1} h (k-i) (a(x_k;\theta_i)-a(x_k;\theta_{j})) & \geq 0\quad (\text{IC}_{i,j})
    \end{align*}
    with the following coefficients, which also correspond to our Lagrange multipliers, on the relevant domains:
    \begin{align*}
        \lambda_{i,i+1} &\coloneq \frac{f(\theta_{i+1})}{F(\theta_{i+1})} & i \in \{0, 1, \dots, N-2\}\\
        \lambda_{i,j} &\coloneq f(\theta_j)\left(\frac{1}{F(\theta_{i-1})}-\frac{2}{F(\theta_i)}+\frac{1}{F(\theta_{i+1})}\right) & \; \; i\in\{1,2, \dots, N-1\}, j < i \\
        \lambda_{i,j} & \coloneq 0 & \text{otherwise}
    \end{align*}
    The first set are the multipliers on the local upward constraints; the second set are the multipliers on the global downward constraints. For simplicity of notation, we will refer to $\text{IC}_{i,j}$ as the incentive compatibility constraint for type $\theta_i$ not to mimick $\theta_j$, \textit{all taken to the left-hand side and scaled by }$\cdot (N-1)$.\footnote{For example with types $\{0,1/2,1\}$ we will have
\[\text{IC}_{0,1}\coloneq(N-1)\left(\frac{1}{2}(a(x_{1};\theta_0)-a(x_{1};\theta_{1}))+(a(x_2;\theta_0)-a(x_2;\theta_1))\right)|_{N=3}=a({x_{1}};\theta_0)-a(x_{1};\theta_{1})+2(a(x_2;\theta_0)-a(x_2;\theta_1)).\]}
    
    To complete our proof, we show that the left-hand side of inequality \eqref{eqn:stretching-sufficient-condition} is equal to
    \begin{equation}\label{eqn:weighted-IC-constraint-sum}
        \sum_{i=0}^N \sum_{j=0}^N \lambda_{i,j} \cdot (\text{IC}_{i,j}) = \sum_{i=0}^{N-2} \lambda_{i,i+1}  \cdot (\text{IC}_{i,i+1}) + \sum_{i=1}^{N-1} \sum_{j=0}^{i-1} \lambda_{i,j} \cdot (\text{IC}_{i,j})
    \end{equation}
    which is non-negative if each coefficient $\lambda_{i,j}$ we defined is non-negative; $\lambda_{i,i+1}$ is always non-negative, and $\lambda_{i,j}$ is non-negative exactly when our condition holds.\footnote{We show when this condition is necessary for the optimality of a common lottery in  \cref{thm:partial-converse}, our partial converse.}
    We may plug in our IC constraints above, then it remains to check that the coefficients on each term $a(x_k;\theta_i)$ are identical in the LHS of inequality \eqref{eqn:stretching-sufficient-condition} and \cref{eqn:weighted-IC-constraint-sum}. First we consider $i > 0$. Let $\mu(x_k;\theta_i)$ denote the coefficient on $a(x_k;\theta_i)$ in \cref{eqn:weighted-IC-constraint-sum}. Then we have, grouping our terms,

    \begin{align*}
        \mu(x_k;\theta_i) =(k-i)\lambda_{i,i+1} - (k-i+1)\lambda_{i-1,i}+\sum_{j=0}^{i-1}(k-i)\lambda_{i,j} - \sum_{j=i+1}^k (k-j) \lambda_{j,i}
    \end{align*}
    Note that this is the exact expression to show that each constraint in the dual problem is satisfied, if we take a duality approach.

    Plugging in our expressions for the multipliers, the third term immediately simplifies, because the dependence of $\lambda_{i,j}$ on $j$ enters only through $f(\theta_j)$. The fourth term has a nice telescoping form, and simplifies to $-f(\theta_i)\left(\frac{k-i-1}{F(\theta_i)}-\frac{k-i}{F(\theta_{i+1})}+\frac{1}{F(\theta_k)}\right)$. Then plugging in all our multipliers, the entire expression becomes 
    \begin{align}
        \mu(x_k;\theta_i) &\coloneq(k-i)\frac{f(\theta_{i+1})}{F(\theta_{i+1})} - (k-i+1)\frac{f(\theta_{i})}{F(\theta_{i})}+(k-i) F(\theta_{i-1})\left(\frac{1}{F(\theta_{i-1})}-\frac{2}{F(\theta_i)}+\frac{1}{F(\theta_{i+1})}\right) \nonumber \\
        & \quad -f(\theta_i)\left(\frac{k-i-1}{F(\theta_i)}-\frac{k-i}{F(\theta_{i+1})}+\frac{1}{F(\theta_k)}\right) \label{eqn:implementable-by-common-lottery-foc-i-positive} \\ \nonumber \\
        &= - \frac{f(\theta_i)}{F(\theta_k)}, \nonumber
    \end{align}

    which is exactly the coefficient of $a(x_k;\theta_i)$ in \cref{eqn:stretching-sufficient-condition} for $i>0$, as desired.

    For $i=0$, we no longer have the IC constraint $\text{IC}_{i-1,i}$ and no constraints $\text{IC}_{i,j}$ for $j<i$. 
    Therefore our coefficient $\mu(x_k;\theta_0)$ is the same expression, but with the second and third terms omitted. 
    This is then
    \begin{align}
        \mu(x_k;\theta_0) & = k\frac{f(\theta_{1})}{F(\theta_{1})} -f(\theta_0)\left(\frac{k-1}{F(\theta_0)}-\frac{k}{F(\theta_{1})}+\frac{1}{F(\theta_k)}\right) \nonumber \\ \nonumber \\
        & = k \left( \frac{f(\theta_0)+f(\theta_1)}{F(\theta_1)} \right) - (k-1)\frac{f(\theta_0)}{F(\theta_0)} - \frac{f(\theta_0)}{F(\theta_k)} \label{eqn:implementable-by-common-lottery-foc-i-zero} \\
        & = 1 - \frac{f(\theta_0)}{F(\theta_k)}, \nonumber
    \end{align}
    which again matches the coefficients in \cref{eqn:stretching-sufficient-condition}. This concludes the proof of the second sentence of the theorem.
    
    The existence of an optimal mechanism follows from the upper semicontinuity of $V$ and the compactness of our choice set $a$. This implies the final sentence of the theorem.
\end{proof}

\subsection{Duality Proof of \cref{thm:implementable-by-common-lottery}}\label{proof:alternative-implementable-by-common-lotto}

In this section, we offer an alternative proof of \cref{thm:implementable-by-common-lottery} which provides additional intuition. We show the optimality of a common lottery is equivalent to a common lottery solving an auxiliary linear program, then provide interpretable shadow prices on this new linear program.

\begin{proof}(\cref{thm:implementable-by-common-lottery}).
    Consider a vector of \emph{target position masses}, $\bm{s}$, where $s_k$ is the desired mass of agents to allocate to position $k$. For a target position masses $\bm{s}$, we seek the direct mechanism $a$ and the total mass of agents $D$ such that $a$ allocates $\bm{s}$ mass of agents to each position when the total mass of agents is $D$. We seek such a pair so that $D$ is minimized, subject to all other IC and feasibility constraints as well. That is, we aim to fill a target mass of positions while minimizing the mass of agents necessary to fill this mass. 
    
    If we can show that a common lottery $a$ is always an optimum for this problem, regardless of the target position masses $\bm{s}$, then a common lottery will be optimal for any $D$ and objective $V$. This is because if the designer has some other direct mechanism $\tilde{a}$ which attains position masses $\bm{s}$, there exists some common lottery mechanism $a$ which obtains the same position masses $\bm{s}$, and hence the same designer objective $V(\bm{s})$.

    So it suffices to show that if $1/F$ is convex on $\Theta$, then for any $\bm{s}\in\mathbb{R}_+^N$, some common lottery $a$ is a solution to the following program:

    $$
    \min_{D\in \mathbb{R}_+, \; a: \Theta \to \Delta(X \cup \{\varnothing\})} \{ D \}, 
    $$
    subject to constraints:
    \begin{align}
        \sum_{k=i}^{N-1} (x_k - \theta_i)\, a(x_k;\theta_i) 
        & \geq \sum_{k=i}^{N-1} (x_k - \theta_i)\, a(x_k;\theta_j),
        \qquad \forall i,j 
        \tag{$\text{IC}_{i,j}$} \\[6pt]
        D \sum_{i=0}^{k} a(x_k;\theta_i) f(\theta_i) 
        & = s_k,
        \qquad \forall k 
        \tag{Position-Target} \\[6pt]
        \sum_{k=0}^{N-1} a(x_k;\theta_i) 
        & \leq 1,
        \qquad \forall i
        \tag{Agent-Feasibility}
    \end{align}

    Note that we have dropped the Ex-Post IR constraint and incorporated it into the Position-Target constraint, as discussed in the original model section.

    We may then make several adjustments to further simplify the program. First, by \cref{prop:MON}, the IC constraints imply that $Pr(\theta_i) = \sum_{k=0}^{N-1} a(x_k;\theta_i)$ is decreasing in $i$. Hence in the Agent-Feasibility constraints, it suffices only to constrain $Pr(\theta_0)\leq 1$, and we may drop the other constraints as redundant. Second, we will consider the relaxed problem in which the Position-Target fills position $k$ with at least a mass $s_k$ of agents; we will show that a common lottery still solves this relaxed problem, and satisfies the more demanding original Position-Target constraint, and hence solves the original problem. Finally, we divide the Position-Target constraint by $D$ on both sides, which yields an equivalent constraint.\footnote{This is done so that this constraint is clearly expressed as a convex function. We also must have $D >0$ in any solution.} Our relaxed problem, which we refer to as $(*)$, is therefore:

    $$
    \min_{D\in \mathbb{R}_{++}, \; a: \Theta \to \Delta(X \cup \{\varnothing\})} \{ D \}, 
    $$
    subject to constraints:
    \begin{align}
        \sum_{k=i}^{N-1} (x_k - \theta_i)\, a(x_k;\theta_i) 
        & \geq \sum_{k=i}^{N-1} (x_k - \theta_i)\, a(x_k;\theta_j),
        \qquad \forall i,j 
        \tag{$\text{IC}_{i,j}$} \\[6pt]
        \sum_{i=0}^{k} a(x_k;\theta_i) f(\theta_i) 
        & \geq \frac{s_k}{D},
        \qquad \forall k 
        \tag{Position-Target} \\[6pt]
        \sum_{k=0}^{N-1} a(x_k;\theta_0)\ 
        & \leq 1
        \qquad
        \tag{Agent-Feasibility}
    \end{align}

    Note that we can now interpret the Position-Target constraint in terms of probability. The probability that an agent is assigned to position $x_k$ must be at least $s_k/D$, in order to fill the target masses of positions. We may then show the desired multipliers for the Lagrangian of $(*)$. The Lagrangian is

    \begin{align*}
        L(D,a) & = D \\
        & + \sum_{k=0}^{N-1} \lambda_{POS_k} \left( \frac{s_k}{D} - \sum_{i=0}^k a(x_k;\theta_i) f(\theta_i) \right) \\
        & + \lambda_{AGE_0} \left( \sum_{k=0}^{N-1} a(x_k;\theta_0) - 1 \right) \\
        & + \sum_{i=0,j\neq i}^{N-1} \lambda_{\text{IC}_{ij}} \sum_{k=i}^{N-1}(x_k-\theta_i)(a(x_k;\theta_j) - a(x_k;\theta_i))
    \end{align*}

    where $\lambda_{POS_k}$ represents the multiplier on the position-target constraint for position $k$, and $\lambda_{AGE_0}$ is the multiplier on the agent feasibility constraint, and $\lambda_{\text{IC}_{ij}}$ is the multiplier on the IC constraint between type $i$ and $j$. We suppress notation and use $\lambda_{ij}$ to denote $\lambda_{\text{IC}_{ij}}$. 

    In a common lottery, the necessary mass of agents $D$ and the expressions for $a$ are fully determined by feasibility constraints. These terms are

    \begin{align*}
        D & = \sum_{k=0}^{N-1} \frac{s_k}{F(x_k)} \\
        a(x_k;\theta_i) & = \begin{cases}
            \frac{s_k}{D \cdot F(x_k)} & \text{ if } \theta_i \leq x_k \\
            0 & \text{ otherwise}
        \end{cases}\\
    \end{align*}

    This allocation immediately satisfies the IC constraints as it is a common lottery. By construction, it also satisfies the Position-Target and Agent-Feasibility constraints. Our conjectured multipliers to validate the optimality of this common lottery are:

    \begin{align*}
        \lambda_{POS_k} & = \frac{D}{F(\theta_k)} & \forall k\\
        \lambda_{AGE_0} & = D &\\
        \lambda_{\text{IC}_{i,i+1}} &= \frac{f(\theta_{i+1})}{F(\theta_{i+1})} \cdot D \cdot (N-1) & i \in \{0, 1, \dots, N-2\}\\
        \lambda_{\text{IC}_{i,j}} &= f(\theta_j)\left(\frac{1}{F(\theta_{i-1})}-\frac{2}{F(\theta_i)}+\frac{1}{F(\theta_{i+1})}\right) \cdot D \cdot (N-1) & \; \; i\in\{1,2, \dots, N-1\}, j < i \\
        \lambda_{\text{IC}_{i,j}} &= 0 & \text{otherwise}
    \end{align*}

    If $1/F$ is convex, then all conjectured Lagrange multipliers are positive. We provide an interpretation of these multipliers as shadow prices following the proof. Fixing any multipliers, $L$ is convex in $D$ and $a(x_k;\theta_i)$. To verify optimality, it therefore remains to check the first-order conditions on $D$ and $a(x_k;\theta_i)$. The first-order condition on $D$ is straightforward:

    \begin{align*}
        \frac{\partial}{\partial D} L(D,a) & = 1 - \frac{1}{D^2}\sum_{k=0}^{N-1} \lambda_{POS_k} \cdot s_k \\
        & = 1 - \frac{1}{D^2}\sum_{k=0}^{N-1} \frac{D}{F(\theta_k)} \cdot s_k\\
        & = 1 - \frac{1}{D}\sum_{k=0}^{N-1} \frac{s_k}{F(\theta_k)}\\
        & = 0
    \end{align*}
    
    Verifying the optimality of the $a$ terms in the Lagrangian first-order condition uses the same conditions as in \cref{proof:implementable-by-common-lottery}. For $i > 0$ the Lagrangian first-order condition is

    \begin{align*}
         \frac{\partial}{\partial  a(x_k;\theta_i)} L(D,a) & = -\lambda_{POS_k} \cdot f(\theta_i) - \lambda_{i,i+1} (x_k-\theta_i) - \sum_{j=0}^{i-1} \lambda_{i,j} (x_k-\theta_i) \\
         & \quad + \lambda_{i-1,i} (x_k-\theta_{i-1}) + \sum_{j=i+1}^{k} \lambda_{j,i} (x_k - \theta_j)
    \end{align*}

    The first two IC terms represent the IC constraints from type $\theta_i$, which increasing $a(x_k;\theta_i)$ loosens by increasing utility for type $\theta_i$. The latter two IC terms represent the IC constraints towards type $\theta_i$, which increasing $a(x_k;\theta_i)$ tightens. By the exact same computations as in \cref{proof:implementable-by-common-lottery}, specifically \cref{eqn:implementable-by-common-lottery-foc-i-positive}, this expression is indeed zero and the first-order condition is satisfied.

    The case for $i = 0$ is almost identical, except there is an agent-feasibility constraint and no upward IC constraint towards $i=0$. Thus the first-order condition is

    \begin{align*}
         \frac{\partial}{\partial  a(x_k;\theta_0)} L(D,a) & = - \lambda_{POS_k} \cdot f(\theta_i) - \lambda_{i,i+1} (x_k-\theta_i) + \lambda_{AGE_0} + \sum_{j=i+1}^{k} \lambda_{j,i} (x_k - \theta_j)
    \end{align*}

    By the same computations as in \cref{proof:implementable-by-common-lottery} again, specifically \cref{eqn:implementable-by-common-lottery-foc-i-zero}, this first-order condition indeed equals zero and the first-order condition is satisfied.
\end{proof}

An important contribution of this proof is to offer intuition for the shadow prices $\lambda_{ij}$ on the incentive constraints. We describe here an intuition for the shadow prices on the IC constraints, which also provides insight into the origin of the $1/F$ convexity constraint.

First, it should be intuitive that the shadow price on the Position-Target constraint for position $x_k$ is proportional to $\frac{1}{F(x_k)}$. This is because to obtain an additional mass $\varepsilon$ allocated to position $x_k$ in a common lottery, we must offer this position to an additional $\frac{\varepsilon}{F(x_k)}$ of agents, since only $F(x_k)$ of these agents will accept the position. Because we have scaled the constraint by $D$, the shadow price is $\frac{D}{F(x_k)}$ 

Next, for $\lambda_{\text{AGE}_0}$, observe that if we reduce the probability that type $\theta_0$ can be allocated by $\varepsilon$, then to maintain a common lottery we must pick some position $x_k$ and reduce $a(x_k;\theta_i)$ by $\varepsilon$ for all $\theta_i \leq x_k$. This reduces the probability that a randomly selected agent is assigned to position $x_k$ by $F(\theta_k)$. Given that $\lambda_{POS_k}  = \frac{D}{F(\theta_k)}$ as described above, we have the shadow price $\lambda_{\text{AGE}_0} = F(\theta_k) \cdot \frac{D}{F(\theta_k)} = D$.

Next, for the IC constraints, our earlier results in \cref{prop:redundant-ICs} imply that it suffices to focus our attention on the local upward and global downward IC constraints. For the shadow price on the local upward constraints, the intuition is that relaxing the IC constraint from $\theta_i$ to $\theta_{i+1}$ allows the designer to slightly increase the lottery allocation $a(x_{i+1};\theta_{i+1})$. \cref{fig:shadow-price-local-upward-perturbation} displays this perturbation, with the $\bm{+}$ representing the additional mass. When the IC constraint is relaxed, this perturbation is permissible as all types below $\theta_i$ have some slack in their IC constraint to $\theta_{i+1}$, and all types above $\theta_{i+1}$ are unaffected by the perturbation. When the IC constraint is relaxed by $\varepsilon$, we can increase $a(x_{i+1};\theta_{i+1})$ by $\varepsilon\cdot (N-1)$.\footnote{This is because the difference in agent utility between $x_{i+1}$ and $x_i$ is $1/(N-1)$.} The perturbation therefore adds an additional probability $f(\theta_{i+1}) \cdot  \varepsilon \cdot (N-1)$ allocated to position $x_{i+1}$, which buys slack on the Position-Target constraint for $x_{i+1}$. As the shadow price for $POS_{i+1}$ is $\frac{D}{F(x_{i+1})}$, the overall shadow price of this local upward constraint is $\frac{f(\theta_{i+1})}{F(x_{i+1})}\cdot D\cdot (N-1)$.

    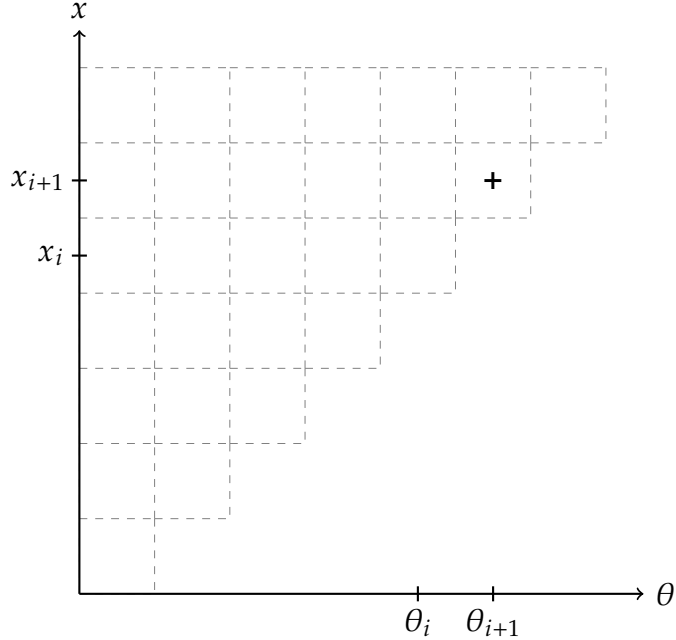
\begin{figure}[h]
    \centering
    \begin{tikzpicture}
      
      \foreach \x in {0,...,7} {
        \foreach \y in {0,...,7} {
          \ifnum\y>\x
            
            \draw[very thin, dashed, gray] (\x,\y-1) -- (\x,\y);
            
            \draw[very thin, dashed, gray] (\x,\y) -- (\x+1,\y);
          \fi
          \ifnum\y=\x
            \ifnum\x<7
                
                \draw[very thin, dashed, gray] (\x,\y) -- (\x+1,\y);
                \draw[very thin, dashed, gray] (\x+1,\y+1) -- (\x+1,\y);
            \fi
          \fi
        }
      }

      \draw[->, thick] (0,0) -- (7.5,0) node[right] {$\theta$};
      \draw[->, thick] (0,0) -- (0,7.5) node[above] {$x$};

      \draw[thick] (4.5,-0.1) -- (4.5,0.1) node[below=4pt] {$\theta_i$};
      
      \draw[thick] (5.5,-0.1) -- (5.5,0.1) node[below=4pt] {$\theta_{i+1}$};

      \draw[thick] (-0.1,4.5) -- (0.1,4.5) node[left=4pt] {$x_i$};
      
      \draw[thick] (-0.1,5.5) -- (0.1,5.5) node[left=4pt] {$x_{i+1}$};

    \node at (5.5,5.5) {$\bm{+}$};
     
    \end{tikzpicture}
    \caption{Perturbation for Shadow Price $\lambda_{\text{IC}_{i,i+1}}$}
    \label{fig:shadow-price-local-upward-perturbation}
    \end{figure}

The intuition for the shadow price on downward IC constraints follows from the convexity of $1/F$, and manipulation of mean-preserving spreads and mean-preserving contractions. If $1/F$ is convex, then if $\bm{s}$ was replaced by a mean-preserving contraction of $\bm{s}$, the designer's objective would improve. That is, using a smaller mass of available agents, the designer can fill the same quantity of positions, but with a mean-preserving contraction of the distribution of which positions are filled.

To maintain the same mass of positions filled while maintaining the original position targets, she could first perform the mean-preserving contraction described above, then perform a mean-preserving spread of the lottery allocated to type $\theta_j$. In general, this violates the IC constraints. However, when we relax the IC constraint $\text{IC}_{i,j}$ for some type $\theta_i$, such a mean-preserving spread becomes possible. \cref{fig:perturbation-shadow-price-downward-IC} displays this perturbation.

    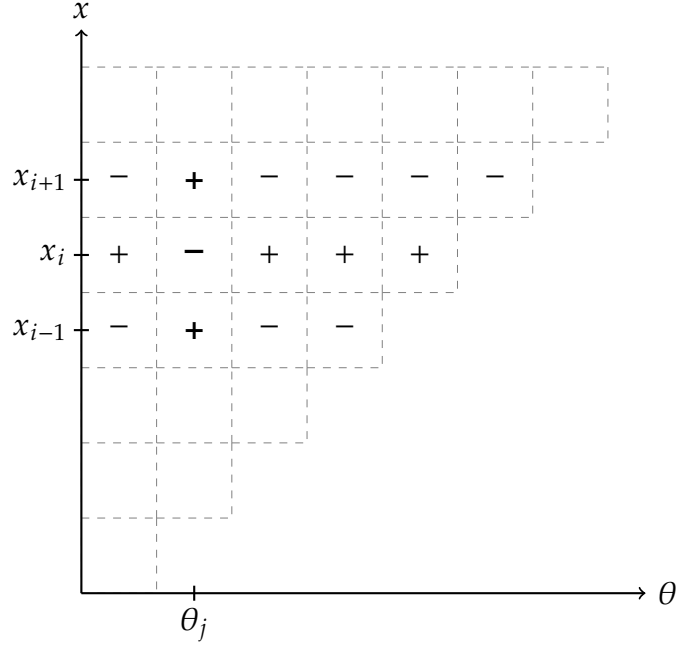
\begin{figure}[h]
    \centering
    \begin{tikzpicture}
      
      \foreach \x in {0,...,7} {
        \foreach \y in {0,...,7} {
          \ifnum\y>\x
            
            \draw[very thin, dashed, gray] (\x,\y-1) -- (\x,\y);
            
            \draw[very thin, dashed, gray] (\x,\y) -- (\x+1,\y);
          \fi
          \ifnum\y=\x
            \ifnum\x<7
                
                \draw[very thin, dashed, gray] (\x,\y) -- (\x+1,\y);
                \draw[very thin, dashed, gray] (\x+1,\y+1) -- (\x+1,\y);
            \fi
          \fi
        }
      }

      \draw[->, thick] (0,0) -- (7.5,0) node[right] {$\theta$};
      \draw[->, thick] (0,0) -- (0,7.5) node[above] {$x$};

      \draw[thick] (1.5,-0.1) -- (1.5,0.1) node[below=4pt] {$\theta_j$};

      \draw[thick] (-0.1,4.5) -- (0.1,4.5) node[left=4pt] {$x_i$};
        
      \draw[thick] (-0.1,5.5) -- (0.1,5.5) node[left=4pt] {$x_{i+1}$};
      
      \draw[thick] (-0.1,3.5) -- (0.1,3.5) node[left=4pt] {$x_{i-1}$};

      \foreach \t in {0,...,5} {
        \ifnum\t=1
          \node at (1.5,5.5) {$\bm{+}$};
        \else
          \node at ({\t + 0.5},5.5) {$-$};
        \fi
      }
      
      \foreach \t in {0,...,4} {
        \ifnum\t=1
          \node at (1.5,4.5) {$\bm{-}$};
        \else
          \node at ({\t + 0.5},4.5) {$+$};
        \fi
      }
      
      \foreach \t in {0,...,3} {
        \ifnum\t=1
          \node at (1.5,3.5) {$\bm{+}$};
        \else
          \node at ({\t + 0.5},3.5) {$-$};
        \fi
      }
    \end{tikzpicture}
    \caption{Perturbation for Shadow Price $\lambda_{\text{IC}_{i,j}}$}
    \label{fig:perturbation-shadow-price-downward-IC}
    \end{figure} 

Suppose the IC constraint $\text{IC}_{i,j}$ is relaxed, so that this mean-preserving spread can move probability $\varepsilon$ from $a(x_{i};\theta_j)$ to $a(x_{i+1};\theta_j)$, and move probability $\varepsilon$ from $a(x_{i};\theta_j)$ to $a(x_{i-1};\theta_j)$. These correspond to a mass $f(\theta_j) \cdot \varepsilon$ moved from position target $s_{i}$ to $s_{i+1}$, and the same mass moved from position target $s_{i}$ to $s_{i-1}$. This means to attain her position targets, the designer needs the remaining agents to fill fewer $x_{i+1}$ and $x_{i-1}$ positions, but to fill more $x_{i}$ positions. The former effect means the designer now requires a mass $ \varepsilon f(\theta_j) \left(\frac{1}{F(x_{i+1})} + \frac{1}{F(x_{i-1})} \right)$ less of agents; the latter effect means she needs an additional $\varepsilon f(\theta_j) \left(\frac{2}{F(x_{i})} \right)$ mass of agents. The sum of these effects is proportional to the shadow price and, when $1/F$ is convex, implies a smaller agent mass is needed to achieve the target position masses $s$.

\subsection{Proof of \cref{thm:partial-converse}}

\begin{proof}
    (\cref{thm:partial-converse}.)
    We will show that there is an improvement when $V$ is strictly increasing, for a particular range of $D$. 
    This uses a variational argument: we begin with the optimal common lottery, then show that if the condition is not satisfied, we can find a perturbation of the common lottery that improves the objective while still satisfying all constraints.
    
    If the condition is not satisfied, then there exists some $k$ (with $k > 0$ and $k < N-1$) such that $\frac{1}{F(\theta_{k})}-\frac{1}{F(\theta_{k-1})} > \frac{1}{F(\theta_{k+1})}-\frac{1}{F(\theta_k)}$. 
    Hence $\frac{1}{F(\theta_{k-1})} - \frac{2}{F(\theta_{k})} + \frac{1}{F(\theta_{k+1})} < 0$. 
    (Incidentally, one can notice that this is one of the multipliers on an inequality in our proof of the direct result; that is why the direct proof does not go through.)

    We then consider the \emph{position} $x_k$, and some type $i < k$. 
    There exists an intermediate interval of $D$ such that the optimal common lottery $a$ satisfies $a(x_{k-1};\theta_{i}), a(x_{k};\theta_{i}), a(x_{k+1};\theta_{i}) > 0$, but $a$ does not fill all positions.\footnote{We will eventually formalize that there is an open interval of such D. Such $D$ require that the following do not hold: 1) zero mass of agents gets allocated at the optimum to the kth position and 2) non-satiation in positions (imbalance of the market towards one side: scarcity of agents).}
    
    Our perturbation will be to decrease $a(x_{k-1};\theta_{i})$, decrease $a(x_{k+1};\theta_{i})$, and increase $a(x_{k};\theta_{i})$, then perform compensating shifts in other allocation cells, in a way that still respects all IC constraints, agent mass constraints, and supply constraints. 
    We will find that $\frac{1}{F(\theta_{k-1})} - \frac{2}{F(\theta_{k})} + \frac{1}{F(\theta_{k+1})} < 0$ exactly implies that this perturbation improves the objective. 
    Hence we will conclude that the optimal common lottery is not optimal for this objective, and hence no common lottery is optimal.

    Our perturbation will be as follows. 
 \cref{fig:perturbation-partial-converse} displays which probabilities increase and decrease in our allocation. 
    First we begin with an intermediate perturbation, with a new allocation which we denote $\tilde{a}$. 
    For some sufficiently small $\varepsilon > 0$ we may choose:
    \begin{align*}
        \tilde{a}(x_k;\theta_i) &= a(x_k;\theta_i) + 2\varepsilon - 2\varepsilon\frac{f(\theta_i)}{F(\theta_k)} & \\
        \tilde{a}(x_{k-1};\theta_i) &= a(x_{k-1};\theta_i)-\varepsilon + \varepsilon \frac{f(\theta_i)}{F(\theta_{k-1})}& \\
        \tilde{a}(x_{k+1};\theta_i) &= a(x_{k+1};\theta_i) - \varepsilon + \varepsilon\frac{f(\theta_i)}{F(\theta_{k+1})} & \\
        \tilde{a}(x_{k};\theta_j) & = a(x_k;\theta_j) - 2\varepsilon\frac{f(\theta_i)}{F(\theta_k)} & \text{for }  j \leq k, j \neq i \\
        \tilde{a}(x_{k-1};\theta_j) &= a(x_{k-1};\theta_j) +\varepsilon \frac{f(\theta_i)}{F(\theta_{k-1})} & \text{for } j \leq k-1, j \neq i \\
        \tilde{a}(x_{k+1};\theta_j) &= a(x_{k+1};\theta_j) + \varepsilon\frac{f(\theta_i)}{F(\theta_{k+1})} & \text{for } j \leq k+1, j \neq i\\
        \tilde{a}(x_{k'};\theta_{j}) & = a(x_{k'};\theta_j)  & \text{for } k' \neq k, k-1, k+1, j \leq k'
    \end{align*}

    The intuition here is that we compress mass onto the cell $(x_k;\theta_i)$ by taking from the cells $(x_{k-1};\theta_i)$ and $(x_{k+1};\theta_i)$. 
    Without any further adjustments, this would violate the capacity constraint of position $x_{k}$ and underfill the positions $x_{k-1}$ and $x_{k+1}$. 
    So we redistribute the transferred mass across all other types allocated to these positions in the common lottery, using the correction terms that are the last terms in each expression. 
    Summing across all allocation terms, this allocation exactly preserves the amount of positions filled. 
    Indeed, it fills each position with the exact same mass of agents as the previous allocation. 
    (In the final step of the proof, we will make a final perturbation to strictly improve the amount of positions filled.)

    We then verify that allocation $\tilde{a}$ satisfies all constraints in our direct revelation mechanism. 
    In particular, we must verify that it satisfies the position capacity constraints, the agent mass constraints, and the IC constraints.

    Because $\tilde{a}$ fills each position with the same mass of agents as allocation $a$ did, all position capacity constraints are satisfied. 
    
    Next we turn to the agent mass constraints. 
    For types greater than $\theta_{k+1}$, their allocated mass does not change. 
    For type $\theta_{k+1}$, his mass increases, but type $\theta_{k+1}$ was not fully allocated under $a$, so there exists $\varepsilon$ sufficiently small that his mass constraint is not violated. 
    For all types less than $\theta_{k}$, their allocated mass increases by $\varepsilon \cdot f(\theta_i)\cdot \left(\frac{1}{F(\theta_{k-1})}-\frac{2}{F(\theta_{k})}+\frac{1}{F(\theta_{k+1})} \right)$. 
    Under our given condition, this is strictly negative, and hence these agents now have slack in their agent mass constraints. 
    For $\theta_k$, the mass allocated to him increases by $\varepsilon \cdot f(\theta_i) \cdot (\frac{1}{F(\theta_{k+1})}-\frac{2}{F(\theta_{k})}) < 0$, and hence his mass constraint is also slack.

    Finally we turn to the IC constraints. 
    Intuitively, the keys here are that 
    (i) all changes are localized to the positions $x_{k-1},x_k,x_{k+1}$, and 
    (ii) for type $\theta_i$, $\tilde{a}(\cdot;\theta_i)$ involves a mean-preserving contraction of $\tilde{a}(\cdot;\theta_j)$, for types $\theta_j < k$ , plus correction terms, 
    (iii) the correction terms are added equally to all types who value these positions. 
    (i) will enable us to satisfy the IC constraints for types greater than $\theta_k$, and (ii) and (iii) will preserve the IC constraints for lower types. 
    
    First note that due to (iii), the allocation for all types other than $\theta_i$ is still a common lottery: types $\theta_j$ and $\theta_{j'}$ are allocated the same lottery over positions that they each care about. 
    Hence IC constraints are satisfied for all $\text{IC}_{j,j'}$ where $j,j'\neq i$. 
    It then remains to check IC constraints to and from $\theta_i$. 
    Type $\theta_i$ has an allocation that is the same common lottery as all other types, with an additional mean-preserving contraction on the positions $x_{k-1},x_k,x_{k+1}$. 
    Because this is a mean-preserving contraction and utilities are linear in position qualities, type $\theta_i$ obtains the same utility as he would from the common lottery offered to any other type $\theta_j$. 
    Hence all IC constraints $\text{IC}_{i,j}$ are satisfied. 
    For any $\text{IC}_{j,i}$ where $j \geq k+1$, all changes made in our perturbation leave the utility of each lottery unchanged, so these IC constraints are all satisfied. 
    For any $\text{IC}_{j,i}$ where $j \leq k-1$, by the same mean-preserving contraction argument, type $j$ obtains the same utility from $\tilde{a}(\cdot;\theta_i)$ as from $\tilde{a}(\cdot;\theta_j)$, and hence these IC constraints are satisfied. 
    The last remaining IC constraint to check is then $\text{IC}_{k,i}$. 
    Then note that for type $\theta_k$, the only payoff-relevant difference between $\tilde{a}(\cdot;\theta_i)$ and $\tilde{a}(\cdot;\theta_k)$ is that $\tilde{a}(x_{k+1};\theta_i) = \tilde{a}(x_{k+1};\theta_k)-\varepsilon$, and hence the former lottery provides strictly lower utility to type $\theta_k$, therefore $\text{IC}_{k,i}$ is satisfied.

    We conclude that $\tilde{a}$ defines an allocation which satisfies all constraints, and obtains the same objective as $a$. 
    We make one final perturbation to construct allocation $\hat{a}$ which satisfies all constraints while strictly increasing the objective. 
    Recall that we chose $D$ so that the optimal common lottery does not fill all positions, and $a(x_{k-1};\theta_{i}) > 0$. 
    Then by the construction of the optimal common lottery there exists some $L \geq 0$ such that allocation $a$ allocates each type $\theta_{j}$ with $j \leq L$ with probability $1$. 
    Furthermore, because $a(x_{k-1};\theta_{i}) > 0$, we have that $L \leq k-1$. 
    We may then choose some position $k' \leq L$ such that position $k'$ is not fully filled.

    For notational convenience, define $\varepsilon' \coloneq -\varepsilon\cdot f(\theta_i)\cdot (\frac{1}{F(\theta_{k-1})}-\frac{2}{F(\theta_{k})}+\frac{1}{F(\theta_{k+1})}) > 0$. 
    From our analysis of the agent mass constraints in $\tilde{a}$ above, we have that $\tilde{a}$ allocates all agents $\theta_j < \theta_k$ with probability at most $1 - \varepsilon'$. 
    In particular, all agents with $\theta_j \leq \theta_{k'}$ are allocated with probability at most $1 - \varepsilon'$. 
    Finally, choosing some sufficiently small $\delta > 0$ with $\delta \leq \varepsilon'$, this slack in the agent mass constraints allows us to construct the new allocation $\hat{a}$ as:
    \begin{equation*}
        \hat{a}(x_k;\theta_j) = 
        \begin{cases}
            \tilde{a}(x_k;\theta_j) + \delta & \text{if } k= k',j\leq k' \\
            \tilde{a}(x_k;\theta_j) & \text{otherwise}
        \end{cases}
    \end{equation*}

    Note that $\hat{a}$ obtains a strictly higher objective than $\tilde{a}$, hence obtains a strictly higher objective than $a$. 
    By our discussion of the agent mass constraints above, all agent mass constraints are still satisfied. 
    For the position capacity constraints, the only position whose fill is changed is $k'$. 
    Since position $k'$ was not fully filled in the allocation $a$, it was not fully filled in allocation $\tilde{a}$, and hence we may choose $\delta$ sufficiently small that its capacity constraint is still satisfied. 
    Finally, we consider IC constraints. 
    Intuitively, this is adding a common term, so IC constraints will be satisfied. 
    Formally, the altered terms do not appear in the IC constraints from any type that is above $\theta_{k'}$ (position $x_{k'}$ lies below their outside options). 
    For types $\theta_j \leq \theta_{k'}$, the IC constraints between these types are still satisfied, because for we have added the same term to each of their allocations; the IC constraints from these types to types above $\theta_{k'}$ are still satisfied, because we have improved the allocation for all types $\theta_j \leq \theta_{k'}$ and left all other allocations unchanged.

    We conclude that allocation $\hat{a}$ satisfies all constraints in our problem, and strictly improves on the objective from $a$. Hence the optimal common lottery mechanism is not optimal in the set of all mechanisms. 
\end{proof}

\subsection{Proof of \cref{prop:optimal-common-lottery}}

\begin{proof}
    For any position mass vector $\bm{s}$ there is a unique common lottery $a$ which yields positions masses $\bm{s}$; the terms of this common lottery are given by 
    $$
    a(x_k;\theta_i) = \begin{cases}
            \frac{s_k}{D \cdot F(x_k)} & \text{ if } \theta_i \leq x_k \\
            0 & \text{ otherwise.}
        \end{cases}
    $$
    
     Hence we may work directly with $\bm{s}$ instead of $a$. The designer's objective is $V(\bm{s})$, so it remains to optimize over the space of $\bm{s}$ that correspond to \textit{feasible} common lotteries. Because $a$ is a common lottery, the IC constraints and ex-post IR constraints are satisfied for any choice of $\bm{s}$. The position capacity constraints are satisfied exactly when $s_k \leq g(x_k)$ for all $k$. The agent mass constraints are satisfied exactly when $1 \geq \sum_{k=0}^{N} a(x_k;\theta_0) = \sum_{k=0}^N \frac{s_k}{D \cdot F(x_k)}$, that is $\sum_{k=0}^N \frac{1}{F(x_k)} s_k \leq D$. These are exactly the objective and two constraints specified in \cref{prop:optimal-common-lottery}.
\end{proof}

\subsection{Proof of \cref{prop:capped-rsd}}

\begin{proof} (\cref{prop:capped-rsd}.)
    In a feasible common lottery, we have 
    \begin{equation*}
        a(x_k; \theta_i) = \begin{cases}
        \frac{1}{D}\frac{s_k}{F(\theta_k)} & \text{ if } \theta_i \leq x_k, \\
        0 & \text{ if } \theta_i > x_k.
    \end{cases}
    \end{equation*}
    
    Under $\text{CRP}(\mathbf{s})$, an agent $\theta_i$'s probability of receiving a position $x_k$ depends on his priority number. 
    The mass of $s_{N-1}$ agents with the highest priority numbers may choose the best position $x_{N-1}$, and they are all willing to accept this position. In notation, $t_{x_{N-1}}^*(1) = s_{N-1}$.
    The probability that type $\theta_i$ is among this mass of agents is the probability that his lottery number is less than $t_{x_{N-1}}^*(1) = s_{N-1}$. This probability is $\frac{s_{N-1}}{D} = \frac{1}{D}\frac{s_{N-1}}{F(x_{N-1})}$.
        
    Once $x_{N-1}$ is exhausted, the next group of agents have a favorite remaining position of $x_{N-2}$. 
    Among this group, only a fraction $F(x_{N-2})$ of them with outside options below $x_{N-2}$ are willing to accept the position. 
    So, under $\text{CRP}(\mathbf{s})$, it takes $\frac{s_{N-2}}{F(x_{N-2})}$ agents to exhaust this position. That is, $t_{x_{N-2}}^*(2) = t^*(1) + \frac{s_{N-2}}{F(x_{N-2})}$.
    The probability that type $\theta_i$ is among this second group is $\frac{1}{D}\frac{s_{N-2}}{F(x_{N-2})}$, and he accepts this position only if $\theta_i \leq x_{N-2}$. 

    This yields exactly the allocation probabilities specified by $a$. Repeating this logic for all positions completes the proof.
\end{proof}

\subsection{Proof of \cref{prop:MON}}

\begin{proof} (\cref{prop:MON}.)
    Fix some $\theta_i < \theta_j$. 
    $\text{IC}_{i,j}$ implies that 
    \begin{align*}
        \sum_{k=i}^{N-1} (x_k - \theta_i) a(x_k;\theta_i) & \geq \sum_{k=i}^{N-1} (x_k - \theta_i) a(x_k;\theta_j), \\ \Rightarrow \quad 
        \sum_{k=i}^{N-1} x_k[a(x_k;\theta_i) - a(x_k;\theta_j)] & \geq \sum_{k=i}^{N-1} \theta_i[a(x_k;\theta_i) - a(x_k;\theta_j)], \\ \Rightarrow \quad 
        \sum_{k=i}^{N-1} x_k[a(x_k;\theta_i) - a(x_k;\theta_j)] & \geq \theta_i (P(\theta_i) - P(\theta_j)). 
    \end{align*}
    $\text{IC}_{j,i}$ implies that 
    \begin{align*}
        \sum_{k=j}^{N-1} (x_k - \theta_j) a(x_k;\theta_j) & \geq \sum_{k=j}^{N-1} (x_k - \theta_j) a(x_k;\theta_i), \\ \Rightarrow \quad 
        \sum_{k=j}^{N-1} x_k[a(x_k;\theta_j) - a(x_k;\theta_i)] & \geq \sum_{k=j}^{N-1} \theta_j[a(x_k;\theta_j) - a(x_k;\theta_i)], \\ \Rightarrow \quad 
        \sum_{k=j}^{N-1} x_k[a(x_k;\theta_j) - a(x_k;\theta_i)] & \geq \theta_j (P(\theta_j) - P(\theta_i)) + \theta_j \sum_{k=i}^{j-1} a(x_k;\theta_i).
    \end{align*}
    Adding these two inequalities, we have 
    \begin{align*}
        \sum_{k=i}^{j-1} x_k[a(x_k;\theta_i) - a(x_k;\theta_j)] & \geq (\theta_j - \theta_i) (P(\theta_j) - P(\theta_i)) + \theta_j \sum_{k=i}^{j-1} a(x_k;\theta_i), \\ \Rightarrow \quad 
        \sum_{k=i}^{j-1} (x_k - \theta_j) a(x_k;\theta_i) & \geq (\theta_j - \theta_i) (P(\theta_j) - P(\theta_i)).
    \end{align*}
    Note that in deriving these inequalities, we use $a(x_k;\theta_j) = 0$ for $x_k < \theta_j$. 
    Since the left-hand side of the final inequality is non-positive, we have that $P(\theta_j) \leq P(\theta_i)$ as desired. 
\end{proof}

\subsection{Proofs of \cref{prop:redundant-ICs} and \cref{prop:binding-ICs}}  The following proof uses local upward ICs together with the allocation monotonicity we derived in \cref{prop:MON} to characterize redundant IC constraints.

\begin{proof} (\cref{prop:redundant-ICs}.) The first claim in the result follows because $\theta_{N-1}$ obtains zero utility from all assignments. For the second claim, we show that for any $\theta_r < \theta_s < \theta_t$, the incentive compatibility constraints $\text{IC}_{r,s}$ and $\text{IC}_{s,t}$, together with the monotonicity constraint, imply $\text{IC}_{r,t}$. 

    Writing $\text{IC}_{r,s}$, we have 
    \begin{align*}
        \sum_{k \geq r} (x_k - \theta_r) a(x_k;\theta_r) & \geq \sum_{k \geq r} (x_k - \theta_r) a(x_k;\theta_s), \\ \Rightarrow \quad 
        \sum_{k \geq r} x_k a(x_k;\theta_r) - \theta_r P(\theta_r) & \geq \sum_{k \geq s} x_k a(x_k;\theta_s) - \theta_r P(\theta_s).
    \end{align*}
    Similarly, from $\text{IC}_{s,t}$, we have 
    \begin{align*}
        \sum_{k \geq s} x_k a(x_k;\theta_s) - \theta_s P(\theta_s) & \geq \sum_{k \geq t} x_k a(x_k;\theta_t) - \theta_s P(\theta_t), \\ \Rightarrow \quad 
        \sum_{k \geq s} x_k a(x_k;\theta_s) - \theta_r P(\theta_s) & \geq \sum_{k \geq t} x_k a(x_k;\theta_t) - \theta_r P(\theta_t),
    \end{align*}
    where in the second inequality, we use the fact that $P(\theta_s) \geq P(\theta_t)$. 
    Combining this last inequality with $\text{IC}_{r,s}$, we have 
    $$ \sum_{k \geq r} x_k a(x_k;\theta_r) - \theta_r P(\theta_r) \geq \sum_{k \geq t} x_k a(x_k;\theta_t) - \theta_r P(\theta_t), $$
    which is exactly $\text{IC}_{r,t}$, as desired. 
    
\end{proof}
The following provides a proof of which ICs bind under the stated sufficient conditions.
\begin{proof}(\cref{prop:binding-ICs})
    Recall our proof of Theorem \ref{thm:implementable-by-common-lottery}, where we replaced an arbitrarily chosen feasible lottery with a feasible common lottery that attains the same objective. Suppose for the sake of contradiction that in an optimal mechanism at least one of the remaining IC constraints does not bind.
    Then the replacement with corresponding common lottery from proof of Theorem \ref{thm:implementable-by-common-lottery} ensures feasibility while simultaneously ensuring that $\tilde{P}(\theta_0)<1$ strictly. This is because $\tilde{P}(\theta_0)<P(\theta_0)\leq 1$, where the first inequality is strict because all coefficients used in the proof are strictly positive (by strict convexity), and at least one IC inequality is strict. Note that the coefficients on local upward ICs are always strictly positive, so the argument for them uses only the weaker version of convexity. Because positions were not filled at the conjectured optimum, we may choose some position $x_k$ and increase the assignment probability $a(x_k;\theta_j)$ for all $j < k$ by some sufficiently small $\epsilon$. This strictly increases the objective yielding a contradiction. 
\end{proof}

\subsection{Proof of \Cref{common-lotto-uniquely-opt}}
\begin{proof}(Corollary \ref{common-lotto-uniquely-opt}) 
We prove the unique optimality of common lotteries under the conditions that $1/F$ be strictly convex, $V$ be strictly increasing in $\bm{s}$ and that assigning all positions at full capacity is infeasible. These conditions imply \cref{prop:binding-ICs} holds, so at any optimal direct mechanism, all local upward and global downward IC constraints bind. Our proof will proceed iteratively starting from highest positions $x$ and showing that the odds of being allocated $x$ must be equalized across all types which see $x$ as an upgrade.

We start with the highest position $x_{N-1}$. From the binding (global) downward IC constraints $\text{IC}_{N-2,\cdot}$, at any optimal allocation $a(\cdot;\cdot)$,
\[a(x_{N-1};\theta_{N-2})=a(x_{N-1};\theta_j),\;\; \forall_{j\neq N-1}\]
therefore at the optimum
 \[a(x_{N-1};\theta_{k'})\equiv b(x_{N-1}), \;\; \forall_{k'\leq N-1}\]
 independently of type.

 We may conclude the proof by induction. Our inductive hypothesis for $n \geq 1$ is that for all positions with index $k \geq N-n$ and all types $m \leq k$, 
 \[a(x_k;\theta_m)\equiv b(x_k).\]
 By the same argument as above, the binding downward IC constraints $\text{IC}_{k-2,\cdot}$ imply that
 \[a(x_{k-1};\theta_m)\equiv b(x_{k-1}),\]
 for all $m\leq k-1$. We conclude that $a$ is a common lottery.
\end{proof}

\subsection{Equivalence result under $V_{\text{fill}}$}\label{app:V_fill}
We first define operations on allocations used in proving this auxiliary result. Let us consider an arbitrary feasible allocation $a(\cdot;\cdot)$. 
We introduce two natural ways of altering the allocation:\footnote{We will investigate the connection of these operations to dominance in the convex order.}
\begin{enumerate}
    \item \textbf{Allocation upgrade for type $\theta$} constitutes an FOSD shift in the lottery for type $\theta$. 
    In other words, odds of type $\theta$ of getting some worse positions are swapped for odds of getting some better positions --- type $\theta$ gets "upgraded".
    \item \textbf{Equalizing chances for position $x$} constitutes a (degenerate) Mean-Preserving Contraction\footnote{To see  that $\tilde{a}$ is position-wise a Mean-Preserving Contraction of $a$, fix $x$. 
    Using definition of $\tilde{a}(x;\theta)$, we get
\[\mathbb{E}_{\theta}[a(x;\theta)]=\mathbb{E}_{\theta}[\mathbf{1}_{\theta\leq x}a(x;\theta)+\mathbf{1}_{\theta>x}a(x;\theta)]=\mathbb{E}_{\theta}[\mathbf{1}_{\theta\leq x}\tilde{a}(x;\theta)+\mathbf{1}_{\theta>x}\tilde{a}(x;\theta)]=\mathbb{E}_{\theta}[\tilde{a}(x;\theta)].\]
} at position $x$, and is formally defined by
    \[\tilde{a}(x';\theta)\coloneq\begin{cases}
    a(x';\theta), \text{ if } x'\neq x\\
    \sum_{\theta'=0}^x\frac{a(x;\theta')f(\theta')}{F(x)},\text{ if } x'=x, \theta\leq x\\
    0,\text{otherwise}\end{cases}\] 
    Intuitively, this operation equalizes odds of getting position $x$ among types $\theta\leq x$ who deem $x$ acceptable.
\end{enumerate}
The operation equalizing chances for \textit{all} positions will be called \textbf{allocation shrinking}. 
If an allocation upgrade for \textit{any} type $\theta$ precedes allocation shrinking, we will talk of \textbf{upgraded allocation shrinking}. 
Note that the former is trivially nested by the latter by making a measure $0$ upgrade. 
We say that \textbf{allocation shrinking is achievable} if, starting from any feasible allocation $a(\cdot;\cdot)$, one can apply an upgraded allocation shrinking and preserve feasibility.

Let us define a common lottery $\pi\in\Delta(X\cup\{\varnothing\})$ by
\[\pi(x_k)\coloneq\begin{cases}
    \frac{g(x_k)}{D F(x_k)}, \text{ if } k>\underline{k}\\
    \min\{\frac{g(x_k)}{D F(x_k)},1-\sum_{k'=\underline{k}+1}^{n-1}\frac{g(x_{k'})}{D F(x_{k'})}\}, \text{ if } k=\underline{k}\\
    0, \text{otherwise,}
\end{cases}\]
where $\underline{k}\coloneq\min\left(\{k': \sum_{k=k'+1}^{n-1}\frac{g(x_k)}{D F(x_k)}\leq 1\leq \sum_{k=k'}^{n-1}\frac{g(x_k)}{D F(x_k)}\}\cup\{0\}\right)$.

\begin{lemma}\label{lem:shrinking-iff-common_lotto_opt}
    Common lottery $\pi$ is optimal under $V_{\text{fill}}$ $\iff$ allocation shrinking is achievable.
\end{lemma}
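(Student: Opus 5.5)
The plan is to prove the two directions separately. The common ingredient is a reduction of the common-lottery problem under $V_{\text{fill}}$ to a fractional knapsack. By the argument in the proof of \cref{prop:optimal-common-lottery}, which uses only the definition of a common lottery and not convexity, a common lottery with position masses $\bm{s}$ is feasible if and only if $s_k\le g(x_k)$ for all $k$ and $\sum_k s_k/F(x_k)\le D$. Since $F$ is increasing, the price $1/F(x_k)$ is decreasing in $k$. The fractional-knapsack problem $\max\sum_k s_k$ subject to these two constraints is therefore solved by filling positions greedily from the top. That greedy solution is exactly the position-mass vector $\bm{s}^\pi$ of $\pi$: positions above $\underline{k}$ are filled to capacity, and position $x_{\underline{k}}$ is filled up to the budget. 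So $\pi$ is optimal among common lotteries, and its mass profile is the top-down fill $s^\pi_k=\min\{g(x_k),\,(S^\pi-\sum_{k'>k}g(x_{k'}))^+\}$ of its total $S^\pi$, with positions below $\underline{k}$ receiving zero.

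($\Leftarrow$) Assume allocation shrinking is achievable, and let $a$ be any feasible mechanism. Apply an upgraded allocation shrinking that preserves feasibility, and call the result $\hat a$. An allocation upgrade for type $\theta$ is an FOSD shift within that type's lottery. It moves probability only to higher positions, all of which $\theta$ still accepts, so the type's total acceptance probability and hence $V_{\text{fill}}$ are unchanged. Equalizing chances preserves every $s_k$. Hence $\hat a$ is a feasible common lottery with $V_{\text{fill}}(\hat a)=V_{\text{fill}}(a)$. By the knapsack step, $V_{\text{fill}}(\pi)\ge V_{\text{fill}}(\hat a)$, so $\pi$ is optimal.

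($\Rightarrow$) Assume $\pi$ is optimal, and fix a feasible $a$ with masses $\bm{s}=\bm{s}(a)$ and total $S=\sum_k s_k\le S^\pi$.
\begin{enumerate}
\item Define the target $\bm{s}'$ as the top-down fill of total mass $S$ into the capacities $g$. Because $S\le S^\pi$ and both $\bm{s}'$ and $\bm{s}^\pi$ are top-down fills, $s'_k\le s^\pi_k$ for every $k$. Therefore $\sum_k s'_k/F(x_k)\le\sum_k s^\pi_k/F(x_k)\le D$ and $s'_k\le g(x_k)$, so the common lottery with masses $\bm{s}'$ is feasible.
\item Realize the move from $\bm{s}$ to $\bm{s}'$ by upgrades. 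Top-down filling maximizes every upper-tail sum subject to the capacities, and $\bm{s}$ itself respects those capacities. Hence $\sum_{k\ge m}s'_k\ge\sum_{k\ge m}s_k$ for all $m$, with equality of totals at $m=0$. So $\bm{s}'$ first-order dominates $\bm{s}$, and there is a monotone (upward-only) transport plan from $\bm{s}$ to $\bm{s}'$.
\item Decompose this transport across the agent masses $D f(\theta_i)a(x_k;\theta_i)$ sitting at each $(x_k,\theta_i)$. For example, use the comonotone quantile coupling, applied proportionally to the types occupying each position. This yields a per-type FOSD shift, i.e.\ an allocation upgrade. Each type's total probability is unchanged, and mass only moves up, so ex-post IR is preserved.
\item After the upgrades the position masses are $\bm{s}'$. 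Equalizing chances then produces the common lottery with masses $\bm{s}'$, which is feasible by the first step. Hence allocation shrinking is achievable.
\end{enumerate}

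The step I expect to be most delicate is the third: turning the aggregate upward transport into per-type FOSD shifts, and checking that this matches the paper's notion of an upgrade. If upgrades are required to preserve feasibility only at the end, which is how ``achievable'' is phrased, the proportional split works directly. If intermediate IC must also hold, I would rely on the fact that only the final common lottery matters for the definition, and argue that the requirement concerns the end allocation.
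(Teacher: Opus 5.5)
Your proposal is correct and follows essentially the same route as the paper. For ($\Leftarrow$), upgrades and equalizing preserve utilization, and $\pi$ is the greedy top-down optimum among common lotteries. For ($\Rightarrow$), you upgrade $a$ to the top-down fill of its own total and compare position by position with $\pi$ to get $\tilde P(\theta_0)\le P_\pi(\theta_0)\le 1$. Your FOSD/transport-plan construction of the upgrade is a cleaner formalization of the paper's ``maximal upgrading'' algorithm, which likewise only requires the final common lottery to be feasible.
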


\begin{proof} (Lemma \ref{lem:shrinking-iff-common_lotto_opt})
$(\;\Longleftarrow \;:)$ First suppose that allocation shrinking is feasible, that is, for any feasible allocation $a(\cdot;\cdot)$ there exists some upgraded allocation shrinking $\tilde{a}(\cdot;\cdot)$ for which
\[\tilde{P}(\theta_0)\coloneq\sum_{k=0}^{N-1}\tilde{a}(x_k;\theta_0)\leq 1.\]
Now, utilization at $a(\cdot;\cdot)$ is equal to
\[D\sum_{k=0}^{N-1}\tilde{a}(x_k;\theta_0)F(x_k),\]
we see that within the class of common lotteries\footnote{It is important to distinguish the probability of being assigned to any position (including the one valued at a player’s outside option) from the probability of staying with the outside option. Thus, these outcomes constitute lotteries, even when the allocation probability is strictly below one for the lowest type.}, the common lottery $\pi$ 1) pushes maximal masses to fill highest positions \textit{and} 2) the probability that the lowest type gets allocated to a position is maximized (and $\sum_{j=0}^n\pi(j)=1$ whenever feasible), so it maximizes the objective.\footnote{Note we needed to modify the above to account for the case where allocating to lowest type with probability $1$ is infeasible due to binding position capacity constraints.}\\ 

\noindent $(\implies :)$
Now suppose that common lottery $\pi$ is optimal. This means that for every feasible allocation $a(\cdot;\cdot)$, utilization satisfies \[D\sum_{k=0}^{N-1}\sum_{i=0}^{N-1}a(x_k;\theta_i)f(\theta_i)\leq D\sum_{k=0}^{N-1}\pi(x_k)F(x_k)=OptVal.\]

Note however that it might be that $a(\cdot;\cdot)$ allocates some positions $k<\underline{k}$, so we cannot yet simplify to position-by-position comparisons. Let us therefore perform an upgraded allocation shrinking in the following way. Let us first enact a maximal upgrading of types\footnote{This is allocation dependent. The order of picking types to upgrade is irrelevant for the argument. Formally, a maximal upgrading of types is defined via the following algorithm: Terminate if $\{k: D \sum_{i=0}^{k}a(x_k;\theta_i)f(\theta_i)<g(x_k)\}$. Otherwise, initialize by fixing the highest partly vacant position $\tilde{k}\coloneq\max\{k: D \sum_{i=0}^{k}a(x_k;\theta_i)f(\theta_i)<g(x_k)\}$. Step 1) Pick any type $\theta'\in \{\theta_i: a(x_k;\theta_i)>0 \text{ for some }k<\tilde{k}\}$  (if this set is empty, terminate) and upgrade them to $x_{\tilde{k}}$ by swapping odds until we either 1a) run out of odds to swap for $\theta'$ or 1b) position $x_{\tilde{k}}$ is filled at its capacity. If the former 1a) is the case, pick another type arbitrarily and do the same. If the latter condition 1b) holds, update $\tilde{k}\rightarrow \tilde{k}-1$ (if exists, i.e., provided $\tilde{k}-1\geq 0$) and go to step 1).} and shrink the resulting allocation. Note that these operations preserve utilization and fill at capacity maximal number of positions subject to this utilization. This also allows us to compare the two allocations position-by-position, and by optimality of $\pi$ it must be that $\tilde{P}(0)\leq P_{\pi}(0)\leq 1$, hence allocation shrinking is achievable.    
\end{proof}
x
\section{General Designer Objectives}\label{app:general-designer-objectives}

If our convexity condition is satisfied, we may use standard convex optimization techniques (\cite{boyd2004convex}) to determine the optimal mechanism. It suffices to simply find the shadow price on the budget constraint in \cref{prop:optimal-common-lottery}. 

\begin{proposition}\label{prop:optimal-mechanism-general-objective}
    Suppose $1/F$ is convex and $V$ is concave and differentiable. The common lottery $a$ with feasible allocation vector $\bm{s}$ is an optimal mechanism if and only if there exists $\lambda \geq 0$ such that $\lambda \left(\sum_{k=0}^{N-1} \frac{s_k}{F(x_k)} - D\right) = 0$ and for all $k \in \{0, \dots, N-1\}$,

    \begin{enumerate}[(i)]
        \item if $s_k \in (0, g(x_k))$, then $V_k(\bm{s}) = \frac{\lambda}{F(x_k)}$,
        \item if $s_k = 0$, then $V_k(\bm{s}) \leq \frac{\lambda}{F(x_k)}$,
        \item if $s_k = g(x_k)$, then $V_k(\bm{s}) \geq \frac{\lambda}{F(x_k)}$.
    \end{enumerate}
\end{proposition}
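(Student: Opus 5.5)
By \cref{prop:optimal-common-lottery}, since $1/F$ is convex, a common lottery with allocation vector $\bm{s}$ is an optimal mechanism if and only if $\bm{s}$ solves the reduced program
\[
\max_{\bm{s}\in\mathbb{R}^N_+} V(\bm{s}) \quad \text{s.t.}\quad \sum_{k} \frac{s_k}{F(x_k)} \le D,\qquad 0\le s_k\le g(x_k)\ \ \forall k.
\]
One direction uses the proposition directly: any feasible mechanism is replaced, via \cref{thm:implementable-by-common-lottery}, by a common lottery with the same $\bm{s}$. That $\bm{s}$ is then feasible for the program, and conversely every feasible $\bm{s}$ of the program corresponds to the feasible common lottery $a(x_k;\theta_i)=s_k/(DF(x_k))$. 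So the task reduces to writing down the KKT conditions of this finite-dimensional convex program.

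The program maximizes a concave differentiable function over a polytope, since all constraints are linear. For linear constraints, no Slater-type condition is needed, because the linearity constraint qualification holds automatically. Hence the KKT conditions are both necessary and sufficient for a global maximum.

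I will introduce three sets of multipliers:
\begin{itemize}
\item $\lambda\ge 0$ on the budget constraint;
\item $\mu_k\ge 0$ on each capacity constraint $s_k\le g(x_k)$;
\item $\nu_k\ge 0$ on each nonnegativity constraint $s_k\ge 0$.
\end{itemize}
Stationarity then reads
\[
V_k(\bm{s}) = \frac{\lambda}{F(x_k)} + \mu_k - \nu_k,
\]
with complementary slackness
\[
\lambda\Bigl(\sum_k s_k/F(x_k)-D\Bigr)=0,\qquad \mu_k(g(x_k)-s_k)=0,\qquad \nu_k s_k=0.
\]
I will then eliminate $\mu_k$ and $\nu_k$ by cases.
\begin{itemize}
\item If $0<s_k<g(x_k)$, then $\mu_k=\nu_k=0$, giving (i).
\item If $s_k=0$ and $g(x_k)>0$, then $\mu_k=0$ and $\nu_k\ge 0$, giving (ii).
\item If $s_k=g(x_k)>0$, then $\nu_k=0$ and $\mu_k\ge0$, giving (iii).
\end{itemize}
Conversely, given $\lambda$ satisfying (i)--(iii), I will define $\mu_k=\max\{V_k-\lambda/F(x_k),0\}$ at capacity-binding coordinates and $\nu_k=\max\{\lambda/F(x_k)-V_k,0\}$ at zero coordinates, and zero otherwise. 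These multipliers are a KKT certificate, and sufficiency follows from concavity: for any feasible $\bm{s}'$,
\[
V(\bm{s}')\le V(\bm{s})+\nabla V(\bm{s})\cdot(\bm{s}'-\bm{s}),
\]
and the right-hand term is nonpositive by the multiplier signs.

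The main point of care is the degenerate coordinate with $g(x_k)=0$, where $s_k=0=g(x_k)$ and both (ii) and (iii) apply simultaneously. In that case both $\mu_k$ and $\nu_k$ are free, so stationarity imposes no restriction on $V_k$, and (ii) and (iii) together would wrongly force the equality $V_k=\lambda/F(x_k)$. I would resolve this by reading condition (ii) as applying when $g(x_k)>0$, or equivalently by dropping coordinates with zero capacity from the program, since they are fixed at $0$. I would flag this reading when writing up the proof. Apart from this, the argument is routine convex duality on top of \cref{prop:optimal-common-lottery}.
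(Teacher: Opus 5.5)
Your proposal is correct and takes the same route as the paper: reduce via \cref{prop:optimal-common-lottery} (with, as you note, \cref{thm:implementable-by-common-lottery} giving the ``if and only if'') to a concave program with linear constraints, then read off the KKT conditions and eliminate the capacity and nonnegativity multipliers case by case. Your caveat about positions with $g(x_k)=0$ is genuinely needed, because the model allows such positions and there the paper's proof sets the nonnegativity multiplier to zero in case (iii) without justification. For example, with $N=3$, uniform $f$, $D=1$, $V_{\text{fill}}$ and $g=(0,\tfrac12,\tfrac12)$, the optimum has interior $s_1=\tfrac13$, which forces $\lambda=\tfrac23$, yet (iii) at $k=0$ would demand $1\ge 2$.
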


By a feasible allocation vector $\bm{s}$, we refer to a vector $\bm{s}$ that satisfies the conditions of  \cref{prop:optimal-common-lottery}: $s_k \in [0, g(x_k)]$ for all $k$ and $\sum_{k=0}^{N-1} \frac{s_k}{F(x_k)} \leq D$.

\begin{proof}(\cref{prop:optimal-mechanism-general-objective}.)
    It suffices to find an optimal $\bm{s}$ for the optimization problem given in \cref{prop:optimal-common-lottery}. Because $V$ is concave and all constraints are linear, the Karush-Kuhn-Tucker conditions are sufficient and necessary to characterize an optimal solution.

    We may write the Lagrangian as
    $$
    \mathcal{L}(\bm{s}, \lambda, \bm{\alpha}, \bm{\beta}) = V(\bm{s}) - \lambda \left(\sum_{k=0}^{N-1} \frac{s_k}{F(x_k)} - D\right) - \sum_{k=0}^{N-1} \alpha_k(s_k - g(x_k)) + \sum_{k=0}^{N-1} \beta_k(s_k)
    $$

    By the complementary slackness condition for $\lambda$, we have $\lambda \left(\sum_{k=0}^{N-1} \frac{s_k}{F(x_k)} - D\right) = 0$. By the first-order conditions on $s_k$,
    $$
    V_k(\bm{s}) - \frac{\lambda}{F(x_k)} - \alpha_k +\beta_k = 0
    $$

    In case (i), $\alpha_k = \beta_k = 0$ so $V_k(\bm{s}) = \frac{\lambda}{F(x_k)}$. In case (ii), $\alpha_k = 0$ and $\beta_k \geq 0$ so $V_k(\bm{s}) \leq \frac{\lambda}{F(x_k)}$. In case (iii), $\alpha_k \geq 0$ and $\beta_k = 0$ so $V_k(\bm{s}) \geq \frac{\lambda}{F(x_k)}$. If (i)-(iii) are satisfied, we may then choose $\bm{\alpha}$ and $\bm{\beta}$ to satisfy the resulting first-order conditions on $s_k$.
\end{proof}

To take one example, if $V$ is linear and additive in $s_k$, with $V(\bm{s}) = \sum_{k=0}^{N-1} \alpha_k s_k$, then the optimal mechanism first orders the positions by their value $\alpha_k / F(x_k)$, and fully fills the highest-value position for the designer, then the next highest-value position for the designer, and so forth, until exhausting the budget constraint $\sum_{k=0}^{N-1} \frac{s_k}{F(x_k)} \leq D$.

As a corollary, we can generalize this approach when the designer has flexible capacity in each position, that is, if $g(x_k) = \infty$ for each position. She still may have decreasing returns to adding agents to each position.

\begin{corollary}\label{cor:optimal-mechanism-flexible-capacity}
    Suppose $1/F$ is convex and $V$ is concave and differentiable. Suppose $g(x_k) = \infty$ for all $k$. A common lottery $a$ with feasible allocation vector $\bm{s}(a)$ is an optimal mechanism if and only if $\lambda \left(\sum_{k=0}^{N-1} \frac{s_k}{F(x_k)} - D\right) = 0$ and for all $k \in \{0, \dots, N-1\}$ and for all $k \in \{0, \dots, N-1\}$,

    \begin{enumerate}[(i)]
        \item if $s_k > 0$, then $V_k(\bm{s}) = \frac{\lambda}{F(x_k)}$,
        \item if $s_k = 0$, then $V_k(\bm{s}) \leq \frac{\lambda}{F(x_k)}$.
    \end{enumerate}
\end{corollary}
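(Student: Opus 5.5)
The plan is to reduce \cref{cor:optimal-mechanism-flexible-capacity} to \cref{prop:optimal-mechanism-general-objective} by taking the capacity constraints to be vacuous. With $g(x_k)=\infty$, the feasible set in \cref{prop:optimal-common-lottery} becomes $\{\bm{s}\in\mathbb{R}^N_+ : \sum_k s_k/F(x_k)\le D\}$. This set is compact, since each $F(x_k)>0$ by full support, so $s_k\le D\,F(x_k)$. \Cref{thm:implementable-by-common-lottery} still applies. Its proof never used finiteness of $g$: the equalized lottery preserves each $s_k$, so position feasibility is inherited trivially. Hence \cref{prop:optimal-common-lottery} continues to hold, and the problem is a concave program with linear constraints: the budget constraint and nonnegativity.

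I would then rerun the KKT argument from the proof of \cref{prop:optimal-mechanism-general-objective} with the upper-bound multipliers $\alpha_k$ removed. The Lagrangian is
$$\mathcal{L}=V(\bm{s})-\lambda\Big(\sum_k \tfrac{s_k}{F(x_k)}-D\Big)+\sum_k\beta_k s_k .$$
Its first-order conditions are $V_k(\bm{s})-\lambda/F(x_k)+\beta_k=0$, together with complementary slackness $\beta_k s_k=0$, $\beta_k\ge0$, $\lambda\ge0$. If $s_k>0$, then $\beta_k=0$, which gives (i). If $s_k=0$, then $\beta_k\ge0$, which gives $V_k\le\lambda/F(x_k)$, i.e.\ (ii). Conversely, given $\lambda$ satisfying (i), (ii) and complementary slackness, set $\beta_k=\lambda/F(x_k)-V_k(\bm{s})\ge0$ when $s_k=0$ and $\beta_k=0$ otherwise. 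This solves the KKT system, and concavity of $V$ with linear constraints makes KKT sufficient.

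For necessity, linear constraints satisfy a constraint qualification (Abadie/linearity CQ), so KKT multipliers exist at any maximizer. Equivalently, one can apply \cref{prop:optimal-mechanism-general-objective} with finite capacities $g(x_k)>DF(x_k)$. These are never binding on the feasible set, so case (iii) cannot occur, and the optimal sets coincide with those of the $g=\infty$ problem.

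The only point needing care is the tacit existence of $\lambda\ge0$ in the statement, which lacks an explicit quantifier. I would read it as ``there exists $\lambda\ge0$,'' as in \cref{prop:optimal-mechanism-general-objective}. The cleanest route is the large-finite-capacity substitution, so that nothing new must be verified beyond that substitution.
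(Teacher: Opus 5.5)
Your proposal is correct and follows essentially the paper's route. The paper states this corollary without a separate proof, as the specialization of \cref{prop:optimal-mechanism-general-objective} in which the capacity multipliers $\alpha_k$ disappear, so case (iii) is vacuous and case (i) reduces to $s_k>0$. Your additional checks simply make explicit what the paper leaves tacit: that \cref{thm:implementable-by-common-lottery} and \cref{prop:optimal-common-lottery} survive $g=\infty$ because the equalized lottery preserves each $s_k$, and that capacities above $DF(x_k)$ never bind since $s_k\le DF(x_k)$ on the budget set.
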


Even when $V$ is not concave, \cref{prop:optimal-common-lottery} still simplifies the designer's problem to optimization of vector $\bm{s} \in \mathbb{R}_+^n$ with one budget constraint.

\section{Implementation of Common Lotteries}\label{app:implementation-common-lotteries}

Formally defining the random priority mechanism in the continuum setting requires care; we use the same techniques as \cite{che2010asymptotic} to define the random priority mechanism in a continuum setting. The caps on each position become the available quota of each position. Agents each draw a lottery number uniformly and independently from $[0,1]$.\footnote{As described in a \cite{uhlig1996continuum}, we may appeal to a law of large numbers to draw each type independently and guarantee that the aggregate distribution of agents' lottery numbers is the uniform distribution.} The mechanism then iteratively assigns masses of agents to the available positions, assigning each agent to his most-preferred position among the remaining positions that he is willing to accept. For a position-feasible target vector $\bm{s}$ specifying the masses of agents assigned to each position,\footnote{A vector $\bm{s}$ is position-feasible if $s_k \leq g(x_k)$, for all $k$.} we define the following random-priority mechanism with capping vector $\bm{s}$.

\begin{definition}
    A \emph{Capped Random Priority} mechanism with capping vector $\mathbf{s}$, denoted $\text{CRP}(\mathbf{s})$, operates as follows. Assign each agent a lottery number drawn uniformly and independently from $[0,1]$. Set the initial quota of each position $x_k$ to be $s_k$ and the initial set of available positions to be $\hat{X} = X$. Beginning with $m = 1$, iterate through the following steps.

    At step $m$, for each position $x \in \hat{X}$, determine threshold $t_{x}^*(m)$ such that the remaining quota of position $x$ is equal to the measure of agents whose favorite remaining position is $x$, prefer position $x$ to their outside option, and have lottery number less than $t_{x}^*(m)$. If no such value exists, let $t_{x}^*(m) := 1$. Let $t^*(m) = \min_{x \in \hat{X}} t_{x}^*(m)$. Assign each agent with lottery number less than $t^*(m)$ to his preferred option among the remaining positions and outside option. If all agents have been assigned, stop. Otherwise, remove the assigned agents, reduce the quota of each position $x_k$ by the mass of agents assigned to $x_k$, remove any position with remaining quota of $0$ from $\hat{X}$, and continue to step $m+1$.
\end{definition}

Because the set of positions $X$ is finite, this procedure terminates in a finite number of steps. In our setting with vertically differentiated positions, the capped random priority mechanism takes a particularly simple form. Each agent claims the highest-quality position available to them, or his outside option, when he is called to choose his allocation. In our notation, the first round of the mechanism fills position $x_{N-1}$, the next round fills position $x_{N-2}$, and so forth.

\section{Common Ordinal Preferences}\label{app:common-ordinal-preferences}

\subsection{Proof of \cref{prop:common-lottery-optimal-uneven-grid}}

\begin{proof}(\cref{prop:common-lottery-optimal-uneven-grid}.)
    We work in utility space throughout the proof. We define $x_k = u(q_k)$, and we use $\theta_i = u(q_i)$ to denote an agent's utility from outside option $q_i$. The distribution of agents willing to accept a position $x_k$ is given by $H(u^{-1}(x_k))$, so we may define $F$ as $H(u^{-1}(\cdot))$. This allows us to proceed with the same expression of the direct mechanism as in our main proof.

    As in the alternative proof of our main theorem, it suffices to show that for a given vector of position target allocation targets $\bm{s}$, a common lottery is optimal for the problem of allocating $\bm{s}$ while minimizing the mass of agents initially available. Given that we are working in utility space, we may write the same convex optimization program as in that proof, then apply the same relaxations to obtain the problem (*). For clarity, we copy this problem below:

    $$
    \min_{D\in \mathbb{R}_{++}, \; a: \Theta \to \Delta(X \cup \{\varnothing\})} \{ D \}, 
    $$
    subject to constraints:
    \begin{align}
        \sum_{k=i}^{N-1} (x_k - \theta_i)\, a(x_k;\theta_i) 
        & \geq \sum_{k=i}^{N-1} (x_k - \theta_i)\, a(x_k;\theta_j),
        \qquad \forall i,j 
        \tag{$\text{IC}_{i,j}$} \\[6pt]
        \sum_{i=0}^{k} a(x_k;\theta_i) f(\theta_i) 
        & \geq \frac{s_k}{D},
        \qquad \forall k 
        \tag{Position-Target} \\[6pt]
        \sum_{k=0}^{N-1} a(x_k;\theta_0)\ 
        & \leq 1
        \qquad
        \tag{Agent-Feasibility}
    \end{align}

    The Lagrangian is again identical to the baseline proof. Our conjectured optimal $D$ and $a$ are identical as well, again pinned down by a common lottery. Our conjectured Lagrange multipliers on the constraints are the same with only the IC constraints updated:

    \begin{align*}
        \lambda_{POS_k} & = \frac{D}{F(\theta_k)} & \forall k\\
        \lambda_{AGE_0} & = D &\\
        \lambda_{\text{IC}_{i,i+1}} & = \frac{f(\theta_{i+1})}{F(\theta_{i+1})} \cdot \frac{D}{x_{i+1} - x_i}& i \in \{0, 1, \dots, N-2\} \\
        \lambda_{\text{IC}_{i,j}} & = f(\theta_j) \cdot \left( \frac{x_{i+1} - x_i}{F(\theta_{i-1})} - \frac{x_{i+1} - x_{i-1}}{F(\theta_i)} + \frac{x_i - x_{i-1}}{F(\theta_{i+1})} \right)\cdot \frac{D}{(x_{i+1} - x_i)(x_i - x_{i-1})} & \; i\in\{1,2, \dots, N-1\}, j < i \\
        \lambda_{\text{IC}_{i,j}} & = 0 & \text{otherwise}
    \end{align*}

    Note that in the case of our baseline model where $x_k = \frac{q}{N-1}$ are evenly spaced, the IC multipliers are identical to our baseline IC multipliers. The convexity of $1/F$ guarantees that $\lambda_{IC_{ij}}$ is non-negative. As the Lagrangian is convex, it remains only to check the first-order conditions on $D$ and $a$. The first-order condition on $D$ is identical to our baseline proof, and as $\lambda_{POS_k}$ is identical as well, this condition is satisfied. It remains only to check that the first-order conditions on $a$ are satisfied. For $i>0$ the expression for this first-order condition is
    \begin{align*}
         \frac{\partial}{\partial  a(x_k;\theta_i)} L(D,a) & = -\lambda_{POS_k} \cdot f(\theta_i) - \lambda_{i,i+1} (x_k-\theta_i) - \sum_{j=0}^{i-1} \lambda_{i,j} (x_k-\theta_i) \\
         & \quad + \lambda_{i-1,i} (x_k-\theta_{i-1}) + \sum_{j=i+1}^{k} \lambda_{j,i} (x_k - \theta_j)
    \end{align*}

    The final term $\sum_{j=i+1}^{k} \lambda_{j,i} (x_k - \theta_j)$ again telescopes, and now simplifies to\linebreak $\frac{f(\theta_i)\cdot D}{x_{i+1}-x_i} \left( \frac{x_k - x_{i+1}}{F(\theta_i)} - \frac{x_k - x_i}{F(\theta_{i+1})} + \frac{x_{i+1} - x_i}{F(\theta_k)} \right)$. Our entire expression is then

    \begin{align*}
         \frac{\partial}{\partial  a(x_k;\theta_i)} L(D,a) & = -D \cdot \frac{f(\theta_i)}{F(\theta_k)} - \frac{f(\theta_{i+1})}{F(\theta_{i+1})} \cdot \frac{D(x_k-\theta_i)}{x_{i+1} - x_i} \\
         & \quad - (x_k - \theta_i) F(\theta_{i-1})\left( \frac{x_{i+1} - x_i}{F(\theta_{i-1})} - \frac{x_{i+1} - x_{i-1}}{F(\theta_i)} + \frac{x_i - x_{i-1}}{F(\theta_{i+1})} \right)\cdot \frac{D}{(x_{i+1} - x_i)(x_i - x_{i-1})} \\
         & \quad + \frac{f(\theta_{i})}{F(\theta_{i})} \cdot \frac{D}{x_{i} - x_{i-1}}\cdot (x_k-\theta_{i-1}) \\
         & \quad + \frac{f(\theta_i)\cdot D}{x_{i+1}-x_i} \left( \frac{x_k - x_{i+1}}{F(\theta_i)} - \frac{x_k - x_i}{F(\theta_{i+1})} + \frac{x_{i+1} - x_i}{F(\theta_k)} \right).
    \end{align*}

    Combining terms by their denominator yields:
    \begin{align*}
        \frac{1}{D} \cdot \frac{\partial}{\partial  a(x_k;\theta_i)} L(D,a) & = -\frac{1}{F(x_{i-1})} \left( F(x_{i-1})\frac{x_k - x_i}{x_i - x_{i-1}} \right) \\
        & \quad+ \frac{1}{F(x_i)}\left( F(x_{i-1}) \frac{(x_k - x_i)(x_{i+1}-x_{i-1})}{(x_{i+1} - x_i)(x_i - x_{i-1})} + f(x_i) \frac{x_k - x_{i-1}}{x_i - x_{i-1}} + f(x_i) \frac{x_k - x_{i+1}}{x_{i+1} - x_i} \right) \\ 
        & \quad - \frac{1}{F(x_{i+1})} \left(F(x_{i-1}) \frac{x_k - x_i}{x_{i+1} - x_i} + f(x_i) \frac{x_k - x_i}{x_{i+1} - x_i}  + f(x_{i+1}) \frac{x_k - x_i}{x_{i+1} - x_i}   \right) \\ \\
        & = -\frac{x_k - x_i}{x_i - x_{i-1}} + \frac{x_k - x_i}{x_i - x_{i-1}} + \frac{x_k - x_i}{x_{i+1} - x_i} - \frac{x_k - x_i}{x_{i+1} - x_i} \\ 
        \\
        & = 0.
    \end{align*}

    For $i =0$, the first-order condition is 
        \begin{align*}
         \frac{\partial}{\partial  a(x_k;\theta_0)} L(D,a) & = - \lambda_{POS_k} \cdot f(\theta_i) - \lambda_{i,i+1} (x_k-\theta_i) + \lambda_{AGE_0} + \sum_{j=i+1}^{k} \lambda_{j,i} (x_k - \theta_j)\\
         & = - D\frac{f(\theta_i)}{F(\theta_k)} - \frac{f(\theta_{i+1})}{F(\theta_{i+1})} \cdot \frac{D(x_k-\theta_i)}{x_{i+1} - x_i} \\
         & \quad + D + \frac{f(\theta_i)\cdot D}{x_{i+1}-x_i} \left( \frac{x_k - x_{i+1}}{F(\theta_i)} - \frac{x_k - x_i}{F(\theta_{i+1})} + \frac{x_{i+1} - x_i}{F(\theta_k)} \right) \\
         & = D \left( 1 + \frac{f(x_0)}{F(x_0)}\cdot \frac{x_k - x_{1}}{x_{1} - x_0} - \frac{f(x_0) + f(x_{1})}{F(x_{1})} \cdot \frac{x_k - x_0}{x_{1} - x_0}\right) \\
         & = 0.
    \end{align*}

    We conclude that the given $D$ and $a$ are feasible and optimal given multipliers $\lambda$, and hence are optimal for problem (*).
\end{proof}

\subsection{Direct Mechanisms with Common Ordinal Preferences}\label{app:subsec-direct-mechs-common-ordinal}

We let $h^\Gamma: \Gamma \to [0,1]$ denote the probability mass function of agents' marginal distribution on $\Gamma$. We let $h^Q: Q \to [0,1]$ denote agents' marginal distribution on outside options. Under our independence assumption, these two marginal distributions fully describe the joint distribution. More generally, we let $h: Q \times \Gamma \to [0,1]$ denote the joint distribution of agents' outside options and $\gamma$ parameter. 

In our augmented model, a direct mechanism $a: Q \times \Gamma \to \Delta(Q \cup \{ \varnothing \})$ satisfies the following constraints:

\begin{align*}
    \sum_{k=i}^{N-1} (u(q_k;\gamma) - u(q_i; \gamma))\, a(x_k;q_i,\gamma) 
    & \geq \sum_{k=i}^{N-1} (u(q_k;\gamma) - u(q_i; \gamma))\, a(x_k; q_j, \gamma'), \\
    & \quad \quad \forall \; i,j \in \{0, \dots, N-1\}, \gamma,\gamma' \in \Gamma
    \tag{$\text{IC}_{i,j}$} \\[6pt]
    D \sum_{i=0}^{N-1} \sum_{\gamma \in \Gamma} a(q_k;q_i, \gamma) h(q_i,\gamma)
    & \leq g(q_k),
    \qquad \forall \; k  \in \{0,\dots,N\}
    \tag{Position-Feasibility} \\[6pt]
    \sum_{k=0}^{N-1} a(q_k;q_i, \gamma)\, 
    & \leq 1,
    \qquad \forall \; i \in \{0,\dots,N\}, \gamma \in \Gamma
    \tag{Agent-Feasibility} \\[6pt]
    a(q_k;q_i, \gamma) & = 0,
    \qquad \forall \; k<i, \forall \; \gamma \in \Gamma
    \tag{Ex-Post IR}
\end{align*}

\subsection{Proof of \cref{prop:general-ordinal-preferences-common-lotto}}

\begin{proof} (\cref{prop:general-ordinal-preferences-common-lotto}.)
  We show that a common lottery attains allocation $\bm{s}$ in the relaxed problem when $\gamma$ is observable by the designer. As a common lottery remains incentive compatible when $\gamma$ is unobservable, the same common lottery attains allocation $\bm{s}$ in our original problem.

  Consider any direct mechanism $a: Q \times \Gamma \to \Delta(X \cup \{\emptyset\})$. In this mechanism, agents of type $\gamma$ are allocated to fill some mass $\bm{s}_\gamma$ of positions. By \cref{prop:common-lottery-optimal-uneven-grid}, there exists a mechanism $a_\gamma: \Theta \to \Delta(X \cup \{\emptyset\})$ such that $a_\gamma$ is a common lottery and $a_\gamma$ allocates the agents of type $\gamma$ to fill $\bm{s}_\gamma$.  Let $\hat{a}$ denote the mechanism that assigns all agents of type $\gamma$ using lottery $a_\gamma$, so $\hat{a}(q_k;v,\gamma) = a_\gamma(q_k;v)$. If $\gamma$ is observable, then $\hat{a}$ is feasible, incentive compatible, and attains the original allocation: $\bm{s}(\hat{a}) = \bm{s}(a)$.

  The mechanism $\hat{a}$ is not necessarily incentive-compatible when $\gamma$ is unobservable. However, we can transform it into a true common lottery, which is incentive compatible, by simply taking the average of the lotteries given by $a_\gamma$. The independence of $\gamma$ and an agent's outside option will then imply that the same mass of agents accept their offers as before. Let $v$ denote the quality of an agent's outside option. Then we define $\tilde{a}$ as

  $$
  \tilde{a}(q_k; v, \gamma) = \sum_{\gamma \in \Gamma} a_\gamma(q_k; v) h^\Gamma(\gamma)
  $$
  
  $\tilde{a}$ is a common lottery, and only assigns a position of quality $q_k$ to an agent of type $v$ when $q_k \geq v$, because each $a_\gamma$ satisfies this constraint.\footnote{Suppose $q_k < v$. Then $a_{\gamma}(q_k;v) = 0$ for all $\gamma \in \Gamma$, hence $\tilde{a}(q_k;v,\gamma) = 0$ as well.} It is a common lottery, so it is incentive-compatible: each agent is offered the same lottery, truncated below at the quality of his outside option $v$. Finally, $\tilde{a}$ fills positions with a mass of $\bm{s}(a)$, as it is a weighted average of the $a_\gamma$ mechanisms. Formally,
  \begin{align*}
      s_k(\tilde{a}) & =  D \sum_{\gamma \in \Gamma} \sum_{v =q_0}^{q_k} h^\Gamma(\gamma) h^Q(v) \tilde{a}(q_k;v,\gamma)\\
      & = D \sum_{\gamma \in \Gamma} \sum_{v =q_0}^{q_k} h^\Gamma(\gamma) h^Q(v) \left( \sum_{\gamma' \in \Gamma} a_{\gamma'}(q_k;v) h^\Gamma(\gamma') \right) \\
      & = D \cdot H(q_k) \cdot \left( \sum_{\gamma' \in \Gamma} a_{\gamma'}(q_k;v) h^\Gamma(\gamma') \right)\\
      & = D \cdot \sum_{\gamma \in \Gamma} \sum_{v =q_0}^{q_k} h^\Gamma(\gamma) h^Q(v) a_{\gamma}(q_k;v) \\
      & = s_k(\hat{a})
  \end{align*}
  
  Hence $\bm{s}(\tilde{a}) = \bm{s}(\hat{a}) = \bm{s}(a)$. We conclude that $\tilde{a}$ is feasible and attains the same objective as $a$.
\end{proof}

\newpage

\end{document}